
\documentclass[3p,sort&compress,nopreprintline]{elsarticle}

\usepackage{amsthm,amssymb}
\usepackage{graphicx}
\usepackage{MnSymbol}
\usepackage[mathlines,displaymath]{lineno}

\newtheorem{theorem}{Theorem}[section]
\newtheorem{proposition}[theorem]{Proposition}
\newtheorem{lemma}[theorem]{Lemma}
\newtheorem{corollary}[theorem]{Corollary}
\newtheorem{definition}{Definition}
\newtheorem{example}{Example}
\newtheorem{problem}{Problem}
\newtheorem{remark}{Remark}

\newcommand{\MF}{\mathcal{MF}}
\newcommand{\Suff}{\textit{Suff}}
\newcommand{\Pref}{\textit{Pref}}
\newcommand{\Fact}{\textit{Fact}}
\newcommand{\C}{\ensuremath{\textbf{C}^{\infty}}}
\newcommand{\Ck}{\textbf{C}^{k}}
\newcommand{\CC}{\MF(\C)}
\newcommand{\CCk}{\MF(\Ck)}
\newcommand{\CCkplus}{\MF(\ensuremath{\textbf{C}^{k+1}})}
\newcommand{\Kola}{\mathcal{K}}
\newcommand{\W}{\mathcal{W}}
\newcommand{\A}{\mathcal{A}_{\infty}}
\newcommand{\CA}{\mathcal{CA}_{\infty}}
\newcommand{\VCA}{\mathcal{VCA}_{\infty}}
\newcommand{\VUCA}{\mathcal{VUCA}_{\infty}}
\renewcommand{\epsilon}{\varepsilon}


\begin{document}


\begin{frontmatter}

\title{Automata and Differentiable Words}

\author[nice]{Jean-Marc F\'{e}dou}
 \ead{fedou@i3s.unice.fr}

\author[nice]{Gabriele Fici\corref{cor1}}
\ead{fici@i3s.unice.fr}

\address[nice]{Laboratoire d'Informatique, Signaux et Syst\`emes de Sophia-Antipolis \\ CNRS \& Universit\'e Nice Sophia Antipolis\\ 2000, Route des Lucioles - 06903 Sophia Antipolis cedex, France}

\cortext[cor1]{Corresponding author.}

\journal{Theoretical Computer Science, accepted for publication.}

\begin{abstract}
We exhibit the construction of a deterministic automaton that, given $k>0$, recognizes the (regular) language of $k$-differentiable words. Our approach follows a scheme of Crochemore et al.\ based on minimal forbidden words. We extend this construction to the case of $\C$-words, i.e., words differentiable arbitrary many times. We thus obtain an infinite automaton for representing the set of $\C$-words.  We derive a classification of $\C$-words induced by the structure of the automaton. Then, we introduce a new framework for dealing with $\C$-words, based on a three letter alphabet. This allows us to define a compacted version of the automaton, that we use to prove that every $\C$-word admits a repetition whose length is polynomially bounded.  
\end{abstract}

\begin{keyword}
Kolakoski word, $\C$-words, forbidden words, automata.
\end{keyword}

\end{frontmatter}


\section{Introduction}\label{sec:intro}

In 1965, W.\ Kolakoski introduced an infinite word $\Kola$ over the alphabet $\{1,2\}$ having the curious property that the word coincides with its \emph{run-length encoding}\footnote{The run-length encoding is the operator that counts the lengths of the maximal blocks of consecutive identical symbols in a string.} \cite{Kolakoski:1965}:

$$\Kola=\underbrace{22}_{2}\ \underbrace{11}_{2}\ \underbrace{2}_{1}\ \underbrace{1}_{1}\ \underbrace{22}_{2}\ \underbrace{1}_{1}\ \underbrace{22}_{2}\ \underbrace{11}_{2}\ \underbrace{2}_{1}\ \underbrace{11}_{2}\ \underbrace{22}_{2}\ \underbrace{1}_{1}\ \underbrace{2}_{1}\ \underbrace{11}_{2}\ \underbrace{2}_{1}\ \underbrace{1}_{1} \ \ldots $$

\medskip

Indeed, it is easy to see that the run-length encoding operator has only two fixed points over the alphabet $\{1,2\}$, namely the right-infinite words $\Kola$ and $1\Kola$.

Kimberling \cite{Kimberling:1979} asked whether the Kolakoski word is recurrent (every factor appears infinitely often) and whether the set of its factors is closed under complement (swapping of $1$'s and $2$'s). Dekking \cite{Dekking:1980} observed that the latter condition implies the former, and introduced an operator on finite words, called \emph{the derivative}, that consists in discarding the first and/or the last run if these have length $1$ and then applying the run-length encoding. The derivative is defined for those words over the alphabet $\{1,2\}$ such that their run-length encoding is still a word over the same alphabet\footnote{This is equivalent to say that the word does not contain $111$ nor $222$ as factors.}, called \emph{differentiable} words.

The set of words which are differentiable arbitrarily many times, called the set of $\C$-words, is then closed under complement and reversal, and contains the set of factors of the Kolakoski word. Therefore, one of the most important open problems about the Kolakoski word is to decide whether the double inclusion holds, i.e., to decide whether all the $\C$-words appear as factors in the Kolakoski word\footnote{Another renowned open problem about the Kolakoski word is to decide whether the densities of $1$'s and $2$'s are the same. However, we do not deal with this problem in this contribution.}. 

Actually, the set of $\C$-words contains the set of factors of any right-infinite word $\W$ over the alphabet $\{1,2\}$ having the property that an arbitrary number of applications of the run-length encoding on $\W$ still produces a word over the alphabet $\{1,2\}$. Such words are called \emph{smooth (right-infinite) words} \cite{BrLa03}. 
Nevertheless, the existence of a smooth word containing all the $\C$-words as factors is still an open question.

Although $\C$-words have been investigated in several relevant papers \cite{BrLa03,Brlek:2006,Carpi:1993c,Carpi:1994,Dekking:1997,Weakley:1989,Lepisto:1994}, their properties are still not well known. Compared with other famous classes of finite words, e.g.\ Sturmian words, few combinatorial properties of $\C$-words have been established. Weakley \cite{Weakley:1989} started a classification of $\C$-words and obtained significant results on their complexity function. Carpi \cite{Carpi:1994} proved that the set of $\C$-words contains only a finite number of squares, and does not contain cubes (see also \cite{Lepisto:1994} and \cite{Brlek:2006}). This result generalizes to repetitions with gap, i.e., to the $\C$-words  of the form $uzu$, for a non-empty $z$. Indeed, Carpi proved \cite{Carpi:1994} that for every $k>0$, only finitely many $\C$-words of the form $uzu$ exist with $z$ not longer than $k$. Recently, Carpi and D'Alonzo \cite{CaDa09bis} introduced the \emph{repetitivity index}, which is the function that counts, for every non-negative integer $n$, the minimal distance between two occurrences of a word $u$ of length $n$. They proved that the repetitivity index for $\C$-words is ultimately bounded from below by a linear function.

This leads us to address the following problem:

\begin{problem}\label{probcol}
Let $u,v$ be two $\C$-words. Does a $\C$-word exists of the form $uzv$?
\end{problem}

A positive answer to Problem \ref{probcol} would improve dramatically the knowledge on the properties of $\C$-words. For example, it would imply that for any $n>0$, there exists a $\C$-word containing as factors all the $\C$-words of length $n$.

In this paper, we develop a novel approach to the study of $\C$-words. The culminating point of this approach is an infinite graph (in fact, the graph of an infinite automaton) $\VUCA$ for representing the classes of $\C$-words with respect to an equivalence relation based on the extendability of these words. In particular, this allows us to prove that Problem \ref{probcol} has a positive answer in the case $u=v$. We believe that the new techniques introduced in this paper can give further insights on $\C$-words, and hope that further developments can eventually lead to a (positive) solution of Problem \ref{probcol} in its general form.

We use a construction of Crochemore et al.\ \cite{CrMiRe98} for building a (deterministic finite state) automaton recognizing the language $L(M)$ of words avoiding an anti-factorial given set of words $M$. This procedure is called {\sc L-automaton}. It takes as input a trie (tree-like automaton) recognizing $M$ and builds a deterministic finite state automaton recognizing $L(M)$. If $M$ is chosen to be the set $\CCk$ of minimal forbidden words for the set $\Ck$ of $k$-differentiable words, the procedure builds an automaton $\mathcal{A}_{k}$ recognizing the (regular) language $\Ck$. Recall that minimal forbidden words are words of minimal length that are not in the set of factors of a given language (see for example \cite{Be03}). We show how to compute the set $\CCkplus$ from the set $\CCk$. This leads to an effective construction of the trie of $\CCk$ for any $k>0$, which is then used as the input of the {\sc L-automaton} procedure for the construction of the automaton $\mathcal{A}_{k}$.

In the case $k=\infty$, the procedure above leads to the definition of an infinite automaton $\A$ recognizing the set of $\C$-words, in the sense that any $\C$-word is the label of a unique path in $\A$ starting at the initial state. This automaton induces a natural equivalence on the set of $\C$-words, whose classes are the sets of words corresponding to those paths in $\A$ starting at the initial state and ending in the same state. We show that this equivalence is deeply related to the properties of simple extendability to the left of $\C$-words. A $\C$-word $w$ is left simply extendable (cf.~\cite{Weakley:1989}) if only one between $1w$ and $2w$ is a $\C$-word (recall that for any $\C$-word $w$, \emph{at least one} between $1w$ and $2w$ is a $\C$-word).

In a second step, we use a standard procedure for compacting automata to define the \emph{compacted automaton} $\CA$. This latter automaton induces a new equivalence on the set of $\C$-words, which is related to the properties of simple extendability of $\C$-words both on the left and on the right.
 
We then introduce a new framework for representing $\C$-words on a three letter alphabet. We show that every $\C$-word is univocally determined by a pair of suitable sequences over the alphabet $\{0,1,2\}$, called the \emph{vertical representation}  of the $\C$-word. This allows us to rewrite the automaton $\CA$ using this new representation. We therefore obtain the \emph{vertical compacted automaton} $\VCA$. This latter automaton can itself be further compacted, leading to the definition of the \emph{vertical ultra-compacted automaton} $\VUCA$. All these automata reveal interesting properties and are deeply related to the combinatorial structure of $\C$-words. In particular, using the properties of the automaton $\VUCA$, we are able to prove, in Theorem \ref{theor:uzu}, that for every $\C$-word $u$, there exists a word $z$ such that $uzu$ is a $\C$-word and $|uzu|\leq C|u|^{2.72}$, for a suitable constant $C$. Indeed, this proves that every $\C$-word admits a repetition with gap whose length is bounded by a sub-cubic function. This results is dual to the previously mentioned result of Carpi on the lower bound of a repetition with gap. Theorem \ref{theor:uzu} also solves Problem \ref{probcol} in the particular case $u=v$. 

The paper is organized as follows. In Section \ref{sec:background} we fix the notation and recall the basic theory of minimal forbidden words; we then recall the procedure {\sc L-automaton}.
In Section \ref{sec:smooth} we deal with differentiable words and $\C$-words. In Section \ref{sec:laut} we describe the construction of the automata $\mathcal{A}_{k}$, $\A$ and $\CA$ and  study their properties. In Section \ref{sec:vertical} we introduce the vertical representation of a $\C$-word; we then describe the automata $\VCA$ and $\VUCA$. In Section \ref{sec:uvu} we prove that every $\C$-word admits a repetition having length bounded by a sub-cubic function. Finally, in Section \ref{sec:conclusion}, we discuss final considerations and future work.

\section{Notation and background}\label{sec:background}

We assume that the reader is familiar with basic concepts and definitions of the classic automata and formal language theory.

Let $\Sigma=\{1,2\}$. A \emph{word} over $\Sigma$ is a finite sequence of symbols from $\Sigma$. The \emph{length} of a word $w$ is denoted by $|w|$. The \emph{empty word} has length zero and is denoted by $\epsilon$. The number of occurrences of the letter $x$ in the word $w$ is denoted $|w|_{x}$. We note $w[i]$ the $i+1$-th symbol of a word $w$; so, we write a word $w$ of length $n$ as $w=w[0]w[1]\cdots w[n-1]$. The set of all words over $\Sigma$ is denoted by $\Sigma^*$. The set of all words over $\Sigma$ having length $n$ is denoted by $\Sigma^n$. The set of all words over $\Sigma$ having length not greater than $n$ (resp.\ not smaller than $n$) is denoted by $\Sigma^{\leq n}$ (resp.\ by $\Sigma^{\geq n}$).


Let $w\in \Sigma^{*}$. If $w=uv$ for some $u,v\in\Sigma^{*}$, we say that $u$ is a \emph{prefix} of $w$ and $v$ is a \emph{suffix} of $w$. Moreover, $u$ is a \emph{proper prefix} (resp.\ $v$ is a \emph{proper suffix}) of $w$ if $v\neq \epsilon$ (resp.\ $u\neq \epsilon$). A \emph{factor} of $w$ is a prefix of a suffix of $w$ (or, equivalently, a suffix of a prefix). We denote by $\Pref(w)$, $\Suff(w)$, $\Fact(w)$ respectively the set of prefixes, suffixes, factors of the word $w$.

The \emph{reversal} of $w$ is the word $\widetilde{w}$ obtained by writing the letters of $w$ in the reverse order. For example, the reversal of $w=11212$ is $\widetilde{w}=21211$. The \emph{complement} of $w$ is the word $\overline{w}$ obtained by swapping the letters of $w$, i.e., by changing the $1$'s in $2$'s and the $2$'s in $1$'s. For example, the complement of $w=11212$ is $\overline{w}=22121$.

A \emph{language} over $\Sigma$ is a subset of $\Sigma^*$. For a finite language $L$ we denote by $|L|$ the number of its elements. A (finite or infinite) language $F\subseteq \Sigma^{*}$ is \emph{factorial} if $F=\Fact(F)$, i.e., if for any $u,v\in \Sigma^{*}$ one has $uv\in L \Rightarrow u\in L$ and $v\in L$. A language $M\subseteq \Sigma^{*}$ is \emph{anti-factorial} if no word in $M$ is a factor of another word in $M$, i.e., if for any $u,v\in M$, $u\neq v \Rightarrow u$ is not a factor of $v$. 

The complement $F^{c}=\Sigma^{*}\setminus F$ of a factorial language $F$ is a (two-sided) ideal of $\Sigma^{*}$. Denoting by $\MF(F)$ the basis of this ideal, we have $F^{c}=\Sigma^{*}\MF(F)\Sigma^{*}$. The set $\MF(F)$ is an anti-factorial language and is called the set of \emph{minimal forbidden words} for $F$. 

The equations
\[F=\Sigma^{*}\setminus \Sigma^{*}\MF(F)\Sigma^{*}\]
and
\[\MF(F)=\Sigma F \cap F\Sigma \cap (\Sigma^{*}\setminus F)\]
hold for any factorial language $F$, and show that $\MF(F)$ is uniquely characterized by $F$ and \textit{vice versa}.

Equivalently, a word $v$ belongs to $\MF(F)$ iff the two conditions hold:

\begin{itemize}
 \item $v$ is forbidden, i.e., $v\notin F$,
 \item $v$ is minimal, i.e., both the prefix and the suffix of $v$ of length $|v|-1$ belong to $F$.
\end{itemize}
For more details about minimal forbidden words the reader can see \cite{CrMiRe98,Be03}.

A \emph{deterministic automaton} is a tuple $\mathcal{A}=(Q,\Sigma,i,T,\delta)$ where:

\begin{itemize}
\item $Q$ is the set of states,
\item $\Sigma$ is the alphabet,
\item $i\in Q$ is the initial state,
\item $T\subseteq Q$ is the set of final (or accepting) states,
\item $\delta : (Q \times \Sigma) \mapsto Q$ is the transition function.
\end{itemize}

The \emph{extended transition function} $\delta^{*} : (Q \times \Sigma^{*}) \mapsto Q$ is the classical extension of $\delta$ to words over $\Sigma$. It is defined in a recursive way by $\delta^{*}(q,wa)=\delta(\delta^{*}(q,w),a)$, $w\in \Sigma^{*}$, $a\in \Sigma$. In what follows, we still use $\delta$ for denoting the extended transition function.

A language $L\subseteq \Sigma^{*}$ is \emph{accepted} (or \emph{recognized}) by the automaton $\mathcal{A}$ if $L$ is the set of labels of paths in $\mathcal{A}$ starting at the initial state and ending in a final state. The language accepted by the automaton $\mathcal{A}$ is noted $L(\mathcal{A})$.
 

We say that a word $v\in \Sigma^*$ \emph{avoids} the language $M\subseteq \Sigma^*$ if no word of $M$ is a factor of $v$. A language $L$ avoids $M$ if every word in $L$ avoids $M$. We denote by $L(M)$ the largest (factorial) language avoiding a given finite anti-factorial language $M$, i.e., the set of all the words of $\Sigma^{*}$ that do not contain any word of $M$ as factor.

\begin{lemma}\cite{CrMiRe98}\label{lem:LM} 
The following equalities hold:
\begin{itemize}
\item If $L$ is a factorial language, then $L(\MF(L))=L$.
\item If $M$ is an anti-factorial language, then $\MF(L(M))=M$.
\end{itemize}
\end{lemma}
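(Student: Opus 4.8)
The plan is to establish the two equalities separately, relying on the definition $L(M)=\Sigma^{*}\setminus\Sigma^{*}M\Sigma^{*}$ of the language avoiding $M$ together with the identities for $\MF$ recalled just before the statement. A preliminary observation used throughout is that $L(M)$ is always factorial: every factor of a word avoiding $M$ again avoids $M$. For the first equality, I would take $M=\MF(L)$ with $L$ factorial and simply invoke the identity $L=\Sigma^{*}\setminus\Sigma^{*}\MF(L)\Sigma^{*}$; its right-hand side is by definition exactly $L(\MF(L))$, so $L(\MF(L))=L$ is immediate and requires no further work.

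For the second equality I would fix an anti-factorial $M$, set $F=L(M)$ (factorial by the remark above), and prove the two inclusions $M\subseteq\MF(F)$ and $\MF(F)\subseteq M$ using the characterization that $v\in\MF(F)$ iff $v\notin F$ while both the prefix and the suffix of $v$ of length $|v|-1$ belong to $F$. For $M\subseteq\MF(F)$: given $m\in M$, the word $m$ contains itself as a factor, hence $m\notin F$; and its prefix and suffix of length $|m|-1$ are proper factors of $m$, so if either contained some $m'\in M$, then $m'$ would be a factor of $m$ with $m'\neq m$, contradicting the anti-factoriality of $M$. Thus both borders lie in $F$ and $m\in\MF(F)$.

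For the reverse inclusion $\MF(F)\subseteq M$, take $v\in\MF(F)$; from $v\notin F=L(M)$ we obtain an occurrence of some $m\in M$ as a factor of $v$. The heart of the argument is to locate this occurrence. If it avoided the last letter of $v$, it would lie inside the prefix of length $|v|-1$; if it avoided the first letter, it would lie inside the suffix of length $|v|-1$; but both of these borders belong to $F$ and hence avoid $M$. Consequently the occurrence must use both the first and the last letter of $v$, which forces $m=v$ and so $v\in M$. This positional step — where the minimality of $v$ (its two length-$(|v|-1)$ borders lying in $F$) is combined with the anti-factoriality of $M$ to pin the forbidden factor down to $v$ itself — is the only genuinely delicate point of the proof, and the one I would present most carefully; the remaining verifications, namely the factoriality of $L(M)$ and the bookkeeping on factor positions, are routine.
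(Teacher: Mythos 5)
Your proof is correct. Note that the paper itself gives no proof of this lemma---it is quoted from Crochemore, Mignosi and Restivo \cite{CrMiRe98}---and your argument is exactly the standard one: the first equality is immediate from the identity $F=\Sigma^{*}\setminus\Sigma^{*}\MF(F)\Sigma^{*}$ recalled in Section~\ref{sec:background}, and the second follows from the two inclusions you establish via the prefix/suffix characterization of $\MF$, with the positional argument pinning the forbidden factor down to $v$ itself. One small mislabel in your closing remark: anti-factoriality of $M$ is used only in the inclusion $M\subseteq\MF(L(M))$, where it guarantees that the two length-$(|m|-1)$ borders of $m\in M$ avoid $M$; the positional step in the reverse inclusion uses only the minimality of $v$ (its borders lying in $L(M)$) together with the definition of $L(M)$, and needs no anti-factoriality---as your own execution of that step in fact shows, since you never invoke it there.
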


We recall here a construction introduced by Crochemore et al.\ \cite{CrMiRe98} for obtaining the language $L(M)$ that avoids a given finite anti-factorial language $M$. For any anti-factorial language $M$, the algorithm {\sc L-automaton} below builds a deterministic automaton $\mathcal{A}(M)$ recognizing the language $L(M)$.

\begin{figure}[h]
\begin{center}\small
 \fbox{
  \begin{minipage}{10cm}
   \begin{tabbing}
 xxx \= xxx \= xxx \= xxx \kill
{\sc L-automaton} (trie $\mathcal{T}=(Q,\Sigma,i,T,\delta')$) \phantom{xxx xxx xxx}\\
\,\,\,1. \> \textbf{for} each $a\in \Sigma$\\
\,\,\,2. \> \> \textbf{if} $\delta'(i,a)$ defined\\
\,\,\,3. \> \> \> set $\delta(i,a)=\delta'(i,a)$;\\
\,\,\,4. \> \> \> set $s(\delta(i,a))=i$;\\
\,\,\,5. \> \> \textbf{else} \\
\,\,\,6. \> \> \> set $\delta(i,a)=i$;\\
\,\,\,7. \> \textbf{for} each state $p\in Q\setminus \{i\}$ in width-first search \textbf{and} each $a\in \Sigma$\\
\,\,\,8. \> \> \textbf{if} $\delta'(p,a)$ defined\\
\,\,\,9. \> \> \> set $\delta(p,a)=\delta'(p,a)$;\\
10.\,\,\, \> \> \> set $s(\delta(p,a))=\delta(s(p),a)$;\\
11.\,\,\, \> \> \textbf{else if} $p\notin T$\\
12.\,\,\, \> \> \>  set $\delta(p,a)=\delta(s(p),a)$;\\
13.\,\,\, \> \> \textbf{else} \\
14.\,\,\, \> \> \>  set $\delta(p,a)=p$;\\
15.\,\,\, \> \textbf{return} $(Q,\Sigma,i,Q\setminus T,\delta)$;
   \end{tabbing}
  \end{minipage}
}
\end{center}
\end{figure}

The input of {\sc L-automaton} is the trie\footnote{Recall that a \textit{trie} is a tree-like automaton for storing a set of words in which there is one node for every common prefix and in which the words are stored in the leaves.} $\mathcal{T}$ recognizing the anti-factorial language $M$. The output is a deterministic automaton $\mathcal{A}(M)=(Q,\Sigma,i,T,\delta)$ recognizing the language $L(M)$, where:

\begin{itemize}

\item{the set $Q$ of states is the same set of states of the input trie $\mathcal{T}$, i.e., it corresponds to the prefixes of the words in $M$,}

\item{$\Sigma$ is the alphabet,}

\item{the initial state is the empty word $\epsilon$,}

\item{the set of terminal states is $Q\setminus M$, i.e., the proper prefixes of words in $M$.}

\end{itemize}

States of $\mathcal{A}(M)$ that correspond to the words of $M$ are called \emph{sink states}.  The set of transitions defined by $\delta$, noted $E$, is partitioned into three (pairwise disjoint) sets $E_1$, $E_2$ and $E_3$, defined by:

\begin{itemize}
\item{$E_1=\{(u,x,ux)\}$ $|$ $ux\in Q$, $x\in \Sigma$ (called \emph{solid edges}),}
\item{$E_2=\{(u,x,v)\}$ $|$  $u\in Q\setminus M$, $x\in \Sigma$, $ux\notin Q$, $v$ longest suffix of $ux$ in $Q$ (called \emph{weak edges}),}
\item{$E_3=\{(u,x,u)\}$ $|$  $u\in M$, $x\in \Sigma$ (loops on sink states).}
\end{itemize}

The algorithm makes use of a \emph{failure function}, denoted by $s$, defined on the states of $Q$ different from $\epsilon$. If $u\in Q$, then $s(u)$ is the state in $Q$ corresponding to the longest proper suffix of $u$ which is in $Q$, i.e., which is a proper prefix of some word in $M$. The failure function defines the weak edges (transitions in $E_{2}$). It follows from the construction that edges incoming in the same state are labeled by the same letter.

\begin{theorem}\cite{CrMiRe98}\label{theor:Laut}
For any anti-factorial language $M$, $\mathcal{A}(M)$ accepts the language $L(M)$.
\end{theorem}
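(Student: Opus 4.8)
The plan is to prove the two inclusions $L(M)\subseteq L(\mathcal{A}(M))$ and $L(\mathcal{A}(M))\subseteq L(M)$ by first isolating a single invariant that determines the state reached after reading an arbitrary word, and then reading off both inclusions from it. For a word $w$, let $\sigma(w)$ denote the longest suffix of $w$ that belongs to $Q$, i.e.\ the longest suffix of $w$ that is a prefix of some word of $M$; this is well defined since $\epsilon\in Q$. I will use repeatedly that $Q$ is prefix-closed, being the set of prefixes of the words of $M$. The core claim is the following \emph{invariant}: for every $w\in\Sigma^{*}$ that avoids $M$, one has $\delta(\epsilon,w)=\sigma(w)$, and moreover $\sigma(w)\in Q\setminus M$.

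First I would prove the invariant by induction on $|w|$. The base case $w=\epsilon$ is immediate, since $\delta(\epsilon,\epsilon)=\epsilon=\sigma(\epsilon)$. For the inductive step write $w=w'a$ with $a\in\Sigma$; as $L(M)$ is factorial, $w'$ also avoids $M$, so the induction hypothesis gives $\delta(\epsilon,w')=u:=\sigma(w')\in Q\setminus M$ and hence $\delta(\epsilon,w)=\delta(u,a)$. The combinatorial lemma to isolate here is that $\sigma(w'a)$ equals the longest suffix of $ua$ lying in $Q$: any nonempty suffix $ya$ of $w'a$ that lies in $Q$ forces $y\in Q$ by prefix-closedness, so $y$ is a suffix of $w'$ in $Q$, hence a suffix of $u$, making $ya$ a suffix of $ua$; conversely every suffix of $ua$ is a suffix of $w'a$. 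With this lemma I split on whether $ua\in Q$. If $ua\in Q$, the solid edge of $E_{1}$ gives $\delta(u,a)=ua$, which is then the longest suffix of $ua$ in $Q$, so $\delta(u,a)=\sigma(w'a)$. If $ua\notin Q$, then since $u\in Q\setminus M$ the weak edge of $E_{2}$ applies and, by definition of $E_{2}$ (this being exactly what the failure-link recursion of lines 11--12 computes), sends $u$ to the longest suffix of $ua$ in $Q$, which again is $\sigma(w'a)$. In both cases $\sigma(w'a)\notin M$ because $w'a$ avoids $M$, closing the induction.

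Then I would derive the two inclusions. If $w$ avoids $M$, the invariant gives $\delta(\epsilon,w)=\sigma(w)\in Q\setminus M$, an accepting state, so $w\in L(\mathcal{A}(M))$. For the converse, suppose $w$ does not avoid $M$ and let $p=p'a$ be the shortest prefix of $w$ containing a factor in $M$; by minimality $p'$ avoids $M$, and the forbidden occurrence must be a suffix of $p$, say $m\in M$ with $m$ a suffix of $p$. Applying the invariant to $p'$ gives $\delta(\epsilon,p')=u\in Q\setminus M$, and the prefix-closedness argument above shows $m$ is a suffix of $ua$. Here the anti-factoriality of $M$ enters decisively: any suffix $v$ of $ua$ lying in $Q$ and strictly longer than $m$ would have $m$ as a proper suffix while being itself a factor of a word of $M$, so $m$ would be a proper factor of a word of $M$, contradicting that $M$ is anti-factorial. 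Hence the longest suffix of $ua$ in $Q$ is exactly $m$, so whether reached by the solid edge or the weak edge, $\delta(\epsilon,p)=m\in M$ is a sink state. The loops $E_{3}$ then keep the computation inside $M$ while reading the remaining suffix of $w$, so $\delta(\epsilon,w)\in M$ is non-accepting and $w\notin L(\mathcal{A}(M))$. Combining the two inclusions yields $L(\mathcal{A}(M))=L(M)$.

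\textbf{Main obstacle.} The delicate step is the rejection direction, and specifically the use of anti-factoriality to guarantee that at the very moment a forbidden word is first completed, the automaton's ``longest suffix in $Q$'' state is precisely that forbidden word $m$, rather than some strictly longer proper-prefix-of-$M$ state that would fail to be a sink; without anti-factoriality this identification breaks. A secondary technical care is justifying that the weak-edge target computed through the chain of failure links $s$ really realizes the longest suffix of $ua$ in $Q$: this is the defining property of $E_{2}$, and if one does not take the edge characterization as given it is verified by a short sub-induction on the failure function $s$.
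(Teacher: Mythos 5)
Your proof is correct. Note that the paper itself offers no proof of this statement: it is quoted from Crochemore, Mignosi and Restivo \cite{CrMiRe98}, so the right comparison is with that source, whose argument yours essentially reconstructs. The invariant you isolate --- after reading any word $w$ that avoids $M$, the automaton sits in the state given by the longest suffix of $w$ that is a prefix of a word of $M$ --- is precisely the content of Lemma~\ref{lem:classes}, which the paper also imports from \cite{CrMiRe98} without proof; so your write-up in effect supplies proofs of both quoted results at once. Your inductive step is sound: the reduction of ``longest suffix of $w'a$ in $Q$'' to ``longest suffix of $ua$ in $Q$'' via prefix-closedness of $Q$ is the standard Aho--Corasick-style failure-function argument, and the two points you flag as delicate are handled correctly. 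In particular, anti-factoriality is used exactly where it is needed: if a suffix $v$ of $ua$ in $Q$ were strictly longer than the just-completed forbidden word $m$, then $m$ would be a proper suffix of $v$, hence a proper factor of the word of $M$ of which $v$ is a prefix, contradicting anti-factoriality --- this is what guarantees the computation lands on the sink $m$ (also covering the degenerate solid-edge case $ua=m\in Q$) rather than on a longer non-sink state. Finally, since the paper defines the weak edges $E_{2}$ declaratively as going to the longest suffix of $ux$ in $Q$, taking that characterization as given is legitimate, and your remark that the failure-link recursion of lines 11--12 realizes it (verifiable by a sub-induction on $s$ in width-first order) correctly identifies the only remaining bookkeeping.
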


\begin{corollary}\label{cor:Laut}
Let $L$ be a factorial language. If $M=\MF(L)$, then $\mathcal{A}(\MF(L))$ accepts $L$.
\end{corollary}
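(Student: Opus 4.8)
The plan is to obtain this as an immediate consequence of Theorem \ref{theor:Laut} together with the first clause of Lemma \ref{lem:LM}, with essentially no new combinatorial work required. The entire content of the statement lies in chaining these two results through the identity $L(\MF(L))=L$, so my proof will be a short bookkeeping argument rather than a genuine construction.

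First I would check that Theorem \ref{theor:Laut} is actually applicable to the automaton $\mathcal{A}(\MF(L))$. That theorem takes as hypothesis that its argument be an anti-factorial language, so I need $M=\MF(L)$ to be anti-factorial. This is exactly the property recorded in Section \ref{sec:background}: for any factorial language $F$, the set $\MF(F)$ of minimal forbidden words is anti-factorial. Since $L$ is assumed factorial, the set $M=\MF(L)$ qualifies, and hence Theorem \ref{theor:Laut} yields that $\mathcal{A}(M)=\mathcal{A}(\MF(L))$ accepts $L(M)=L(\MF(L))$.

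It then remains to identify the accepted language $L(\MF(L))$ with $L$ itself. This is precisely the first item of Lemma \ref{lem:LM}, which asserts that whenever $L$ is factorial one has $L(\MF(L))=L$. Substituting this equality into the conclusion of the previous step gives that $\mathcal{A}(\MF(L))$ accepts $L$, which is exactly the assertion of the corollary.

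I do not anticipate any genuine obstacle here: the only points that need care are which role the single hypothesis plays. The factoriality of $L$ is what simultaneously makes $\MF(L)$ anti-factorial (so that Theorem \ref{theor:Laut} applies) and makes $L(\MF(L))=L$ hold (via Lemma \ref{lem:LM}). Since the equality $L(\MF(L))=L$ can fail for a non-factorial $L$, this hypothesis is essential rather than cosmetic, and it is worth flagging explicitly so that the reader sees why the corollary specializes correctly from the more general Theorem \ref{theor:Laut}.
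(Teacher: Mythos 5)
Your proof is correct and matches the paper's (implicit) argument exactly: the corollary is meant to follow by applying Theorem \ref{theor:Laut} to the anti-factorial language $M=\MF(L)$ and then invoking the first item of Lemma \ref{lem:LM} to identify $L(\MF(L))$ with $L$. Your explicit flagging of where the factoriality hypothesis enters is a fair and accurate gloss on what the paper leaves unstated.
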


\begin{remark}
 In what follows, we suppose that in $\mathcal{A}(M)$ we have pruned the sink states and all the transitions going to them. As a consequence, we only have two kinds of transitions: solid edges (those of the trie $\mathcal{T}$) and weak edges (those created by procedure {\sc L-automaton}).
\end{remark}

The automaton $\mathcal{A}(M)$ induces on $L$ a natural equivalence, defined by:
\[u\equiv v \hspace{4mm}\Longleftrightarrow \hspace{4mm} \delta(\epsilon,u)=\delta(\epsilon,v),\]
i.e., $u$ and $v$ are equivalent iff they are the labels of two paths in $\mathcal{A}(M)$ starting at the initial state and ending in the same state. The equivalence class of a word $w\in L$ is denoted by $[w]$. Hence
\[[w]=\{v\in L\mbox{ : }\delta(\epsilon,v)=\delta(\epsilon,w)\}.\]

\begin{lemma}\cite{CrMiRe98}\label{lem:classes}
Let $u$ be a state of $\mathcal{A}(M)$. Let $v\in \Sigma^{*}$ such that $\delta(\epsilon, v)=u$. Then $u$ is the longest suffix of $v$ that is also a state of $\mathcal{A}(M)$, i.e., that is also a proper prefix of a word in $M$.
\end{lemma}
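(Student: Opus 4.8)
The plan is to prove, by induction on $|v|$, the invariant that reading $v$ from the initial state lands precisely on the longest suffix of $v$ that is a state. For a word $v$, write $\lambda(v)$ for the longest suffix of $v$ that is a state of $\mathcal{A}(M)$ (equivalently, a proper prefix of a word in $M$); this is well defined since $\epsilon$ is a state and a suffix of every word. The statement to prove is then exactly $\delta(\epsilon,v)=\lambda(v)$, and the $u$ of the lemma is $\lambda(v)$.

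First I would isolate the single structural fact that makes everything work: the set of states is \emph{prefix-closed}, since a prefix of a proper prefix of a word in $M$ is again a proper prefix of that word. From this I would extract the combinatorial heart of the argument: for any word $w$, the suffixes of $w$ that are states are exactly the suffixes of $\lambda(w)$ that are states. Indeed, any state-suffix $s$ of $w$ has $|s|\le|\lambda(w)|$ by maximality, hence is a suffix of $\lambda(w)$; conversely every suffix of $\lambda(w)$ is a suffix of $w$. The analogous statement involving the failure function reads: the \emph{proper} state-suffixes of a state $p$ are exactly the suffixes of $s(p)$ that are states, since $s(p)$ is by definition the longest proper suffix of $p$ that is a state.

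Next I would prove the local claim $\delta(p,a)=\lambda(pa)$ for every state $p$ and letter $a$, by induction on $|p|$. If $pa\in Q$ the transition is the solid edge $\delta(p,a)=pa$, and $pa=\lambda(pa)$ trivially. If $pa\notin Q$ and $p=\epsilon$, then $a\notin Q$ and the algorithm (lines $5$--$6$) sets $\delta(\epsilon,a)=\epsilon=\lambda(a)$. Otherwise $pa\notin Q$ and $p\neq\epsilon$: the transition is the weak edge $\delta(p,a)=\delta(s(p),a)$, which by the induction hypothesis equals $\lambda(s(p)a)$; and $\lambda(s(p)a)=\lambda(pa)$ because, $pa$ itself not being a state, $\lambda(pa)$ is sought among the strings $sa$ with $s$ a proper state-suffix of $p$, a candidate set that coincides with the state-suffixes of $s(p)$ by the fact isolated above, while the selection predicate ``$sa\in Q$'' is identical in both cases. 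This yields $\delta(p,a)=\lambda(pa)$.

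Finally comes the main induction on $|v|$. The base case $v=\epsilon$ is immediate. For $v=wa$ I would write $\delta(\epsilon,wa)=\delta(\delta(\epsilon,w),a)=\delta(\lambda(w),a)$ using the induction hypothesis, apply the local claim to get $\delta(\lambda(w),a)=\lambda(\lambda(w)a)$, and then invoke the suffix-set equality to conclude $\lambda(\lambda(w)a)=\lambda(wa)$: since $w$ and $\lambda(w)$ have the same state-suffixes and the longest admissible candidate is selected by the same predicate, appending $a$ and taking the longest state-suffix yields the same word. The main obstacle, and the only non-formal step, is precisely these two reductions $\lambda(\lambda(w)a)=\lambda(wa)$ and $\lambda(s(p)a)=\lambda(pa)$; everything rests on prefix-closedness of $Q$ together with the maximality built into $\lambda$ and into the failure function $s$, after which both inductions are routine bookkeeping.
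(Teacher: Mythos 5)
Your proof is correct: the invariant $\delta(\epsilon,v)=\lambda(v)$, derived from prefix-closedness of the state set together with the two reductions $\lambda(s(p)a)=\lambda(pa)$ and $\lambda(\lambda(w)a)=\lambda(wa)$, is exactly the standard argument for this lemma, which the paper does not prove itself but defers to \cite{CrMiRe98}, where the proof follows the same scheme. The one implicit assumption in your write-up --- that you work in the pruned automaton, so every transition you invoke exists and every state is a proper prefix of a word in $M$ --- is precisely what the paper's remark after Corollary~\ref{cor:Laut} stipulates, so nothing is missing.
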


\section{Differentiable words}\label{sec:smooth}

Let $w$ be a word over the alphabet $\Sigma$. Then $w$ can be uniquely written as a concatenation of maximal blocks of identical symbols (called \emph{runs}), i.e., $w=x_1^{i_1}x_2^{i_2}\cdots x_n^{i_n}$, with $x_{j}\in \Sigma$ and $i_{j}>0$. The \emph{run-length encoding} of $w$, noted $\Delta(w)$, is the sequence of exponents $i_{j}$, i.e., one has $\Delta(w)=i_1i_2\cdots i_n$. The run-length encoding extends naturally to right-infinite words.  

\begin{definition}\cite{BrLa03}
A right-infinite word $\W$ over $\Sigma$ is called a \emph{smooth word} if for every integer $k> 0$ one has that $\Delta^k(\W)$ is still a word over $\Sigma$.
\end{definition}

The run-length encoding operator $\Delta$ on right-infinite words over the alphabet $\Sigma=\{1,2\}$ has two fixed points, namely the Kolakoski word
$$\Kola=221121221221121122121121221121121221221121221211211221221121\cdots$$
and the word $1\Kola$.

We now give the definition and basic properties of $\C$-words, that are the factors of smooth words.


\begin{definition}\cite{Dekking:1980}
A word $w\in \Sigma^{*}$ is \emph{differentiable} if $\Delta(w)$ is still a word over $\Sigma$. 
\end{definition}

\begin{remark}\label{rem:xxx}
  Since $\Sigma=\{1,2\}$ we have that $w$ is differentiable if neither $111$ nor $222$ appear in $w$. 
\end{remark}

\begin{definition}\cite{Dekking:1980}
The \emph{derivative} is the function $D$ defined on the differentiable words by:
$$D(w) = \left\{ \begin{array}{lllll}
\epsilon & \mbox{if $\Delta(w)=1$ or $w=\epsilon$,}\\
\Delta(w) & \mbox{if $\Delta(w)=2x2$ or $\Delta(w)=2$,}\\
x2 & \mbox{if $\Delta(w)=1x2$,}\\
2x & \mbox{if $\Delta(w)=2x1$,}\\
x & \mbox{if $\Delta(w)=1x1$.}
\end{array} \right.$$

In other words, the derivative of a differentiable word $w$ is the run-length encoding of the word obtained by discarding the first and/or the last run of $w$ if these have length $1$.
\end{definition}

\begin{remark}\label{rem:D}
 Let $u,v$ be two differentiable words. If $u$ is a factor (resp.\ a prefix, resp.\ a suffix) of $v$, then $D(u)$ is a factor (resp.\ a prefix, resp.\ a suffix) of $D(v)$. Conversely, for any factor (resp.\ prefix, resp.\ suffix) $z$ of $D(u)$, there exists a factor (resp.\ prefix, resp.\ suffix) $z'$ of $u$ such that $D(z')=z$.
\end{remark}

Let $k>0$. A word $w\in \Sigma^{*}$ is \emph{$k$-differentiable} if $D^k(w)$ is defined. Here and in the rest of the paper, we use the convention that $D^0(w)=w$. By Remark \ref{rem:xxx}, a word $w$ is $k$-differentiable if and only if for every $0\leq j<k$ the word $D^{j}(w)$ does not contain $111$ nor $222$ as factors.  Note that if a word is $k$-differentiable, then it is also $j$-differentiable for every $0\leq j \leq k$.

We denote by $\Ck$ the set of $k$-differentiable words, and by $\C $ the set of words which are differentiable arbitrarily many times. A word in $\C $ is also called a \C-word. Clearly, $\C=\bigcap_{k> 0}\Ck$. So, for any smooth word $\W$ over $\Sigma=\{1,2\}$, we have that $\Fact(\W)\subseteq \C$. Nevertheless, it is an open question whether there exists a smooth word $\W$ such that $\Fact(\W)= \C$.

The following proposition is a direct consequence of the definitions above.

\begin{proposition}\label{prop:propertiesC}
The set $\C$ and the sets $\Ck$, for any $k> 0$, are factorial languages closed under reversal and complement. 
\end{proposition}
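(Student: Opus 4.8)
The plan is to establish the three claimed properties (factoriality, closure under reversal, closure under complement) separately for each set, and to reduce the statement about $\C$ to the statements about the $\Ck$ by means of the identity $\C=\bigcap_{k>0}\Ck$, since an arbitrary intersection of factorial languages is factorial, and an intersection of languages each closed under an involution is again closed under that involution. Thus the core of the work is to prove the proposition for a fixed $k>0$, and the three properties for $\C$ then follow formally.

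First I would treat \emph{factoriality}. By Remark \ref{rem:xxx} and the characterization of $k$-differentiability stated just before the proposition, a word $w$ lies in $\Ck$ if and only if for every $0\leq j<k$ the word $D^{j}(w)$ avoids $111$ and $222$. The key tool is Remark \ref{rem:D}: if $u$ is a factor of $v$ then $D(u)$ is a factor of $D(v)$. Iterating, if $u\in\Fact(v)$ then $D^{j}(u)\in\Fact(D^{j}(v))$ for every $j$ for which these are defined, and in particular a factor of a word avoiding $111,222$ also avoids them. Hence if $v$ is $k$-differentiable, every factor $u$ of $v$ is $k$-differentiable, which is exactly $\Fact(\Ck)\subseteq\Ck$; the reverse inclusion is trivial, giving $\Ck=\Fact(\Ck)$.

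Next I would treat the two \emph{closure} properties, and here the clean route is to show that $D$ commutes (up to the obvious conventions) with reversal and with complement. For reversal, observe that $\Delta(\widetilde{w})=\widetilde{\Delta(w)}$ because reversing a word reverses the sequence of its runs, and the derivative merely trims leading/trailing runs of length $1$ before applying $\Delta$; so one checks case by case against the definition of $D$ that $D(\widetilde{w})=\widetilde{D(w)}$, and in particular $w$ is differentiable iff $\widetilde{w}$ is. For complement, $\Delta(\overline{w})=\Delta(w)$ since complementing the letters leaves the run lengths unchanged, whence $\overline{w}$ is differentiable iff $w$ is and $D(\overline{w})=D(w)$. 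Combining these commutation relations with the characterization via the $D^{j}$, one gets that $w\in\Ck$ iff $\widetilde{w}\in\Ck$ and iff $\overline{w}\in\Ck$, which is the desired closure; the same then passes to $\C$ through the intersection.

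The step I expect to be the main obstacle is the careful case analysis verifying $D(\widetilde{w})=\widetilde{D(w)}$ from the five-case definition of $D$, since the trimming of length-$1$ boundary runs interacts with reversal in a way that swaps the roles of prefix and suffix (the $1x2$ and $2x1$ cases exchange under reversal, as do the boundary behaviours in the $2x2$ and $1x1$ cases). This is routine but must be checked against each clause to ensure the conventions $D^{0}(w)=w$ and the treatment of the $\epsilon$ and single-run cases are consistent; once this lemma is in hand, the rest of the proposition is a formal consequence.
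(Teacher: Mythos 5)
Your proof is correct and is precisely the argument the paper leaves implicit when it states that the proposition ``is a direct consequence of the definitions above'': factoriality via Remark~\ref{rem:xxx} and the iterated use of Remark~\ref{rem:D}, and the two closure properties via the commutation relations $D(\widetilde{w})=\widetilde{D(w)}$ and $D(\overline{w})=D(w)$ (which follow from $\Delta(\widetilde{w})=\widetilde{\Delta(w)}$ and $\Delta(\overline{w})=\Delta(w)$), with the statements for $\C$ obtained formally from $\C=\bigcap_{k>0}\Ck$. Your case check of the five-clause definition of $D$ under reversal, including the exchange of the $1x2$ and $2x1$ clauses, is the only point requiring care, and you handle it correctly.
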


\begin{definition}\cite{Dekking:1980} 
A \emph{primitive} of a word $w$ is any word $w'$ such that $D(w')=w$.  
\end{definition}

It is easy to see that any $\C$-word has at least two and at most eight distinct primitives. For example, the word $w=2$ has eight primitives, namely $11,22,211,112,2112,122,221$ and $1221$, whereas the word $w=1$ has only two primitives, namely $121$ and $212$. The empty word $\epsilon$ has four primitives: $1$, $2$, $12$ and $21$. However, any $\C$-words admits exactly two primitives of minimal (maximal) length, one being the complement of the other.

\begin{definition}\cite{Weakley:1989} 
The \emph{height} of a \C-word is the least integer $k$ such that $D^{k}(w)=\epsilon$. 
\end{definition}

We introduce the following definitions, that will play a central role in the rest of the paper.

\begin{definition} 
Let $w$ be a  $\C$-word of height $k$.  The \emph{root} of $w$ is $D^{k-1}(w)$. Therefore, the root of $w$ belongs to $\{1,2,12,21\}$. Consequently, $w$ is said to be \emph{single-rooted} if its root has length $1$ or \emph{double-rooted} if its root has length $2$.
\end{definition}

\begin{example}
 Let $w=2211$. Since $D(w)=22$, $D^{2}(w)=D(D(w))=2$ and $D^{3}(w)=\epsilon$, we have that $w$ has height $3$ and root $2$, therefore it is a single-rooted word. Let $w'=22112112$. Since $D(w')=2212$, $D^{2}(w')=21$, $D^{3}(w')=\epsilon$, we have that $w'$ has height $3$ and root $21$, therefore it is a double-rooted word.
\end{example}

\begin{definition} 
Let $w$ be a  $\C$-word of height $k>1$. We say that $w$ is \emph{maximal} (resp.\ \emph{minimal}) if for every $0\leq j\leq k-2$, $D^{j}(w)$ is a primitive of $D^{j+1}(w)$ of maximal (resp.\ minimal) length. The words of height $k=1$ are assumed to be at the same time maximal and minimal.
\end{definition}

\begin{definition} 
We say that a $\C$-word $w$ is \emph{right maximal} (resp.\ \emph{left maximal}) if $w$ is a suffix (resp.\ a prefix)  of a maximal word.
Analogously, we say that $w$ is \emph{right minimal} (resp.\ \emph{left minimal}) if $w$ is a suffix (resp.\ a prefix) of a minimal  word.
\end{definition}

Clearly, a word is maximal (resp.\ minimal) if and only if it is both left maximal and right maximal (resp.\ left minimal and right minimal).

\begin{example}\label{ex:minimal}
 The word $2211$ is minimal, since $2211$ is a primitive of $22$ of minimal length and $22$ is a primitive of $2$ of minimal length; the word $21221121$ is maximal, since $21221121$ is a primitive of $1221$ of maximal length and $1221$ is a primitive of $2$ of maximal length; the word $2122112$ is left maximal but not right maximal. Note that $2211$ is a proper factor of $21221121$ and that the two words have the same height and the same root.
 
 \[\begin{tabular}{p{15mm}p{15mm}p{20mm}p{20mm}}
\hline \rule[-6pt]{0pt}{18pt} $w$ & $2211$   & $2122112$ & $21221121$    \\
 \rule[-6pt]{0pt}{13pt}$D(w)$ & $22$       &  $122$ & $1221$     \\
 \rule[-6pt]{0pt}{15pt}$D^{2}(w)$ & $2$     & $2$  &$2$       \\
 \hline  \rule[-1pt]{0pt}{1pt}
\end{tabular} \] 
\end{example}


Any $\C$-word can be extended to the left and to the right into a $\C$-word \cite{Weakley:1989}. That is, if $w$ is a $\C$-word, then at least one between $1w$ and $2w$ is a $\C$-word. Analogously, at least one between $w1$ and $w2$ is a  $\C$-word.

\begin{definition}\cite{Weakley:1989}
A $\C$-word $w$ is \emph{right doubly extendable} (resp.\ \emph{left doubly extendable}) if both $w1$ and $w2$ (resp.\ $1w$ and $2w$) are $\C$-words. Otherwise, $w$ is \emph{right simply extendable} (resp.\ \emph{left simply extendable}).

A $\C$-word $w$ is \emph{fully extendable} if $1w1$, $1w2$, $2w1$ and $2w2$ are all $\C$-words.
\end{definition}

It is worth noticing that a word can be at the same time right doubly extendable and left doubly extendable but not fully extendable. This is the case, for example, for the word $w=1$.

A remarkable result of Weakley \cite{Weakley:1989} is presented in the next theorem, that we slightly adapted to our definitions.

\begin{theorem}\label{theor:Weakley}
Let $w$ be a $\C$-word. The following three conditions are equivalent:
\begin{enumerate}
\item $w$ is fully extendable (resp.\ $w$ is right doubly extendable, resp.\ $w$ is left doubly extendable);
\item $w$ is double-rooted maximal (resp.\ $w$ is right maximal, resp.\ $w$ is left maximal);
\item $w$ and all its derivatives (resp.\ $w$  and all its derivatives longer than one) begin and end (resp.\ end, resp.\ begin) with two distinct symbols.
\end{enumerate}
\end{theorem}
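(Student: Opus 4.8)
The plan is to prove the equivalence of the three conditions in Theorem~\ref{theor:Weakley} by focusing on a single one of the three parallel cases and then transferring the result to the others by symmetry. I would treat the ``right'' case in full detail, that is, I would prove that the following are equivalent: $w$ is right doubly extendable; $w$ is right maximal; $w$ and all its derivatives longer than one end with two distinct symbols. The ``left'' case then follows immediately by applying the reversal operation, since by Proposition~\ref{prop:propertiesC} the class of $\C$-words is closed under reversal, reversal commutes appropriately with $D$ (up to the obvious swap of prefix/suffix roles), and $\widetilde{w}$ ending with two distinct symbols corresponds to $w$ beginning with two distinct symbols. The ``fully extendable / double-rooted maximal'' case would then be assembled by combining the left and right statements: full extendability is equivalent to being simultaneously left and right doubly extendable together with the compatibility needed to make all four words $1w1,1w2,2w1,2w2$ differentiable, and double-rooted maximal is precisely the conjunction of left maximal, right maximal, and having a length-$2$ root.

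The engine of the argument is the behaviour of the derivative $D$ under the adjunction of a single letter, so the key step is a careful analysis at the level of runs. By Remark~\ref{rem:xxx}, $w a$ (for $a\in\Sigma$) is a $\C$-word iff adding $a$ creates neither $111$ nor $222$ \emph{and} the resulting word is still differentiable arbitrarily often; the point is to understand how the last run of $w$ and the choice of $a$ affect $D(wa)$. First I would establish the following local dichotomy: if $w$ ends with two equal symbols (so its last run has length $\geq 2$, and in fact exactly $2$ since $w$ avoids $111$ and $222$), then exactly one of $w1,w2$ can be appended without creating a forbidden cube, so $w$ is right simply extendable at the very first level; whereas if $w$ ends with two distinct symbols (last run of length $1$), then both continuations are locally admissible and $D(w1)$, $D(w2)$ differ only in their last letter, each being a legitimate primitive-type extension of $D(w)$. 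Iterating this observation down the derivative chain via Remark~\ref{rem:D} is what links ``all derivatives end with two distinct symbols'' to double extendability at every stage, and ``right maximal'' (being a suffix of a word that is at each level a maximal-length primitive) is exactly the structural reformulation of the same chain of length-$1$ terminal runs.

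Concretely, the order of steps I would carry out is: (1) prove the one-step lemma relating the last run of $w$ to which of $w1,w2$ are $\C$-words, identifying ``last two symbols distinct'' with ``both extendable''; (2) use Remark~\ref{rem:D} to propagate extendability between $w$ and $D(w)$, showing $w$ is right doubly extendable iff $D(w)$ is (once $D(w)$ is longer than one), thereby reducing the global statement to the terminal runs of the whole derivative sequence; (3) translate ``the derivative sequence always has length-$1$ terminal runs'' into the definition of right maximal, by exhibiting the maximal word of which $w$ is a suffix, reconstructing it level by level as the maximal-length primitive; (4) deduce the left case by reversal and the full/double-rooted case by intersecting the two. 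The main obstacle I expect is Step~(2), the propagation across $D$: the derivative can discard the last run when it has length $1$, so appending a letter to $w$ does not always correspond cleanly to appending a letter to $D(w)$, and the boundary cases where the terminal run length equals $1$ must be handled delicately to see that the new last letter of $wa$ survives differentiation. Getting the bookkeeping of these boundary runs correct — including the subtle role of the root having length $2$ in the double-rooted/fully-extendable case, which is exactly the extra constraint that distinguishes full extendability from mere two-sided double extendability — is where the real care is required.
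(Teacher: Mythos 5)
A preliminary point: the paper does not prove this theorem at all. It is Weakley's result, imported with the remark that it has been ``slightly adapted to our definitions,'' so there is no in-paper proof to compare yours against; your attempt must be judged on its own terms. On those terms, your plan (do the right-hand case by a run-level analysis of one-letter extensions, propagate down the derivative chain, obtain the left case by reversal via Proposition~\ref{prop:propertiesC}, and assemble the full case) is a sound reconstruction, and your key local computation is correct: if the last run of $w$ has length $1$, writing $x$ for the last letter, one gets $D(wx)=D(w)2$ and $D(w\overline{x})=D(w)1$, whereas if the last run has length $2$ only one of the two extensions even avoids a cube. One structural caution: you must not invoke Lemma~\ref{lem:maxmin1} or Lemma~\ref{lem:lengthD}, since in the paper both are proved \emph{from} Theorem~\ref{theor:Weakley}; your proposal rightly works from scratch, but be aware that the bookkeeping needed in the fully-extendable case amounts to an independent re-proof of the content of Lemma~\ref{lem:lengthD} for maximal words.

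Two concrete repairs are needed. First, your Step~(2) as stated --- ``$w$ is right doubly extendable iff $D(w)$ is (once $D(w)$ is longer than one)'' --- is false: take $w=1211$, so $D(w)=12$; then $D(w)$ is right doubly extendable ($121$ and $122$ are $\C$-words) but $w$ is not ($w1=12111$ contains a cube). The correct induction step is conditional: $w$ is right doubly extendable iff \emph{both} the last run of $w$ has length $1$ \emph{and} $D(w)$ is right doubly extendable; this is exactly what your Step~(1) identity furnishes, and iterating it yields condition~3, with words of length at most $1$ as trivially doubly extendable base cases. Second, the fully-extendable case is only gestured at (``the compatibility needed''), but this is where the theorem's content lies, since --- as the paper's own example $w=121$ shows --- being doubly extendable on both sides is strictly weaker than full extendability. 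The missing argument is a root-level dichotomy: for a maximal $w$ of height $k$, each word $xwx'$, $x,x'\in\Sigma$, adds exactly one letter at each end of every derivative $D^{j}(w)$ (the boundary runs of $D^{j}(w)$ have length $1$, so no cube can be created at intermediate levels), hence $D^{k-1}(xwx')$ ranges over the four words $c'rc$, $c,c'\in\Sigma$, where $r$ is the root. When $|r|=1$, one of these four is $111$ or $222$, so exactly one extension fails and $w$ is not fully extendable; when $|r|=2$, all four are cube-free and are readily checked to be $\C$-words. Without spelling out this descent-to-the-root mechanism, the equivalence of ``fully extendable'' with ``double-rooted maximal'' is not established.
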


\begin{example}
Consider the $\C$-word $w=121$. By Theorem \ref{theor:Weakley}, $w$ is right doubly extendable and left doubly extendable. Nevertheless, $w$ is not fully extendable, since it is single-rooted. Indeed, the word $2w2$ is not a $\C$-word, since $D(21212)=111$ and thus by definition $D(w)$ is not differentiable.
\end{example}

\begin{remark}\label{rem:fact}
 A $\C$-word $w$ is right minimal (resp.\ left minimal) if and only if $w$ and all its derivatives longer than two have the property that their suffix (resp.\ their prefix) of length three is different from $221$ and $112$ (resp.\ different from $122$ and $211$). To see this, think for example of a $\C$-word  of the form $w=w'221$ for some $w'$; then the word $w'22$ is a primitive of $D(w)$ shorter than $w$. Hence $w$, or any of its primitives, cannot be a right minimal word. 
\end{remark}


\begin{lemma}\label{lem:maxmin1}
Let $w$ be a $\C$-word. Then $w$ is a right maximal word (resp.\ a left maximal word) if and only if there exists $x\in \Sigma$ such that $wx$ (resp.\ $xw$) is a right minimal word (resp.\ a left minimal word). 

Moreover, if $wx$ (resp.\ $xw$) is a right minimal word (resp.\ a left minimal word), then so is $w\overline{x}$ (resp.\ $\overline{x}w$).
\end{lemma}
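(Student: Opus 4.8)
The plan is to prove the statement for the right versions only; the left versions then follow by applying the reversal operation, which commutes with $D$ (immediate from the definition of $D$, since it strips leading and trailing runs of length $1$ symmetrically). Hence reversal exchanges right maximal with left maximal and right minimal with left minimal, and we may invoke that $\C$ is closed under reversal (Proposition \ref{prop:propertiesC}). Throughout I will use two characterizations: by Theorem \ref{theor:Weakley}, $w$ is right maximal iff $w$ and all its derivatives longer than one end with two distinct symbols; and by Remark \ref{rem:fact}, $w$ is right minimal iff $w$ and all its derivatives longer than two have length-three suffix different from $221$ and $112$. The latter yields the decomposition I will lean on: a $\C$-word $v$ is right minimal if and only if $|v|<3$ or its length-three suffix avoids $\{221,112\}$, and in addition $D(v)$ is right minimal (the derivatives of $D(v)$ being exactly $D(v),D^2(v),\dots$).

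The computational heart is the following identity. Suppose $w$ has at least two runs and its last run has length $1$, and let $b$ be its last symbol, so that $w$ ends in $\overline b b$. Then $\Delta(wb)$ is $\Delta(w)$ with its final exponent raised from $1$ to $2$, while $\Delta(w\overline b)=\Delta(w)1$; chasing the definition of $D$ gives $D(wb)=D(w)\,2$ and $D(w\overline b)=D(w)\,1$. Moreover the length-three suffixes of $wb$ and $w\overline b$ are $\overline b b b$ and $\overline b b\overline b$, both of which avoid $\{221,112\}$.

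I would prove the \emph{moreover} assertion first, by induction on the height of $wx$. Assume $wx$ is right minimal. If $|wx|\geq 3$, then the last run of $w$ must have length $1$: otherwise the last run has length $2$, the letter $\overline b$ is forced (to avoid a cube), and the length-three suffix of $wx$ is $bb\overline b\in\{221,112\}$, contradicting right minimality. Hence the identity above applies and $\{wx,w\overline x\}=\{wb,w\overline b\}$, with derivatives $D(w)\,2$ and $D(w)\,1$. Right minimality of $wx$ gives, via the decomposition, that $D(wx)=D(w)\,y$ is right minimal for some $y\in\{1,2\}$; the induction hypothesis applied to the pair $(D(w),y)$, whose product has strictly smaller height, yields that $D(w)\,\overline y=D(w\overline x)$ is right minimal; since the length-three suffix of $w\overline x$ also avoids $\{221,112\}$, the decomposition shows $w\overline x$ is right minimal. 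The base cases are the single-run words $\epsilon,1,2$, where both extensions are checked to be right minimal directly and the identity does not apply.

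Finally I would establish the equivalence. For the backward direction, if some $wx$ is right minimal, the \emph{moreover} part gives that $w\overline x$ is right minimal as well; both $wx$ and $w\overline x$ are therefore $\C$-words, so $w$ is right doubly extendable, whence right maximal by Theorem \ref{theor:Weakley}. For the forward direction I would again induct on height: if $w$ is right maximal then so is $D(w)$ (its defining condition is inherited by derivatives), so by induction there is $y$ with $D(w)\,y$ right minimal; choosing $x=b$ when $y=2$ and $x=\overline b$ when $y=1$ gives $D(wx)=D(w)\,y$, and together with the suffix check this shows $wx$ is right minimal. I expect the main obstacle to be establishing the derivative identity together with the observation that right minimality forces the last run of $w$ to have length $1$; once these are in place, the two inductions run in parallel and the base cases are routine.
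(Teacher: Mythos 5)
Your proposal is correct and takes essentially the same route as the paper: both directions rest on Weakley's characterization of right maximality (Theorem \ref{theor:Weakley}) combined with the length-three-suffix characterization of right minimality (Remark \ref{rem:fact}), with the left-hand versions dispatched by symmetry. The only difference is one of detail: your identities $D(wb)=D(w)2$ and $D(w\overline{b})=D(w)1$, and the two inductions on height built on them, make explicit the derivative bookkeeping that the paper's terse proof leaves implicit in Remark \ref{rem:fact} --- in fact that identity is exactly the computation the paper records separately, just after this lemma, as Lemma \ref{lem:lengthD}, and since you prove it independently there is no circularity.
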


\begin{proof}
Let $w\in \C$ be a right maximal word. Since, by Theorem \ref{theor:Weakley}, $w$ and all its derivatives longer than one end with two different symbols, Remark \ref{rem:fact} proves that the words $wx$ and $w\overline{x}$, $x\in \Sigma$, are both right minimal.

Conversely, if $wx$, $x\in \Sigma$, is a right minimal word, Remark \ref{rem:fact} directly shows that $w$ and all its derivatives longer than one end with two different symbols. Hence,  always by Theorem \ref{theor:Weakley}, $w$ is left maximal.

In particular, this also shows that $w\overline{x}$ is a $\C$-word (since, again by Theorem \ref{theor:Weakley}, $w$ is right doubly extendable), and the argument above shows that $w\overline{x}$ is a right minimal word.

The same argument can be used for left maximal words.
\end{proof}

\begin{definition}
Let $w$ be a $\C$-word.
A \emph{right simple extension} of $w$ is any $\C$-word $w'$ of the form $w'=wx_{1}x_{2}\cdots x_{n}$, $x_{i}\in \Sigma$, $n\ge 1$, such that, for every $1\leq i\leq n$, $wx_{1}\cdots x_{i-1}\overline{x_{i}}\notin \C$.
A \emph{left simple extension} of $w$ is any $\C$-word $w'$ such that $\widetilde{w'}$ is a right simple extension of $\widetilde{w}$. A \emph{simple extension} of $w$ is a right simple extension of a left simple extension of $w$ (or equivalently a left simple extension of a right simple extension of $w$).

The \emph{right maximal extension} (resp.\ \emph{the left maximal extension}, resp.\ \emph{the maximal extension}) of $w$ is the right simple extension (resp.\ the left simple extension, resp.\ the simple extension) of $w$ of maximal length. 
\end{definition}

\begin{example}
 Let $w=2211$, as in Example \ref{ex:minimal}. Then $221121$ is the right maximal extension of $w$ and $212211$ is the left maximal extension of $w$. The maximal extension of $w$ is $21221121$.
 
  \[\begin{tabular}{p{15mm}p{15mm}p{20mm}p{20mm}p{20mm}}
\hline \rule[-6pt]{0pt}{18pt} $w$ & $2211$   & $221121$ & $212211$ & $21221121$   \\
 \rule[-6pt]{0pt}{13pt}$D(w)$ & $22$       &  $221$ & $122$  & $1221$   \\
 \rule[-6pt]{0pt}{15pt}$D^{2}(w)$ & $2$     & $2$  &$2$    & $2$   \\
 \hline  \rule[-1pt]{0pt}{1pt}
\end{tabular} \] 
\end{example}

\begin{remark}\label{rem:extension}
Let $w$ be a $\C$-word. Then the right maximal extension (resp.\ the left maximal extension, resp.\ the maximal extension) of $w$ is a right maximal (resp.\ a left maximal, resp.\ a maximal) word.


\end{remark}

\begin{lemma}\label{lem:ext}
 Let $w$ be a $\C$-word. Then every simple extension of $w$ has the same height and the same root as $w$. In particular, then, this holds for the maximal extension of $w$.
\end{lemma}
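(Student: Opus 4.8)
The plan is to reduce the statement to the case of a single-letter right simple extension, settle that case by induction on the height, and then recover the general statement by reversal and concatenation. First, since $\C$ is factorial (Proposition~\ref{prop:propertiesC}), every prefix of a right simple extension $w'=wx_1\cdots x_n$ is again a $\C$-word, and by definition each $wx_1\cdots x_j$ is obtained from $wx_1\cdots x_{j-1}$ by appending the unique letter $x_j$ for which $wx_1\cdots x_{j-1}\overline{x_j}\notin\C$. Thus a right simple extension is a chain of single-letter right simple extensions, and an induction on $n$ reduces the right case to $n=1$. The left case follows from the right one by reversal: since $D(\widetilde v)=\widetilde{D(v)}$, reversal preserves the height and sends the root to its reversal, so applying the right case to $\widetilde{w}$ settles it. Finally, a simple extension is a right simple extension of a left simple extension, so the two cases combine.

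The core is the following single-step claim: if $x\in\Sigma$, $wx\in\C$ and $w\overline{x}\notin\C$, then $D(wx)$ is either equal to $D(w)$ or is a single-letter right simple extension of $D(w)$. I would prove this by tracking, through the run-length encoding, how appending a letter changes $D$, using that $D(v)$ is $\Delta(v)$ with a leading and/or trailing $1$ removed. Let $a^i$ be the last run of $w$. If $x=a$, then $i=1$ (otherwise $wx$ would contain $aaa$), and one computes $D(wx)=D(w)2$ and $D(w\overline{x})=D(w)1$; if $x=\overline{a}$ and $i=1$, then symmetrically $D(wx)=D(w)1$ and $D(w\overline{x})=D(w)2$; if $x=\overline{a}$ and $i=2$, then the freshly created trailing $1$ is deleted by $D$ and $D(wx)=D(w)$. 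In the first two cases write the common pattern $D(wx)=D(w)y$, $D(w\overline{x})=D(w)\overline{y}$. Since $wx\in\C$ is differentiable, $D(wx)=D(w)y\in\C$; and since $w\overline{x}$ is differentiable but is not a $\C$-word, its failure is inherited by the derivative, giving $D(w\overline{x})=D(w)\overline{y}\notin\C$. Hence $D(wx)$ is a single-letter right simple extension of $D(w)$, as claimed; in the third case $D(wx)=D(w)$ directly.

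With the single-step claim available, I would conclude by induction on the height $k$ of $w$, using that $D$ lowers the height by one and that the root of a height-$k$ word is $D^{k-1}$ of it. The base case $k=1$ is vacuous: each of $1,2,12,21$ is right maximal, hence right doubly extendable by Theorem~\ref{theor:Weakley}, so it admits no single-letter right simple extension. For $k\ge 2$: if $D(wx)=D(w)$ the height and root of $wx$ agree with those of $w$ at once; if instead $D(wx)$ is a single-letter right simple extension of $D(w)$, then $D(w)$ is right simply extendable, hence of height at least $2$ (height-$1$ words being right doubly extendable), and the induction hypothesis, applied to $D(w)$ of height $k-1<k$, yields that $D(wx)$ has height $k-1$ and the same root as $D(w)$. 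Since $\mathrm{height}(wx)=1+\mathrm{height}(D(wx))$ and $D^{k-1}(wx)=D^{k-2}(D(wx))$, these equalities transfer back to $wx$, which therefore has height $k$ and root $D^{k-1}(w)$, the same as $w$.

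The delicate part, and the main obstacle, is the single-step run-length bookkeeping, and above all the correct translation of the defining condition $w\overline{x}\notin\C$ into a condition on $D(w)$. One must verify that in the relevant cases $w\overline{x}$ is genuinely differentiable, so that its failure to be a $\C$-word is reflected at the level of $D(w\overline{x})$ rather than at the zeroth level, and one must check the degenerate short cases ($\Delta(w)$ of length $1$ or $2$) against the trimming rules defining $D$. Once this correspondence between the extendability of $w$ and that of $D(w)$ is established, the remaining induction is routine.
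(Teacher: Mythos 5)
Your proof is correct, and it reaches the paper's conclusion by a genuinely different route at the decisive step. Both arguments are inductions on the height, and both rest on the same pivot: the derivative of a simple extension of $u$ is (essentially) a simple extension of $D(u)$. The paper proves this pivot abstractly and in one shot: by Remark~\ref{rem:D}, $D(u)$ is a factor of $D(v)$, and any non-simple extension of $D(u)$ sitting inside $D(v)$ would pull back to a non-simple extension of $u$ inside $v$; no run-length case analysis is needed, and left and right extensions are handled simultaneously. You instead decompose a right simple extension into single-letter steps, settle each step by explicit $\Delta$-bookkeeping --- your trichotomy $D(wx)=D(w)2$, $D(wx)=D(w)1$, $D(wx)=D(w)$ according to the last run of $w$ and the appended letter, which is the same computation the paper performs inside Lemma~\ref{lem:lengthD}, there only for right maximal words --- and recover the left case via $D(\widetilde{v})=\widetilde{D(v)}$, which indeed preserves height and conjugates the root by reversal. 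Your version buys two things: it is self-contained at the level of run-length encodings, and it makes explicit a case the paper's proof glosses over, namely that when the last run of $w$ has length two the forced extension satisfies $D(wx)=D(w)$ exactly (e.g., $u=2211$, $v=22112$, $D(u)=D(v)=22$), so $D(u)$ is \emph{not} a proper factor of $D(v)$ and $D(v)$ is not literally a simple extension of $D(u)$ under the $n\ge 1$ definition; your equality branch absorbs this cleanly, and your inheritance step ($w\overline{x}$ differentiable but not in $\C$ implies $D(w\overline{x})\notin\C$) is valid since a differentiable word lies in $\C$ if and only if its derivative does. The paper's version buys brevity and avoids the degenerate short-word checks you rightly flag: the formulas $D(w\overline{x})=D(w)\overline{y}$ can fail when $|\Delta(w)|\le 2$ because the leading and trailing trims interact, but a quick verification shows every such failure occurs only for words of height $1$, which your vacuous base case (all of $1,2,12,21$ being right doubly extendable by Theorem~\ref{theor:Weakley}) excludes from the induction; it would strengthen your write-up to state that observation explicitly rather than leave it as a verification to be done.
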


\begin{proof}
By induction on the height $k$ of $w$. For $k=1$, a simple check of all the cases proves that the claim holds. 

Let $u$ be a word of height $k>1$ and let $v$ be a simple extension of $u$. We claim that the word $D(v)$ is a simple extension of the word $D(u)$. Indeed, by Remark \ref{rem:D}, $D(u)$ is a (proper) factor of $D(v)$. The existence of a non-simple extension $z$ of $D(u)$  such that $z$ is a factor of $D(v)$ would imply, once again by Remark \ref{rem:D}, the existence of a non-simple extension $z'$ of $u$ such that $z'$ is a factor of $v$, against the hypothesis that $v$ is a simple extension of $u$. Hence, by induction hypothesis, $D(v)$ and $D(u)$ have the same height and root. Since a word has the same root as its derivative, and has height equal to $1$ plus the height of its derivative, the claim is proved. 
\end{proof}

\begin{lemma}\label{lem:lengthD}
 Let $w\in \C$ be a right maximal word (resp.\ a left maximal word) of height $k>0$. Then for every $0\leq j< k$ and for every $x\in \Sigma$, $|D^{j}(wx)|=|D^{j}(w)|+1$ (resp.\ $|D^{j}(xw)|=|D^{j}(w)|+1$). Moreover, if $D^{j}(wx)=D^{j}(w)y$, $y\in \Sigma$, then $D^{j}(w\overline{x})=D^{j}(w)\overline{y}$ (resp.\ if $D^{j}(xw)=yD^{j}(w)$, $y\in \Sigma$, then $D^{j}(\overline{x}w)=\overline{y}D^{j}(w)$).
\end{lemma}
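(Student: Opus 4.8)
The plan is to prove the right maximal version and to deduce the left maximal version by reversal, using that $D$ commutes with reversal (indeed $\Delta(\widetilde{v})=\widetilde{\Delta(v)}$, and the truncation of end runs of length one in the definition of $D$ is symmetric). The engine of the proof is a one-step computation describing the interaction of a single right-append with one application of $D$. Recall that $D(v)$ is obtained from $\Delta(v)$ by deleting its first symbol when that symbol equals $1$ and deleting its last symbol when that symbol equals $1$. I would first establish the following: if $u$ ends with two distinct symbols (i.e.\ the last run of $u$ has length one and $u$ has at least two runs), then for every $a\in\Sigma$ there exists $b\in\Sigma$ such that $D(ua)=D(u)\,b$ and $D(u\overline{a})=D(u)\,\overline{b}$, where $b=2$ if $a$ coincides with the last letter of $u$ and $b=1$ otherwise. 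This is a direct run-length check: writing $\Delta(u)=i_1\cdots i_n$ with $n\ge 2$ and $i_n=1$, appending the last letter of $u$ turns the trailing $1$ into a $2$, whereas appending its complement creates a new trailing run of length one; thus $\Delta(ua)$ is $i_1\cdots i_{n-1}2$ in the first case and $i_1\cdots i_{n-1}11$ in the second, and in both cases the head-deletion in $D$ acts on the same symbol $i_1$ as it does for $u$. Hence $D(ua)$ equals $D(u)$ followed by exactly one extra letter, whose value is read off from the two cases.

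With this tool, the length claim and the complement claim reduce to a single induction on $j$. By Theorem \ref{theor:Weakley}, since $w$ is right maximal, $w$ and all its derivatives longer than one end with two distinct symbols. As $D^j(w)$ has height $k-j$, for $0\le j\le k-2$ it has height at least two and hence length at least two (a word of height at least two has non-empty derivative), so it is longer than one and therefore ends with two distinct symbols; in particular its run-length encoding has at least two symbols. I would then prove by induction on $j$, for $0\le j\le k-1$, that $D^j(wx)=D^j(w)\,y_j$ and $D^j(w\overline{x})=D^j(w)\,\overline{y_j}$ for a suitable $y_j\in\Sigma$. The base case $j=0$ is immediate with $y_0=x$, since $D^0(wx)=wx=w\,x$ and $D^0(w\overline{x})=w\,\overline{x}$. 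For the step from $j$ to $j+1$ (with $j\le k-2$), I would apply the one-step computation to $u=D^j(w)$, which ends with two distinct symbols, and to the appended letters $y_j$ and $\overline{y_j}$: this yields $D^{j+1}(wx)=D(u\,y_j)=D(u)\,y_{j+1}=D^{j+1}(w)\,y_{j+1}$ and, in parallel, $D^{j+1}(w\overline{x})=D(u\,\overline{y_j})=D(u)\,\overline{y_{j+1}}=D^{j+1}(w)\,\overline{y_{j+1}}$, where $y_{j+1}$ is the letter produced by the one-step computation. Each application increases the length of the $j$-th derivative by exactly one, which is the length assertion; and since the final letter $y=y_j$ is uniquely determined, the clause ``if $D^j(wx)=D^j(w)y$ then $D^j(w\overline{x})=D^j(w)\overline{y}$'' follows.

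I expect the only delicate point to be the one-step computation itself, specifically near the root where $D^j(w)$ is short: one must ensure that the head-deletion and the tail-modification of $\Delta$ act at distinct positions, so that the extra letter created by the append genuinely survives one differentiation rather than being cancelled. This is exactly what the restriction $j\le k-2$ secures, since it forces $D^j(w)$ to be longer than one and hence, by Theorem \ref{theor:Weakley}, to end with two distinct symbols, i.e.\ to have at least two runs. The treatment of $w\overline{x}$ runs in lockstep throughout, because replacing $a$ by $\overline{a}$ merely interchanges the two cases of the one-step computation and so replaces $b$ by $\overline{b}$. Finally, the left maximal statement follows by applying the right maximal statement to $\widetilde{w}$ and reversing, as indicated at the start.
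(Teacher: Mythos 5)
Your proposal is correct and takes essentially the same route as the paper: the core is the same one-step computation (for a word $u$ ending in two distinct symbols, $D(ua)=D(u)2$ if $a$ equals the last letter of $u$ and $D(ua)=D(u)1$ otherwise), iterated using Theorem \ref{theor:Weakley} to guarantee that the derivatives $D^{j}(w)$, $0\leq j\leq k-2$, still end with two distinct symbols. The paper merely leaves the induction on $j$ implicit and disposes of the left maximal case with ``the same argument'' where you instead invoke the (equally valid) commutation of $D$ with reversal.
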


\begin{proof}
 Let $w\in \C$ be a right maximal word. Then, by definition, the last run of $w$ has length one, so it is a letter $x\in \Sigma$. We have $D(wx)=D(w)2$, and $D(w\overline{x})=D(w)1$. Since the derivative of a right maximal word is a right maximal word (by Theorem \ref{theor:Weakley}), the claim follows.

The same argument can be used for left maximal words.
\end{proof}

The results contained in this section can be summarized as follows: Let $w\in \C$ be a right maximal word (resp.\ a left maximal word) of height $k>0$. Then:

\begin{itemize}
 \item if $w$ is double-rooted, then $w1$ and $w2$ (resp.\ $1w$ and $2w$) are single-rooted right minimal (resp.\ left minimal) words of height $k+1$.
 
 \item if instead $w$ is single rooted, then there exists $x\in \Sigma$ such that $wx$ (resp.\ $xw$) is a single-rooted right minimal (resp.\ left minimal) word and has height $k+1$, whereas $w\overline{x}$ (resp.\ $\overline{x}w$) is double-rooted right minimal (resp.\ left minimal)  word and has height $k$. 
\end{itemize}

Indeed, if $w$ is double-rooted, and its root is equal to $y\overline{y}$,  $y\in \Sigma$, then, by Lemma \ref{lem:lengthD}, there exists $x\in \Sigma$ such that $D^{k-1}(wx)=y\overline{y}y$ and $D^{k-1}(w\overline{x})=y\overline{y}\overline{y}$. Thus, $wx$ and $w\overline{x}$ are single-rooted words of height $k+1$ (more precisely, the root of $wx$ is $1$ and the root of $w\overline{x}$ is $2$).

 If instead $w$ is single-rooted, and its root is equal to $y\in \Sigma$, then, by Lemma \ref{lem:lengthD}, we have that for a letter $x\in \Sigma$ the word $wx$ is such that $D^{k-1}(wx)=yy$, and so $wx$ has height $k+1$ and root $2$, whereas $D^{k-1}(w\overline{x})=y\overline{y}$, and so $w\overline{x}$ is a double-rooted word of height $k$. 
 
The same argument can be used for left maximal words and extensions to the left.

\begin{example}
 The word $w=221121$ is a single-rooted right maximal word of height $3$. The word $w1$ is a double-rooted right minimal word of height $3$, whereas the word $w2$ is a right minimal word of height $4$ and root $2$.
 
  \[\begin{tabular}{p{15mm}p{20mm}p{20mm}p{20mm}}
\hline \rule[-6pt]{0pt}{18pt} $w$ &  $221121$ & $2211211$ & $2211212$   \\
 \rule[-6pt]{0pt}{13pt}$D(w)$ &   $221$ & $2212$  & $2211$   \\
 \rule[-6pt]{0pt}{15pt}$D^{2}(w)$ & $2$  &$21$    & $22$   \\
  \rule[-6pt]{0pt}{15pt}$D^{3}(w)$ &   &    & $2$   \\
 \hline  \rule[-1pt]{0pt}{1pt}
\end{tabular} \] 
Consider now the word $w'=22112112$, the right maximal extension of the word $w1$. The word  $w'$ is a double-rooted right maximal word of height $3$. The word $w'1$ is a right minimal word of height $4$ and root $2$, whereas the word $w'2$ is a right minimal word of height $4$ and root $1$.
 
  \[\begin{tabular}{p{15mm}p{20mm}p{20mm}p{20mm}}
\hline \rule[-6pt]{0pt}{18pt} $w'$ &  $22112112$ & $221121121$ & $221121122$   \\
 \rule[-6pt]{0pt}{13pt}$D(w')$ &   $2212$ & $22121$  & $22122$   \\
 \rule[-6pt]{0pt}{15pt}$D^{2}(w')$ & $21$  &$211$    & $212$   \\
  \rule[-6pt]{0pt}{15pt}$D^{3}(w')$ &   &  $2$  & $1$   \\
 \hline  \rule[-1pt]{0pt}{1pt}
\end{tabular} \] 
\end{example}

\section{Automata for differentiable words}\label{sec:laut}


We denote respectively by $\CCk$ and $\CC$ the set of minimal forbidden words for the set $\Ck$ and the set of minimal forbidden words for the set $\C$. Clearly, $\CC=\bigcup_{k>0}\CCk$.

\begin{remark}\label{rem:mfw}
It follows from the definition that a word $w=xuy$, $x,y\in \Sigma$, $u\in \Sigma^{*}$, belongs to $\CC$ if and only if 
\begin{enumerate}
\item $xuy$ does not belong to $\C$;
\item both $xu$ and $uy$ belong to $\C$.
\end{enumerate}

Since a $\C$-word is always extendable to the left and to the right, the second condition is equivalent to: both $xu\overline{y}$ and $\overline{x}uy$ belong to $\C$. In particular, this shows that $u$ is left doubly extendable and right doubly extendable, but not fully extendable, since otherwise $xuy$ would belong to $\C$. Hence, by Theorem \ref{theor:Weakley}, $u$ is a (single-rooted) maximal word.
\end{remark}

The following proposition is a consequence of the definition.

\begin{proposition}\label{prop:propertiesCC}
The set $\CC$ and the sets $\CCk$, for any $k> 0$, are anti-factorial languages closed under reversal and complement.
\end{proposition}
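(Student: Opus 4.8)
The plan is to reduce both assertions to facts already established in Section~\ref{sec:background} together with the structural properties of $\C$ and $\Ck$ recorded in Proposition~\ref{prop:propertiesC}. By definition, $\CC$ and $\CCk$ are the sets of minimal forbidden words of $\C$ and $\Ck$ respectively, so everything should follow by applying the general theory of $\MF(\cdot)$ to these two factorial languages.

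For the anti-factoriality, I would simply observe that $\C$ and $\Ck$ are factorial languages (Proposition~\ref{prop:propertiesC}) and invoke the general fact recalled in Section~\ref{sec:background} that $\MF(F)$ is an anti-factorial language for every factorial language $F$. Applying this with $F=\C$ and $F=\Ck$ immediately yields that $\CC$ and $\CCk$ are anti-factorial.

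For the closure under reversal and complement, the key tool is the syntactic characterization
$$\MF(F)=\Sigma F \cap F\Sigma \cap (\Sigma^{*}\setminus F),$$
valid for every factorial language $F$. I would denote by $\rho$ and $\kappa$ the reversal and the complement operators on $\Sigma^{*}$, both of which are involutive bijections of $\Sigma^{*}$ fixing the one-letter set $\Sigma$. The essential point is that $\kappa$ is a morphism, so $\kappa(AB)=\kappa(A)\kappa(B)$, whereas $\rho$ is an anti-morphism, so $\rho(AB)=\rho(B)\rho(A)$. Consequently, under $\kappa$ each of the two products $\Sigma F$ and $F\Sigma$ is preserved separately, while under $\rho$ they are merely interchanged; since these two languages enter symmetrically into the intersection above, the product $\Sigma F \cap F\Sigma$ is preserved in both cases. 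Combining this with $\rho(\Sigma^{*}\setminus F)=\Sigma^{*}\setminus\rho(F)$, the analogous identity for $\kappa$, and the relations $\rho(F)=F$ and $\kappa(F)=F$ furnished by Proposition~\ref{prop:propertiesC}, I obtain $\rho(\MF(F))=\MF(F)$ and $\kappa(\MF(F))=\MF(F)$. Instantiating at $F=\C$ and $F=\Ck$ concludes the argument.

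I do not expect any genuine obstacle: the whole statement follows mechanically from the characterization of $\MF(\cdot)$ and from the fact that $\C$ and $\Ck$ are themselves closed under $\rho$ and $\kappa$. The only point deserving a moment of care is that reversal is an anti-morphism and hence swaps $\Sigma F$ with $F\Sigma$; one must verify that this swap is harmless, which it is precisely because the two languages occur symmetrically in the intersection defining $\MF(F)$.
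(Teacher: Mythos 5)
Your proof is correct and takes essentially the same approach as the paper, which offers no explicit argument beyond the remark that the proposition ``is a consequence of the definition'': deriving anti-factoriality from the general fact that $\MF(F)$ is anti-factorial for any factorial $F$, and closure under reversal and complement from the characterization $\MF(F)=\Sigma F\cap F\Sigma\cap(\Sigma^{*}\setminus F)$ combined with Proposition~\ref{prop:propertiesC}, is precisely the routine verification being alluded to. Your one point of care --- that reversal, being an anti-morphism, swaps $\Sigma F$ and $F\Sigma$, which is harmless since these enter the intersection symmetrically --- is sound.
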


We now give a combinatorial description of the sets of minimal forbidden words for the set of differentiable words. Let us consider first the set $\CC$.

\begin{lemma}\label{lem:height}
Let $w\in \CC$. Then there exists $k>0$ such that $D^{k}(w)=111$ or $D^{k}(w)=222$. 
\end{lemma}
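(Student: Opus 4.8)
The plan is to unwind the definition of a minimal forbidden word through the derivative operator, applying the derivative repeatedly until we arrive at one of the short patterns that certifies non-differentiability. Let me set up the key observation first.

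Let $w \in \CC$. By Remark~\ref{rem:mfw}, we may write $w = xuy$ with $x,y \in \Sigma$ and $u$ a single-rooted maximal word, where $xu$ and $uy$ are both $\C$-words but $xuy$ is not. Since $w = xuy$ is forbidden, it must fail to be differentiable at \emph{some} level: that is, there exists a smallest integer $j \geq 0$ such that $D^{j}(w)$ is not differentiable (if $w$ itself is not differentiable, then $j=0$). This smallest $j$ exists because $w \notin \C$ means precisely that $w$ is not $k$-differentiable for some $k$, so some iterate of $D$ fails to be a word over $\Sigma$. By Remark~\ref{rem:xxx}, the failure of $D^{j}(w)$ to be differentiable means that $D^{j}(w)$ contains $111$ or $222$ as a factor. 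The heart of the argument is to show that in fact $D^{j}(w)$ \emph{equals} $111$ or $222$, rather than merely containing one of them.

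The main obstacle is precisely this promotion from ``contains'' to ``equals.'' My strategy here is to exploit the minimality of $w$, propagated down through the derivative by Remark~\ref{rem:D}. Set $k = j$ (the first level of failure), so that $D^{k-1}(w)$ is differentiable but $D^{k}(w) = \Delta(\text{something})$ contains $111$ or $222$; equivalently, $D^{k-1}(w)$ is differentiable but applying $D$ one more time produces a non-differentiable word, which means $D^{k-1}(w)$ contains a factor whose run-length encoding contains $111$ or $222$. First I would peel off one derivative: consider $v = D^{k-1}(w)$, a differentiable word such that $D(v)$ contains $111$ or $222$. Both the prefix $xu$ and the suffix $uy$ of $w$ are $\C$-words, and by Remark~\ref{rem:D} their derivatives $D^{k-1}(xu)$ and $D^{k-1}(uy)$ are respectively a prefix and a suffix of $v = D^{k-1}(w)$; crucially these are proper, so $v$ is strictly longer than each, and $v$ itself is \emph{not} a $\C$-word while both its prefix and suffix of length $|v|-1$ are. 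This says $v$ is itself forbidden and minimal relative to $\C$ — that is, $v \in \CC$ — but with the additional feature that it fails differentiability after exactly one more derivative.

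The remaining step is the base case: a word $v \in \CC$ with $D(v)$ (or $v$ itself) containing $111$ or $222$, and such that both its length-$(|v|-1)$ prefix and suffix are differentiable, must be so short that the forbidden block is forced to occupy all of $v$ after one derivative. Concretely, if $D(v)$ contains $111$ as a proper factor, then by Remark~\ref{rem:D} some proper factor of $v$ already has a derivative equal to $111$ extendable on both sides, contradicting minimality of $v$; so the occurrence of $111$ (or $222$) in $D(v)$ cannot be proper, forcing $D(v) = 111$ or $D(v) = 222$. This is the delicate case analysis I expect to take the most care, handling the boundary runs of length one that $D$ discards and checking that no strictly shorter forbidden word hides inside $v$. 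Once $D^{k}(w) = D(v) = 111$ or $222$ is established, the lemma follows with the stated $k > 0$, noting $k \geq 1$ since $111,222 \notin \C$ are themselves already forbidden only through failing differentiability, and a genuine minimal forbidden word $w$ with the structure $xuy$ cannot itself equal $111$ or $222$ unless its own derivative does the work.
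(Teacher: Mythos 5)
Your core mechanism is the same as the paper's: use the converse part of Remark~\ref{rem:D} to pull a forbidden block $xxx$ back through the derivative, and let minimality of the forbidden word turn a \emph{proper} occurrence into a contradiction. The paper does this in one shot: if $xxx$ were a proper factor of $D^{k}(w)$, iterating Remark~\ref{rem:D} gives a proper factor $z'$ of $w$ with $D^{k}(z')=xxx$ (proper because $D^{k}(z')\neq D^{k}(w)$), so $z'\notin\C$, contradicting the fact that every proper factor of $w\in\CC$ lies in $\C$. You instead interpose the claim $v=D^{k-1}(w)\in\CC$, and it is exactly there that your write-up has a genuine leap. From Remark~\ref{rem:D} you only know that $D^{k-1}(xu)$ and $D^{k-1}(uy)$ are a prefix and a suffix of $v$; the remark gives no length control, so ``crucially these are proper'' is unsupported as written (for general words, appending a letter can leave the derivative unchanged: $D(112)=D(11)=2$), and even granting properness, a proper prefix is not the prefix of length $|v|-1$, which is what minimality of $v$ actually requires. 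The claim is true, but you must earn it: either apply Lemma~\ref{lem:lengthD} on both sides of the maximal word $u$ furnished by Remark~\ref{rem:mfw} (exactly as the paper later does in proving Lemma~\ref{lem:charmfw}) to get $|D^{j}(xuy)|=|D^{j}(u)|+2$ and $|D^{j}(xu)|=|D^{j}(uy)|=|D^{j}(u)|+1$, so that $D^{k-1}(xu)$ and $D^{k-1}(uy)$ are precisely the length-$(|v|-1)$ prefix and suffix of $v$; or, more simply, run the pull-back from your own last paragraph on an arbitrary proper prefix or suffix of $v$: it lifts to a proper prefix or suffix of $w$, which is in $\C$ by minimality of $w$, and $\C$ is closed under $D$. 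With either repair your proof closes, and your final paragraph is then literally the paper's argument applied one derivative level down --- so the detour through $v\in\CC$, while salvageable, buys nothing over the direct $k$-level pull-back.

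A second, smaller defect: your treatment of the base case is muddled. If $w\in\{111,222\}$ (these \emph{are} minimal forbidden words of the form $xuy$), then $j=0$ and no $k>0$ can work, since $D(111)$ is undefined; the only valid index is $k=0$ under the convention $D^{0}(w)=w$. Your closing sentence asserts $k\geq 1$ on grounds that do not exclude this case. To be fair, the paper's statement and proof carry the same off-by-one --- the subsequent definition of height as $k+1$ with $D^{k}(w)=xxx$ confirms that $k\geq 0$ is intended --- so you inherited this defect rather than created it; the honest fix is to dispose of $w\in\{111,222\}$ explicitly as the case $k=0$, and to observe that every other $w\in\CC$ has $j\geq 1$, since a word properly containing $111$ or $222$ cannot be a minimal forbidden word.
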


\begin{proof}
 By assumption, $w\notin \C$. So there exists $k>0$ such that $D^{k}(w)$ contains $xxx$ as factor, for a letter $x\in \Sigma$. If $xxx$ was a proper factor of $D^{k}(w)$, then, by Remark \ref{rem:D}, there would exist a proper factor of $w$ which is not differentiable, against the definition of minimal forbidden word. 
\end{proof}

So, analogously to the case of $\C$-words, we can define the \emph{height} of a word $w$ in $\CC$. This is the integer $k+1$ such that $D^{k}(w)=xxx$, for $x\in \Sigma$. 

Not surprisingly, the set $\CCk$ of minimal forbidden words for the set $\Ck$ coincides with the set of words in  $\CC$ having height not greater than $k$, as shown in the following lemma.

\begin{lemma}\label{lem:mfw}
For any $k>0$, the subset of $\CC$ of words having height less than or equal to $k$ is the set $\CCk$. 
\end{lemma}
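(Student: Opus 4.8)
The plan is to prove the two inclusions separately, relying on the characterization of minimal forbidden words recalled in Section~\ref{sec:background} (a word $v$ lies in $\MF(F)$ iff $v\notin F$ while both its prefix and its suffix of length $|v|-1$ lie in $F$) together with the factoriality of $\Ck$ and $\C$ (Proposition~\ref{prop:propertiesC}).

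For the inclusion of the height-$\le k$ part of $\CC$ into $\CCk$, I would start from $w\in\CC$ of height $h\le k$. By definition of height, $D^{h-1}(w)=xxx$ for some $x\in\Sigma$, so $w$ is $(h-1)$-differentiable but not $h$-differentiable; since $h\le k$, this forces $w\notin\Ck$. On the other hand, $w\in\MF(\C)$ gives that its prefix and its suffix of length $|w|-1$ belong to $\C\subseteq\Ck$. The characterization of minimal forbidden words then yields $w\in\MF(\Ck)=\CCk$.

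For the reverse inclusion I would take $w\in\CCk$ and let $j$ be the least index for which $D^{j}(w)$ contains a factor $xxx$; since $w\notin\Ck$ such a $j$ exists with $j<k$, and by minimality of $j$ the words $D^{0}(w),\dots,D^{j-1}(w)$ are differentiable, so $D^{j}(w)$ is well defined. The first key step is to show $D^{j}(w)=xxx$ exactly. This is the argument of Lemma~\ref{lem:height} transplanted to $\Ck$: since $\Ck$ is factorial and both the length-$(|w|-1)$ prefix and suffix of $w$ lie in $\Ck$, every proper factor of $w$ lies in $\Ck$; were $xxx$ a \emph{proper} factor of $D^{j}(w)$, the converse part of Remark~\ref{rem:D}, applied $j$ times and carrying the properness down one derivative at a time, would produce a proper factor $z'$ of $w$ with $D^{j}(z')=xxx$, that is a proper factor that is not $(j+1)$-differentiable, hence (as $j+1\le k$) not in $\Ck$, a contradiction. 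In particular $w$ has height $j+1\le k$.

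The second key step is to upgrade the prefix $p$ and the suffix $s$ of $w$ of length $|w|-1$ from membership in $\Ck$ to membership in $\C$. Here I would exploit $D^{j}(w)=xxx$: by Remark~\ref{rem:D}, $D^{j}(p)$ is a prefix and $D^{j}(s)$ a suffix of $xxx$, and since $p,s\in\Ck$ with $j<k$ these derivatives are differentiable, hence lie in $\{\epsilon,x,xx\}$. As every word of length at most two has finite height and is therefore a $\C$-word, $D^{j}(p)$ and $D^{j}(s)$ are $\C$-words, and consequently so are $p$ and $s$. Combined with $w\notin\C$ (which follows from $w\notin\Ck$ and $\C\subseteq\Ck$), the characterization of minimal forbidden words gives $w\in\CC$ of height $j+1\le k$, completing the proof. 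I expect the main obstacle to be exactly these two points: the properness bookkeeping in the descent establishing $D^{j}(w)=xxx$, and the fact that it is this equality, rather than mere $\Ck$-membership, that forces $p$ and $s$ into $\C$.
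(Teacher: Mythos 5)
Your proof is correct, but it follows a genuinely different route from the paper's. The paper proceeds by induction on $k$: the base case computes $\MF(\textbf{C}^{1})=\{111,222\}$ directly, and the inductive step asserts $\CCk\subset\CCkplus$ and then shows every word of $\CCkplus\setminus\CCk$ has height exactly $k+1$, with the same minimality argument you use ($D^{k}(w)$ must equal $xxx$, not merely contain it). You instead prove the two inclusions directly for each fixed $k$, with no induction. The two approaches share the core step (properness propagates through the run-length preimages of Remark~\ref{rem:D}, forcing $D^{j}(w)=xxx$ on the nose --- you correctly note that Lemma~\ref{lem:height} cannot be cited as stated, since it presupposes $w\in\CC$, and must be re-proved in the $\Ck$ setting), but they differ in what they make explicit. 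The paper's induction dovetails with Lemma~\ref{lem:P-constr}, building the family $\{\CCk\}_{k}$ incrementally in the same way the tries $\mathcal{T}(k)$ are built, which is what the later automaton construction needs; on the other hand its claim ``clearly $\CCk\subset\CCkplus$'' is terse, since a word of $\MF(\Ck)$ has proper factors in $\Ck$, and one still needs them in $\textbf{C}^{k+1}$ --- in the paper this is absorbed by the induction hypothesis. Your second key step supplies exactly this upgrade explicitly and without induction: from $D^{j}(w)=xxx$ you get $D^{j}(p),D^{j}(s)\in\{\epsilon,x,xx\}$, and words of length at most two are $\C$-words, so $p,s\in\C$, not merely $\Ck$. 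That observation is the real content hidden behind the paper's ``clearly,'' and your version arguably localizes it better; the price is that you redo the height argument inline rather than inheriting it from the inductive structure.
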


\begin{proof}
 By induction on $k$. Let first $k=1$. We have to prove that the set of minimal forbidden words for the set of $1$-differentiable words is equal to the set of minimal forbidden words of height $1$.
 By definition, a minimal forbidden word of height $1$ is a word $w$ such that $w$ is not differentiable but every proper factor of $w$ is. This directly leads to $\MF(\textbf{C}^{1})=\{111,222\}$, and  proves the basis step of the induction.
 
 Suppose now that the claim holds true for $k\ge 1$. By definition, $\CCkplus$ is the set of words $w$ such that $w$ is not ($k+1$)-differentiable, but every proper factor of $w$ is. Clearly, since a ($k+1$)-differentiable word is also a $k$-differentiable word, we have $\CCk\subset \CCkplus$. By induction hypothesis, $\CCk$ is the set of minimal forbidden words of height less than or equal to $k$. It remains to prove that every word in $\CCkplus\setminus \CCk$ has height equal to $k+1$. Let $w\in \CCkplus\setminus \CCk$. Then $w$ is $k$-differentiable but not $k+1$ differentiable. Then, by definition, $D^{k}(w)$ contains $xxx$ as factor, for some letter $x\in \Sigma$. By the minimality of $w$, it follows that $D^{k}(w)=xxx$, and hence $w$ has height $k+1$.
\end{proof}

The following lemma gives a constructive characterization of  the sets $\CCk$.

\begin{lemma}\label{lem:P-constr}
Let $k>0$ and $P_{k+1}$ be the set of words $v$ such that $v$ is a primitive of minimal length of $u$ and $u$ is a minimal forbidden word of height $k$. Then one has
$$\CCkplus=\CCk \cup P_{k+1}.$$
\end{lemma}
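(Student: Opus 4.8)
The plan is to combine this statement with Lemma \ref{lem:mfw}, which identifies $\CCkplus$ (resp.\ $\CCk$) with the set of minimal forbidden words of height $\le k+1$ (resp.\ of height $\le k$). Since height $\le k$ implies height $\le k+1$, Lemma \ref{lem:mfw} gives $\CCk\subseteq\CCkplus$, so the claimed equality is equivalent to the single identity
\[\{\,w\in\CC : w \mbox{ has height } k+1\,\}=P_{k+1},\]
because the left-hand side is exactly $\CCkplus\setminus\CCk$, while $P_{k+1}$ (once we know $P_{k+1}\subseteq\CC$) consists of words of height $k+1>k$, hence is disjoint from $\CCk$. I would therefore prove the two inclusions of this identity.

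First I would show $P_{k+1}\subseteq\CCkplus$. Let $v$ be a minimal-length primitive of a word $u\in\CC$ of height $k$, so that $D(v)=u$. Then $D^{k}(v)=D^{k-1}(u)=xxx$, and $v\notin\C$ because $D(v)=u\notin\C$; it only remains to check the minimality condition, i.e.\ that the prefix and the suffix of $v$ of length $|v|-1$ lie in $\C$. Here I would use the explicit shape of a minimal-length primitive: its trailing run is governed by the last letter of $u$, being a length-$2$ run when that letter is $2$, and a length-$1$ run followed by a length-$1$ cap when that letter is $1$. In either case, deleting the last letter of $v$ creates a length-$1$ final run which $D$ then trims, so that the derivative of the shortened word equals the prefix of $u$ of length $|u|-1$. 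By minimality of $u$ this prefix is a $\C$-word, and since the shortened word is itself differentiable (its runs still have length $\le 2$), it too is a $\C$-word. The suffix case follows by reversal (Proposition \ref{prop:propertiesC}).

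Next I would show $\CCkplus\setminus\CCk\subseteq P_{k+1}$. Let $w\in\CC$ have height $k+1$ and set $u:=D(w)$, so $D^{k-1}(u)=xxx$. I would first check that $u\in\CC$ has height $k$: it is forbidden since $D^{k-1}(u)=xxx\notin\C$, and it is minimal because, by the converse part of Remark \ref{rem:D}, every proper prefix (resp.\ suffix) of $u$ is the $D$-image of a proper prefix (resp.\ suffix) of $w$, which lies in $\C$ as $w$ is minimal forbidden, and the $D$-image of a $\C$-word is again a $\C$-word. It then remains to see that $w$ is a minimal-length primitive of $u$. The key observation is that $w$ cannot begin with a length-$1$ run immediately followed by a length-$2$ run: deleting that first letter would leave a proper factor $w''$ of $w$ with $D(w'')=D(w)=u\notin\C$, whence $w''\notin\C$, contradicting the minimality of $w$. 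Together with the analogous statement at the right end (by reversal) and with the constraint $D(w)=u$, this forces the first and last runs of $w$ to agree with those of a minimal-length primitive of $u$; hence $w$ is one of the two minimal-length primitives of $u$, and $w\in P_{k+1}$.

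I expect the main obstacle to be the run-level bookkeeping of how $D$ acts at the two ends of a word: precisely, establishing that deleting an end letter of a minimal-length primitive trims a newly created length-$1$ run (used in the first inclusion), and that a minimal forbidden word cannot carry a superfluous length-$1$ end run in front of a length-$2$ run (used in the second). Both reduce to a short case analysis on the first and last letters of $u$ together with the first and last run lengths of the primitive, which I would carry out explicitly.
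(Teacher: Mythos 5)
Your proposal is correct and takes essentially the same route as the paper: the paper's own proof is a two-line compression of your argument, invoking Lemma \ref{lem:mfw} to identify $\CCkplus\setminus\CCk$ with the height-$(k+1)$ words of $\CC$ and then asserting that, by minimality, these must be minimal-length primitives of words in $\CCk$. Your run-level case analysis at the two ends of the word, and your explicit verification of the converse inclusion $P_{k+1}\subseteq\CCkplus$, simply fill in details that the paper leaves implicit.
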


\begin{proof}
By Lemma \ref{lem:mfw}, the set $\CCkplus\setminus \CCk$ is the set of minimal forbidden words having height equal to $k+1$, so its elements are primitives of words in $\CCk$. By minimality, they must be primitives of minimal length. 
\end{proof}

\begin{remark}
Since every minimal forbidden word of height $k$ gives exactly two minimal forbidden words of height $k+1$ (one being the complement of the other) we have, for any $k>0$, $|\CCk|=\sum_{i=1}^{k}2^{i}=2^{k+1}-2$.
\end{remark}

The sets $\CCk$, for the first values of $k$, are reported in Table \ref{tab:mfw}.

Another characterization of minimal forbidden words is the following.

\begin{lemma}\label{lem:charmfw}
 The word $xuy$, $x,y\in \Sigma$, $u\in \Sigma^{*}$, belongs to $\CC$ if and only if the word $\overline{x}u\overline{y}$ is a minimal $\C$-word and has root $1$. 
 \end{lemma}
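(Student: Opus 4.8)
The plan is to reduce the whole statement to a single base case by following the behaviour of the derivative on full two-sided extensions of $u$. By Remark \ref{rem:mfw}, the hypothesis $xuy\in\CC$ is equivalent to saying that $u$ is a single-rooted maximal word with $xuy\notin\C$ while $xu,uy\in\C$; in particular $u$ is both left and right maximal. So throughout I may assume $u$ is maximal and study the four full extensions $\{1,2\}\,u\,\{1,2\}$, tracking which one is forbidden and what the boundary-flip $xuy\mapsto\overline{x}u\overline{y}$ does.

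The key technical step, which I expect to be the main obstacle, is a derivative correspondence for extensions of a maximal word. Writing $b$ and $c$ for the first and last letters of $u$ (both isolated runs, since $u$ is left and right maximal) and $v=D(u)$, a direct run-length computation should give
\[ D(xuy)=\phi(x)\,v\,\psi(y),\qquad \phi(b)=2,\ \phi(\overline{b})=1,\ \psi(c)=2,\ \psi(\overline{c})=1. \]
Only left/right maximality of $u$ is used here, not single-rootedness, so by Theorem \ref{theor:Weakley} (each $D^{j}(u)$ is again maximal) the formula is available at every level; moreover all these extension words have runs of length at most two, hence are differentiable, so the chain of equivalences $W\in\C\Leftrightarrow D(W)\in\C$ is legitimate at each level. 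Since $\phi(\overline{x})=\overline{\phi(x)}$ and $\psi(\overline{y})=\overline{\psi(y)}$, the formula shows that the boundary-flip commutes with $D$: one has $D(\overline{x}u\overline{y})=\overline{\phi(x)}\,v\,\overline{\psi(y)}$. Iterating down to the root level $D^{k-1}(u)=r$ (where $k$ is the height of $u$), the extension $xuy$ is sent to a full extension $s\,r\,t$ of the single letter $r$, and $\overline{x}u\overline{y}$ to its flip $\overline{s}\,r\,\overline{t}$.

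For the forward direction I then invoke the base case: among the extensions of $r$ the only forbidden word is $rrr$, so $xuy\notin\C$ iff $s=t=r$, and the flip is $\overline{r}r\overline{r}$, which lies in $\C$ and has root $1$ (indeed $D(\overline{r}r\overline{r})=1$). Hence $\overline{x}u\overline{y}\in\C$ has root $1$ and height $k+1$. Minimality follows from Remark \ref{rem:fact}: every derivative $D^{j}(\overline{x}u\overline{y})=\overline{s_{j}}\,D^{j}(u)\,\overline{t_{j}}$ of length greater than two begins and ends with two distinct symbols, because $D^{j}(u)$ does by Theorem \ref{theor:Weakley}, so its length-three prefix and suffix can be neither $122,211$ nor $221,112$ — exactly the criterion for left and right minimality.

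For the converse, let $w=\overline{x}u\overline{y}$ be minimal of root $1$. Right (resp.\ left) minimality together with Lemma \ref{lem:maxmin1} makes $\overline{x}u$ right maximal and $u\overline{y}$ left maximal, whence $u$, being a suffix (resp.\ prefix) of a maximal word, is itself maximal. The root-$1$ hypothesis then forces $u$ to be single-rooted: were $u$ double-rooted, the same correspondence would send $w$ to a full extension of the double root $D^{k-1}(u)\in\{12,21\}$, and a short direct check shows that none of these has root $1$. Finally, among the four extensions of a single-rooted maximal $u$ the correspondence produces one forbidden word, one word of root $1$, and two words of root $2$; so the unique extension of root $1$ is the flip of the forbidden one, forcing $xuy$ to be forbidden, and Remark \ref{rem:mfw} gives $xuy\in\CC$. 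The only delicate points are the boundary bookkeeping in the derivative formula and checking that the discarded length-one runs behave uniformly in both directions.
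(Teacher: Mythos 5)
Your proposal is correct and takes essentially the same route as the paper: your boundary correspondence $D(xuy)=\phi(x)\,D(u)\,\psi(y)$ with $\phi(\overline{x})=\overline{\phi(x)}$ is precisely the paper's Lemma \ref{lem:lengthD} (re-derived inline and iterated using Theorem \ref{theor:Weakley}), and the surrounding bookkeeping --- Remark \ref{rem:mfw} for the structure of words in $\CC$, Lemma \ref{lem:maxmin1} and Remark \ref{rem:fact} for transferring maximality/minimality --- matches the paper's own proof. The only differences are presentational: the paper invokes Lemma \ref{lem:height} to get $D^{j}(xuy)=zzz$ in one step and concludes minimality directly from Lemma \ref{lem:maxmin1}, whereas you reprove the correspondence, classify the four root-level extensions by hand, and insert a harmless (and unneeded) single-rootedness detour in the converse.
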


\begin{proof}
Suppose that $xuy$, $x,y\in \Sigma$, $u\in \Sigma^{*}$, belongs to $\CC$. Then, by Lemma \ref{lem:height}, there exists $j$ such that $D^{j}(xuy)=zzz$ for a $z\in \Sigma$. By Remark \ref{rem:mfw}, $u$ is a single-rooted maximal word. By Lemma \ref{lem:lengthD}, then, $D^{j}(\overline{x}u\overline{y})=\overline{z}z\overline{z}$, and thus $D^{j+1}(\overline{x}u\overline{y})=1$. Finally, $\overline{x}u\overline{y}$ is a minimal word since $u$ is a maximal words (Lemma \ref{lem:maxmin1}).

Conversely, let $\overline{x}u\overline{y}$ be a minimal $\C$-word of root $1$. Hence there exists $j>0$ such that $D^{j}(\overline{x}u\overline{y})=\overline{z}z\overline{z}$, for a $z\in \Sigma$. By Lemma \ref{lem:lengthD}, we have that $D^{j}(xuy)=zzz$, so $w\notin \C$. Moreover, since $\overline{x}u\overline{y}$ is a minimal word, the word $u$ is maximal (Lemma \ref{lem:maxmin1}), and then, by Theorem \ref{theor:Weakley}, $u$ is left doubly extendable and right doubly extendable. Therefore, both $uy$ and $xu$ are $\C$-words, proving thus that $xuy$ is a minimal forbidden word.
\end{proof}





\begin{table}[h]
\begin{center}
  \begin{tabular}{| c | c | c | c | c |}
  
    $\MF(\textbf{C}^{1})$   & $\MF(\textbf{C}^{2})$ &  $\MF(\textbf{C}^{3})$ \\    \hline
          &           & 111             \\
          &           & 222             \\   
          &           & 21212           \\
          &           & 12121           \\
          &   111     & 112211          \\
          &   222     & 221122          \\
     111  &   21212   & 11211211        \\
     222  &   12121   & 22122122        \\
          &   112211  & 212212212       \\
          &   221122  & 121121121       \\
          &           & 2121122121      \\
          &           & 1212211212      \\
          &           & 1122121122      \\
          &           & 2211212211      \\

    \hline
  \end{tabular}
\end{center}\caption{The sets of minimal forbidden words for $\textbf{C}^{1}$, $\textbf{C}^{2}$ and $\textbf{C}^{3}$.}\label{tab:mfw}
\end{table}

\begin{lemma}\label{lem:PrefMFW}
A $\C$-word is a proper prefix of some word in $\CC$ if and only it is a left minimal word.
\end{lemma}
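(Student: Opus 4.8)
The plan is to prove the two implications separately, the forward one being immediate and the converse carrying essentially all the weight. For the direction ``proper prefix of a forbidden word $\Rightarrow$ left minimal'', I would start from a $\C$-word $p$ that is a proper prefix of some $w\in\CC$. By Remark~\ref{rem:mfw} I may write $w=xuy$ with $x,y\in\Sigma$ and $u$ a single-rooted maximal word, and by Lemma~\ref{lem:charmfw} the word $\overline{x}u\overline{y}$ is minimal. Every proper prefix of $w=xuy$ is a prefix of $xu$, so $p$ is a prefix of $xu$. Now $\overline{x}u$ is a prefix of the minimal word $\overline{x}u\overline{y}$ and hence left minimal; since $u$ is left maximal, the ``moreover'' clause of Lemma~\ref{lem:maxmin1} lets me flip the first letter and conclude that $xu$ is left minimal as well. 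As a prefix of the left minimal word $xu$, the word $p$ is itself left minimal.

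For the converse I would first dispose of $p=\epsilon$ (a proper prefix of $111\in\CC$) and then, writing $p=ap'$ with $a\in\Sigma$, reduce the statement to a combinatorial claim about minimal words. Since $p$ is left minimal its tail $p'$ is left maximal, so by Lemma~\ref{lem:maxmin1} the word $L:=\overline{a}p'$ is also left minimal. If I can exhibit a \emph{root-$1$} minimal word $N$ beginning with $L=\overline{a}p'$ and long enough (say $|N|\ge|p|+1$), then setting $x=a$, $y=\overline{N[|N|-1]}$ and $u=N[1]\cdots N[|N|-2]$, so that $N=\overline{x}u\overline{y}$, Lemma~\ref{lem:charmfw} gives $w=xuy=auy\in\CC$; and because $N$ begins with $\overline{a}p'$ the core $u$ begins with $p'$, whence $p=ap'$ is a proper prefix of $w$. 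Thus the converse follows from the \emph{Claim}: every left minimal word is a prefix of a root-$1$ minimal word, and such a word can be taken arbitrarily long.

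I would prove the Claim by induction on the height $h$ of the left minimal word $L$. The base $h\le 1$ is a finite check: $\epsilon,1,12$ are prefixes of the root-$1$ minimal word $121$, and $2,21$ are prefixes of $212$. For the inductive step, $D(L)$ is a prefix of $D(M)$ for any minimal word $M$ having $L$ as a prefix (Remark~\ref{rem:D}); since the derivative of a minimal word is minimal and $D(M)$ has height $h-1$, the word $D(L)$ is left minimal of height $h-1$, so by induction it is a prefix of a long root-$1$ minimal word $N'$. I would then \emph{lift}, producing, via Lemma~\ref{lem:P-constr} and Remark~\ref{rem:D}, a minimal-length primitive $N$ of $N'$ together with a prefix of $N$ whose derivative is $D(L)$. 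A minimal-length primitive of a minimal word is again minimal and has the same root, so $N$ is a root-$1$ minimal word of height $h$; and the length can be kept arbitrary because a root-$1$ minimal word may be lengthened by a right simple extension, which preserves the root by Lemma~\ref{lem:ext}.

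The main obstacle is precisely this lifting step: Remark~\ref{rem:D} only guarantees that \emph{some} prefix of a primitive of $N'$ maps onto $D(L)$, whereas I need the primitive $N$ to reproduce $L$ \emph{itself} as a prefix. I expect to resolve this by pinning down the left end of the primitive. The datum $D(L)$ together with the first run of $L$ determines $L$, and the first letter can be forced to equal $L[0]$ by the ``moreover'' clause of Lemma~\ref{lem:maxmin1}, which provides exactly the two complementary first-letter choices when prepending to a left maximal word; meanwhile left minimality of $L$ constrains the length-three prefix of $L$ and of all its derivatives away from $122$ and $211$ (Remark~\ref{rem:fact}), ensuring compatibility between $L$ and the lifted word (via Theorem~\ref{theor:Weakley} and Lemma~\ref{lem:lengthD} controlling how the first letter propagates through the derivatives). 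Carrying out this alignment so that the root stays equal to $1$ and $L$ is recovered exactly as a prefix is where the real work lies.
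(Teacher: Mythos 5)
Your forward direction is correct and is essentially the paper's argument (the paper goes straight from ``$u$ is maximal'' to ``$xu$ is left minimal'' via Lemma~\ref{lem:maxmin1}; your detour through Lemma~\ref{lem:charmfw} and the ``moreover'' clause is valid but unnecessary). The converse, however, has a genuine gap, and you say so yourself: the alignment step you defer as ``where the real work lies'' is not a technical chore to be postponed --- it \emph{is} the lemma. Knowing that $D(L)$ is a prefix of $N'=D(N)$ and that $N[0]=L[0]$ does not yet give that $L$ is a prefix of $N$; one must argue that a left minimal word is uniquely determined among candidates by its derivative and its first letter (the size of the first run being forced at every level by left minimality, cf.\ Remark~\ref{rem:fact}). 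The paper closes exactly this point with a bootstrap your sketch lacks: it inducts on $|v|$, lifts the forbidden word $w$ containing $D(v)$ as a proper prefix to its two shortest primitives $w'$ and $\overline{w'}$, which by Lemma~\ref{lem:P-constr} are again minimal forbidden words; by the \emph{already-proven} forward implication every proper prefix of $w'$ and $\overline{w'}$ is left minimal, and $v$ must then coincide with a prefix of whichever of the two starts with $v[0]$. Until you prove the uniqueness/alignment statement, your induction does not go through.

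There is also a step in your patch that would actually fail: the length bookkeeping. By rerouting through root-$1$ minimal words instead of forbidden words you lose a freebie the paper gets: a minimal forbidden word is not a $\C$-word, so $v\neq w'$ and ``proper prefix'' is automatic, whereas your $N$ is a $\C$-word and could equal $L$ --- and if $N=L$ your assignment $u=N[1]\cdots N[|N|-2]$, $y=\overline{N[|N|-1]}$ makes $p$ fail to be a prefix of $xuy$, so you genuinely need $|N|\geq |p|+1$. Your proposed remedy, lengthening $N$ by a right simple extension ``which preserves the root by Lemma~\ref{lem:ext}'', is incorrect: Lemma~\ref{lem:ext} preserves height and root, but a proper simple extension of a minimal word is never minimal (the minimal word is the unique shortest element of its $\equiv_{\mathcal{CA}}$ class; for instance $121$ is root-$1$ minimal and also maximal, hence admits no proper simple extension at all, and every longer root-$1$ minimal word has strictly greater height). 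Since Lemma~\ref{lem:charmfw} requires $\overline{x}u\overline{y}$ to be \emph{minimal}, the lengthened word cannot be fed into it; the only way to lengthen while staying minimal is to take another minimal-length primitive, i.e., to invoke yet another instance of the very alignment step you left open.
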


\begin{proof}
Suppose that $v$ is a proper prefix of $w=xuy\in \CC$, $x,y\in \Sigma$. The word $u$ is a maximal word (Remark \ref{rem:mfw}). By Lemma \ref{lem:maxmin1}, $xu$ is then a left minimal word, and thus $v$, which is a prefix of $xu$, is a left minimal word. So the direct part of the statement is proved.

Conversely, let $v$ be a left minimal word. We prove that $v$ is a proper prefix of a minimal forbidden word by induction on $n=|v|$. The words $v=1$ and $v=2$ are proper prefixes respectively of $111$ and $222$, both belonging to $\CC$. So suppose that the claim holds true for every left minimal word of length smaller than $n>0$ and let $v$ be a left minimal word of length $n$. Consider the word $D(v)$. By Theorem \ref{theor:Weakley}, $D(v)$ is a left minimal word. Since $|D(v)|<|v|$, by inductive hypothesis $D(v)$ is a proper prefix of some minimal forbidden word $w$. The two shortest primitives of $w$ are minimal forbidden words (by Lemma \ref{lem:P-constr}). Denote them by $w'$ and $\overline{w'}$. For the direct part of the statement, the proper prefixes of $w'$ and $\overline{w'}$ are left minimal words. Now, $v$ must be a prefix of either $w'$ or $\overline{w'}$, and this completes the proof.
\end{proof}


For every $k>0$, let $\mathcal{T}(k)$ be the trie recognizing the anti-factorial language $\CCk$. We denote by $\mathcal{A}(k)$ the automaton constructed by the procedure {\sc L-automaton} on input $\mathcal{T}(k)$.

\begin{figure}
\begin{center}
\includegraphics[height=90mm]{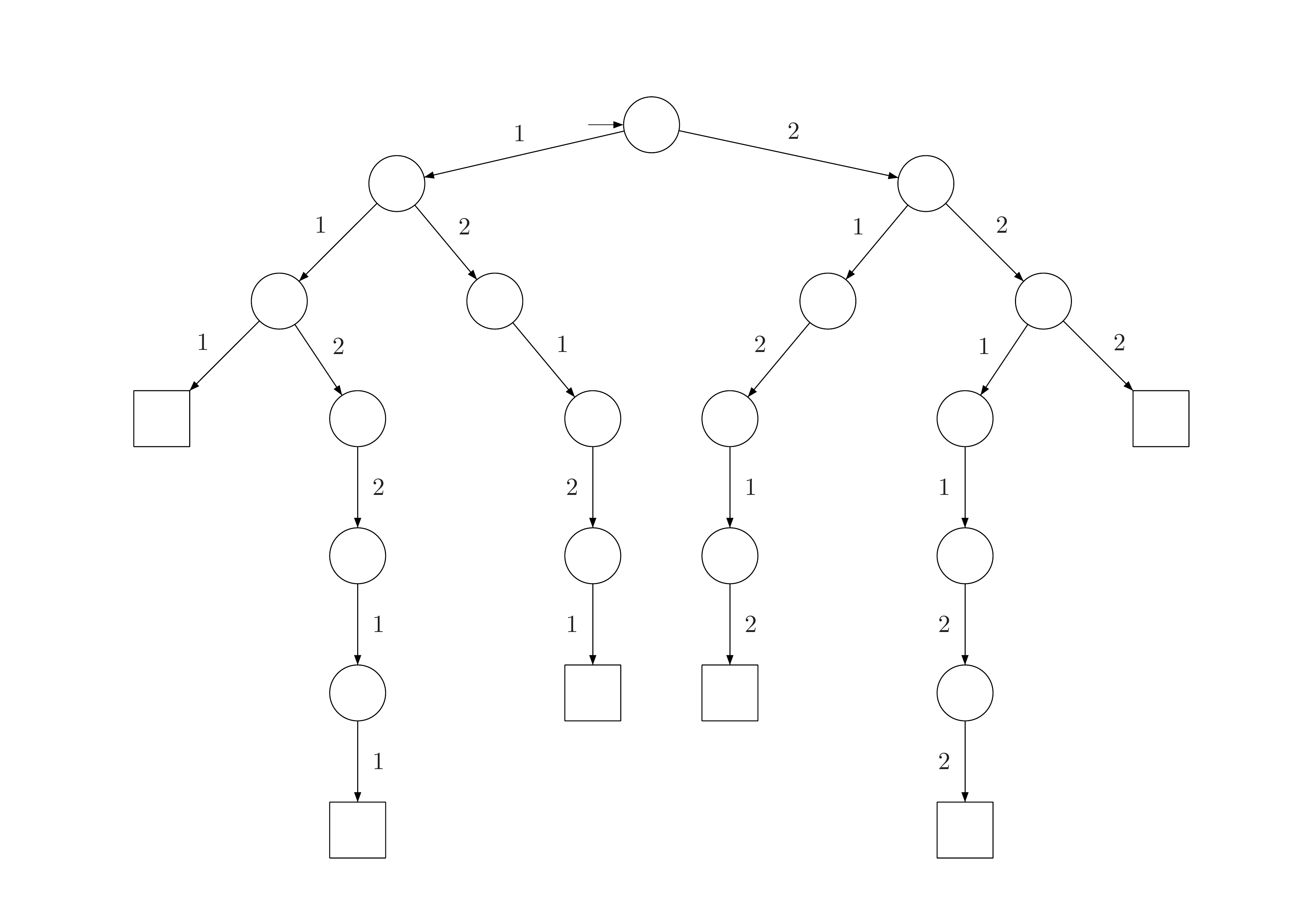}
\caption{The trie of $\MF(\textbf{C}^{2})$.}
\label{fig:Trie}
\end{center}
\end{figure}

\begin{theorem}\label{theor:Ak}
For every $k>0$, $\mathcal{A}(k)$ is a deterministic automaton recognizing the language $\Ck$.
\end{theorem}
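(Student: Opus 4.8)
The plan is to recognize this theorem as a direct specialization of the general correctness result for the {\sc L-automaton} construction (Corollary \ref{cor:Laut}), so that almost all of the work has already been carried out in the preceding lemmas. The three hypotheses I need to verify are: that $\Ck$ is a factorial language, that $\CCk$ is exactly the set of minimal forbidden words for $\Ck$, and that $\CCk$ is a finite anti-factorial language, so that the procedure {\sc L-automaton} genuinely applies to its trie $\mathcal{T}(k)$.

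First I would assemble these facts. That $\Ck$ is factorial is precisely Proposition \ref{prop:propertiesC}. The identification of $\CCk$ with the minimal forbidden words of $\Ck$ is the very definition of the notation $\CCk$. That $\CCk$ is anti-factorial is Proposition \ref{prop:propertiesCC}, and its finiteness follows from the structural description obtained in the preceding lemmas (Lemma \ref{lem:mfw} and Lemma \ref{lem:P-constr}, together with the count $|\CCk|=2^{k+1}-2$). Consequently $\mathcal{T}(k)$ is a genuine finite trie and the automaton $\mathcal{A}(k)=\mathcal{A}(\CCk)$ produced by {\sc L-automaton} is well-defined.

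Then I would simply invoke Corollary \ref{cor:Laut} with $L=\Ck$: since $\Ck$ is factorial and $\CCk$ is its set of minimal forbidden words, the automaton $\mathcal{A}(\CCk)$ accepts $\Ck$. As $\mathcal{A}(k)$ is by construction exactly this automaton, it recognizes $\Ck$. It remains to observe that $\mathcal{A}(k)$ is deterministic, which is immediate from the construction: {\sc L-automaton} assigns to each state $p$ and each letter $a\in\Sigma$ exactly one value $\delta(p,a)$ (through a solid edge, a weak edge, or the failure function), so $\delta$ is a well-defined total transition function.

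Since the combinatorial substance resides in the earlier results, I do not expect any genuinely hard step here. The only points requiring care are checking that the hypotheses of Corollary \ref{cor:Laut} hold, and in particular the finiteness of $\CCk$, which guarantees that the procedure terminates and outputs a finite-state automaton. Once these routine verifications are in place, the theorem follows in one line from the framework of Crochemore et al.
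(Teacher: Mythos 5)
Your proof is correct and takes essentially the same route as the paper, whose own two-line argument likewise observes that determinism holds by construction and that recognition of $\Ck$ follows from Corollary \ref{cor:Laut} together with Lemma \ref{lem:P-constr} (which, as you note, guarantees that $\CCk$ is finite and effectively constructible, so the trie $\mathcal{T}(k)$ is well-defined). Your version merely spells out the hypothesis checks — $\Ck$ factorial by Proposition \ref{prop:propertiesC}, $\CCk$ anti-factorial by Proposition \ref{prop:propertiesCC}, and $|\CCk|=2^{k+1}-2$ — that the paper leaves implicit.
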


\begin{proof}
The automaton $\mathcal{A}(k)$ is deterministic by construction. The fact that  $\mathcal{A}(k)$ recognizes the language $\Ck$ is a direct consequence of Corollary \ref{cor:Laut} and Lemma \ref{lem:P-constr}.
\end{proof}

\begin{figure}
\begin{center}
\includegraphics[height=90mm]{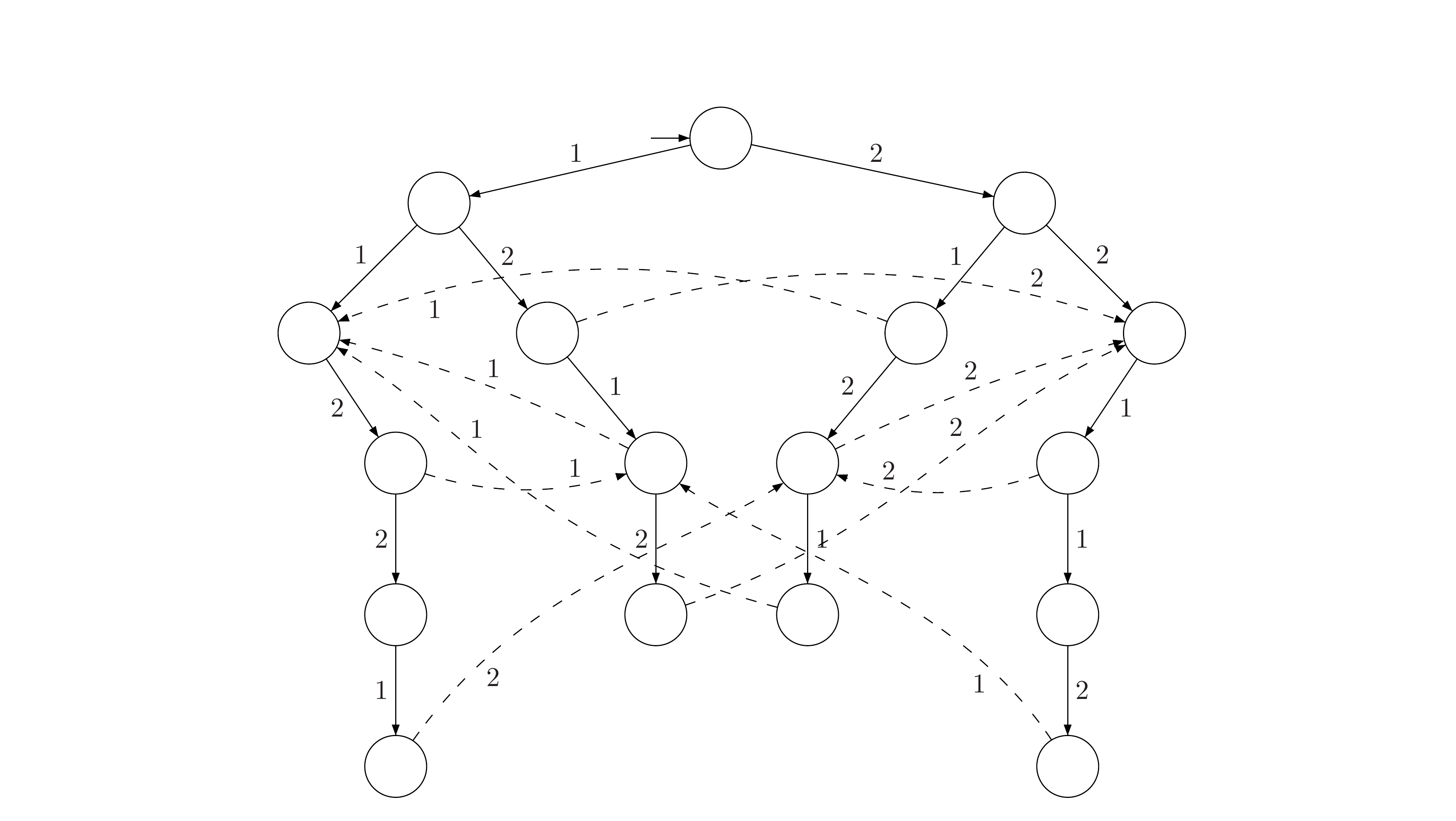}
\caption{The automaton $\mathcal{A}_{2}$ which recognizes the set $\textbf{C}^{2}$. All  states are terminal.}
\label{fig:Laut}
\end{center}
\end{figure}

Since, by Lemma \ref{lem:P-constr}, the construction of $\CCkplus$ from $\CCk$ is effective, we can inductively  extend the construction of the automaton $\mathcal{A}(k)$ to the case $k=\infty$. For this, consider the infinite trie $\mathcal{T}_{\infty}$ corresponding to the set $\CC$. Procedure {\sc L-automaton} on input $\mathcal{T}_{\infty}$ gives an infinite automaton $\A$ recognizing the words in $\C$, in the sense that any word in $\C$ is the label of a unique path in $\A$ starting at the initial state.

Let $\delta_{\mathcal{A}}$ denote the transition function of $\A$. As we already mentioned in Section \ref{sec:background}, the automaton $\A$ induces a natural equivalence on $\C$, defined by
$$u\equiv_{\mathcal{A}} v \hspace{4mm}\Longleftrightarrow \hspace{4mm} \delta_{\mathcal{A}}(\epsilon,u)=\delta_{\mathcal{A}}(\epsilon,v).$$ 
The class of a word $u$ with respect to the equivalence above will be denoted by $[u]_{\mathcal{A}}$. Let $u$ be a state of $\A$. Since $u$ is the shortest element of its class $[u]_{\mathcal{A}}$, we have that $u$ is a proper prefix of a word in $\CC$, and thus, by Lemma \ref{lem:PrefMFW}, $u$ is a left minimal word.

\begin{proposition}\label{prop:classA}
Let $u$ be a state of $\A$. Let $v\in \C$. Then $u\equiv_{\mathcal{A}} v$ if and only if $v$ is a left simple extension of $u$. 
\end{proposition}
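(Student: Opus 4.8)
The plan is to characterize the equivalence class $[u]_{\mathcal{A}}$ directly via Lemma~\ref{lem:classes}, which tells us that a word $v$ satisfies $\delta_{\mathcal{A}}(\epsilon,v)=u$ precisely when $u$ is the longest suffix of $v$ that is a state of $\A$, i.e.\ the longest suffix of $v$ that is a proper prefix of a word in $\CC$. By Lemma~\ref{lem:PrefMFW}, being a proper prefix of a word in $\CC$ is the same as being a left minimal word. Thus the statement $u\equiv_{\mathcal{A}}v$ is equivalent to saying that $u$ is the longest left minimal suffix of $v$. The entire proof therefore reduces to showing that, for a $\C$-word $v$ with longest left minimal suffix $u$, the word $v$ is a left simple extension of $u$, and conversely.

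First I would prove the forward direction. Assume $u\equiv_{\mathcal{A}}v$, so $u$ is the longest left minimal suffix of $v$. Write $v=w_{m}\cdots w_{1}u$ where the $w_{i}\in\Sigma$ are the letters prepended to $u$ (reading outward). I want to show that at each step the extension is forced, i.e.\ that prepending the opposite letter would fail to give a $\C$-word. The key tool is the contrapositive of left double extendability: if some prefix letter could have been chosen either way (i.e.\ both $xw$ and $\overline{x}w$ are $\C$-words for the relevant suffix $w$ of $v$ containing $u$), then by Theorem~\ref{theor:Weakley} that suffix $w$ would be left maximal; but by Lemma~\ref{lem:maxmin1} (and Remark~\ref{rem:fact}) a left maximal word extended by one letter on the left becomes a left minimal word longer than $u$ and still a suffix of $v$, contradicting the maximality of $u$ as the longest left minimal suffix. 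Hence every prepended letter is forced, which is exactly the defining condition of a left simple extension.

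For the converse, suppose $v$ is a left simple extension of $u$; I must show $u$ is the longest left minimal suffix of $v$, equivalently the longest suffix of $v$ that is a proper prefix of a word in $\CC$. That $u$ itself is left minimal holds because $u$ is a state of $\A$, hence a proper prefix of a word in $\CC$, hence left minimal by Lemma~\ref{lem:PrefMFW}. I then argue that no strictly longer suffix $u'$ of $v$ (with $u$ a proper suffix of $u'$) can be left minimal: if such $u'$ were left minimal, then by Lemma~\ref{lem:maxmin1} the word $u''$ obtained from $u'$ by deleting its first letter would be left maximal, so both its left-extensions are $\C$-words (Theorem~\ref{theor:Weakley}), contradicting the simplicity of the extension at the corresponding step. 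This pins down $u$ as the longest left minimal suffix, giving $\delta_{\mathcal{A}}(\epsilon,v)=u$.

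The main obstacle I anticipate is the careful bookkeeping in matching up ``the extension step where the choice was forced'' with ``the suffix that would be left maximal/left minimal,'' since left simple extension is defined by reading letters from the inside out ($w,wx_{1},wx_{1}\cdots$ in the reversed picture) while the suffix condition from Lemma~\ref{lem:classes} is stated globally. The clean way to manage this is to induct on $|v|-|u|$: the inductive step removes one outermost letter, invokes the one-letter analysis above via Lemma~\ref{lem:maxmin1} and Theorem~\ref{theor:Weakley}, and appeals to the inductive hypothesis for the shorter word. This keeps the correspondence between the simple-extension condition and the longest-left-minimal-suffix condition transparent and avoids re-deriving the combinatorial facts about maximal and minimal words, which are already available from the lemmas in Section~\ref{sec:smooth}.
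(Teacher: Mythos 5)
Your proposal is correct and follows essentially the same route as the paper's proof: both reduce the statement, via Lemma~\ref{lem:classes} and Lemma~\ref{lem:PrefMFW}, to the characterization of $\delta_{\mathcal{A}}(\epsilon,v)$ as the longest left minimal suffix of $v$, and then derive a contradiction from Theorem~\ref{theor:Weakley} (a doubly extendable intermediate suffix would be left maximal) together with Lemma~\ref{lem:maxmin1} (its one-letter left extension, still a suffix of $v$, would be a left minimal word longer than $u$). The only difference is that you make the converse implication explicit --- showing a left simple extension of a state $u$ has no left minimal suffix longer than $u$ and hence maps to $u$ --- whereas the paper's proof treats only the direction starting from $u=\delta_{\mathcal{A}}(\epsilon,v)$ and leaves the converse implicit.
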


\begin{proof}
Let $v\in \C$. By construction, the state $u=\delta(\epsilon, v)$ is the longest suffix of $v$ which is also a state of $\mathcal{T}_{\infty}$, and so, by Lemma \ref{lem:PrefMFW}, $u$ is the longest suffix of $v$ which is a left minimal word. 
Suppose by contradiction that $v$ is not a left simple extension of $u$. This implies that there exists a suffix $u'$ of $v$ such that $|u'|\geq |u|$ and $1u'$ and $2u'$ are both  $\C$-words. By Theorem \ref{theor:Weakley}, $u'$ is a left maximal word. By Lemma \ref{lem:maxmin1}, this would imply that $v'$ has a suffix $xu'$, $x\in \Sigma$, which is a left minimal word longer than $u$. The contradiction then comes from Lemma \ref{lem:classes} and Lemma \ref{lem:PrefMFW}.
\end{proof}

\begin{corollary}
Let $u$ be a left minimal word. Then 
$$[u]_{\mathcal{A}}=\{\Suff(v)\cap \Sigma^{\geq |u|},\mbox{ where $v$ is the left maximal extension of $u$}\}.$$
\end{corollary}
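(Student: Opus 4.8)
The plan is to reduce the statement to Proposition~\ref{prop:classA} together with a combinatorial description of left simple extensions. Since $u$ is a state of $\A$, it is a left minimal word, so by Proposition~\ref{prop:classA} a $\C$-word $w$ belongs to $[u]_{\mathcal{A}}$ if and only if $w$ is a left simple extension of $u$ (reading, as the proposition implicitly does, $u$ itself as the trivial left simple extension, so that $u\in[u]_{\mathcal{A}}$). It therefore suffices to show that the left simple extensions of $u$ are exactly the suffixes of the left maximal extension $v$ of length at least $|u|$, i.e. that the set of left simple extensions equals $\Suff(v)\cap\Sigma^{\ge|u|}$.

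First I would unwind the reversal-based definition of a left simple extension. Using that $\C$ is closed under reversal (Proposition~\ref{prop:propertiesC}), a left simple extension of $u$ is a word $x_m\cdots x_1 u$ obtained by prepending letters one at a time so that, writing $z_{i-1}=x_{i-1}\cdots x_1 u$, the condition $\overline{x_i}\,z_{i-1}\notin\C$ holds for every $i$. The key observation is that this process is \emph{deterministic}: since $x_i z_{i-1}$ is a suffix of the $\C$-word $w$ it lies in $\C$, and the condition $\overline{x_i}z_{i-1}\notin\C$ forces $z_{i-1}$ to be left simply extendable with $x_i$ its unique admissible letter. Consequently the whole sequence of prepended letters is determined by $u$ alone, the left simple extensions of $u$ form a chain for the suffix order $u,\,x_1 u,\,x_2 x_1 u,\dots$, and the chain stops precisely at the first left doubly extendable word, which by Theorem~\ref{theor:Weakley} and Remark~\ref{rem:extension} is the left maximal word $v=x_n\cdots x_1 u$.

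Both inclusions then follow from this chain structure. Any left simple extension of $u$ must use the same forced letters, hence coincides with an intermediate word $x_i\cdots x_1 u$, which is a suffix of $v$ of length $|u|+i\ge|u|$; conversely every suffix of $v$ of length at least $|u|$ has the form $x_i\cdots x_1 u$ and inherits the forced conditions for its first $i$ prependings from those of $v$, so it is a left simple extension of $u$. Combining with Proposition~\ref{prop:classA} yields $[u]_{\mathcal{A}}=\Suff(v)\cap\Sigma^{\ge|u|}$.

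The only genuinely substantive step is the determinism claim, namely that no word arising in a simple extension before the maximal one can be left doubly extendable, so that the forced letter is always unique; once this is settled the chain structure and the two inclusions are routine. A minor point to confirm is the existence of $v$, i.e. termination of the prepending process; this is already built into the definition of the left maximal extension, and can be justified from Lemma~\ref{lem:ext} together with the fact that each $\C$-word has at most eight primitives, which bounds the number of $\C$-words of a given height and hence the length of any simple extension of $u$.
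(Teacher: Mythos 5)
Your proof is correct and matches the paper's intent: the corollary is stated there without proof as an immediate consequence of Proposition~\ref{prop:classA}, and your chain argument (each prepended letter is forced because the intermediate suffix lies in $\C$ while its complementary extension does not, the chain terminates at the first left doubly extendable word, which is the left maximal extension by Remark~\ref{rem:extension} and Theorem~\ref{theor:Weakley}, and termination follows from Lemma~\ref{lem:ext}) supplies exactly the routine details the paper leaves implicit. One small inversion to fix in your opening: the hypothesis is that $u$ is left minimal, from which you should conclude via Lemma~\ref{lem:PrefMFW} that $u$ is a state of $\A$ (not the other way around), so that Proposition~\ref{prop:classA} applies.
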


We now describe the transitions of $\A$. Let $u$ and $v$ be two states of $\A$. By Lemma \ref{lem:PrefMFW}, $u$ and $v$ are left minimal words. Let $x\in \Sigma$. If $(u,x,v)$ is a solid edge, then clearly $v=u x$, by definition. The weak edges, created by procedure {\sc L-automaton}, are instead characterized by the following proposition.

\begin{proposition}\label{prop:weak}
Let $u$ and $v$ be two states of $\A$ and let $x\in \Sigma$. If the transition $(u,x,v)$ is a weak edge, then:

\begin{enumerate}
 \item $u$ is a left minimal and right maximal word and is double-rooted;
 \item $v$ is a minimal word and has root $2$.
\end{enumerate}
\end{proposition}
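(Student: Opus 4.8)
The plan is to unwind the definition of a weak edge and push everything down to the derivative structure of $ux$. By the description of $E_2$ together with the pruning convention, a weak edge $(u,x,v)$ satisfies: $u$ is a state, $ux$ is not a state, and $v$ is the longest suffix of $ux$ that is a state; by Lemma \ref{lem:PrefMFW} every state is a left minimal word, so $u$ and $v$ are left minimal. First I would verify that $ux\in\C$: if not, the shortest non-$\C$ suffix $w$ of $ux$ would be a minimal forbidden word (its proper factors being factors of $u$ or shorter suffixes, hence in $\C$), and any suffix of $ux$ longer than $w$ would contain $w$, so it could lie in $Q$ neither as a $\C$-word nor as a (non-minimal) forbidden word; thus $v=w$ would be a sink, contradicting the pruning. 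Hence $ux\in\C$, all suffixes of $ux$ lie in $\C$, so for them ``state'' means ``left minimal word'', and therefore $v$ is the longest left minimal proper suffix of $ux$, while $ux$ itself is \emph{not} left minimal (Lemma \ref{lem:PrefMFW}). Note also that $\delta_{\mathcal A}(\epsilon,ux)=v$ and $\delta_{\mathcal A}(\epsilon,v)=v$ by Lemma \ref{lem:classes}, so $ux\equiv_{\mathcal A}v$, a fact I will reuse.

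\textbf{Part (1).} I would prove right maximality first, then double-rootedness. If $u$ were not right maximal it would be right simply extendable (Theorem \ref{theor:Weakley}), so the unique $\C$-extension $ux$ would be a right simple extension of $u$, and by Lemma \ref{lem:ext} it would have the same height and root as $u$. Since a right simple extension only appends on the right, $D^{j}(u)$ is a prefix of $D^{j}(ux)$ for every $j$ (Remark \ref{rem:D}) and the two have the same height; a finite check at the top derivative levels (where $D^{j}(u)$ is the root, of length $\le 2$) then shows $ux$ inherits from $u$ the property that all its derivatives have length-three prefix neither $122$ nor $211$, i.e.\ $ux$ is left minimal (Remark \ref{rem:fact}), contradicting the reduction. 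Hence $u$ is right maximal. Now, as $ux$ is not left minimal, some $D^{j_0}(ux)$ of length $\ge 3$ begins with $122$ or $211$; take $j_0$ minimal. Left minimality of $u$ forces $|D^{j_0}(u)|\le 2$ (else $D^{j_0}(ux)$ would share $u$'s good prefix), and since $D^{j_0}(u)$ is a prefix of $D^{j_0}(ux)$ we get $D^{j_0}(u)\in\{\epsilon,1,2,12,21\}$. Because $ux$ has height at most $h+1$ (summary at the end of Section \ref{sec:smooth}, $h=$ height of $u$), we have $|D^{j}(ux)|\le 2$ for $j\ge h$, so $j_0<h$, and Lemma \ref{lem:lengthD} gives $|D^{j_0}(ux)|=|D^{j_0}(u)|+1\ge 3$. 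Thus $|D^{j_0}(u)|=2$ and $D^{j_0}(u)\in\{12,21\}$; as $D(12)=D(21)=\epsilon$, this is the root of $u$, so $u$ is double-rooted.

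\textbf{Part (2).} With $u$ double-rooted right maximal of root $y\overline{y}$ and height $h$, the summary at the end of Section \ref{sec:smooth} identifies $u1,u2$ as the two single-rooted right minimal words of height $h+1$, with $D^{h-1}$ equal respectively to $y\overline{y}y$ (root $1$, left minimal) and $y\overline{y}\overline{y}\in\{122,211\}$ (root $2$, not left minimal). Since $ux$ is the non-left-minimal extension, $ux$ is right minimal with root $2$. Right minimality is inherited by suffixes (the dual of the prefix argument, via Remark \ref{rem:fact} and Remark \ref{rem:D}), so $v$, a suffix of $ux$, is right minimal; being a state it is left minimal, hence $v$ is minimal. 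Finally, from $ux\equiv_{\mathcal A}v$ Proposition \ref{prop:classA} gives that $ux$ is a left simple extension of $v$, and Lemma \ref{lem:ext} then forces $v$ to have the same root as $ux$, namely $2$.

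\textbf{Main obstacle.} I expect the delicate point to be the derivative-length bookkeeping that isolates the unique ``bad'' level $j_0$, and especially the two finite case analyses at the top derivative levels: (i) showing that a right simple extension of a left minimal word preserving height and root stays left minimal, used in the non-right-maximal case; and (ii) ruling out $|D^{j_0}(u)|\le 1$ so as to force $D^{j_0}(u)\in\{12,21\}$. Everything rests on the fact that for a right maximal word appending a letter raises every derivative length by exactly one (Lemma \ref{lem:lengthD}), so a forbidden left factor $122$ or $211$ can be created only at the root level of a double-rooted word; making this ``only at the root'' statement airtight (controlling the short levels immediately below the root) is the part that will require the most care.
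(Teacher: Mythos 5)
Your proof is correct and follows essentially the same route as the paper's: the same reduction via Lemma \ref{lem:PrefMFW} (weak edge $\Rightarrow ux$ not left minimal), right maximality by contradiction through Theorem \ref{theor:Weakley}, double-rootedness by combining left minimality of $u$ with Lemma \ref{lem:lengthD} so that the forbidden prefix can only arise at the root level, and minimality of $v$ via Lemma \ref{lem:maxmin1} plus the suffix property. Your only deviations are local and add rigor rather than change the strategy: you explicitly verify $ux\in\C$ using the pruning and anti-factoriality (the paper leaves this implicit), and you derive the root of $v$ from $ux\equiv_{\mathcal{A}}v$ together with Proposition \ref{prop:classA} and Lemma \ref{lem:ext}, where the paper instead computes $D^{k}(v)=\overline{y}\,\overline{y}$ directly from $v$ being the longest left-minimal suffix of $ux$.
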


\begin{proof}
By Lemma \ref{lem:PrefMFW}, $u$ and $v$ are left minimal words. By procedure {\sc L-automaton}, since the transition $(u,x,v)$ is a weak edge, the word $ux$ is not a word in the trie $\mathcal{T}_{\infty}$. So $ux$ is not a proper prefix of a minimal forbidden word. Then, by Lemma \ref{lem:PrefMFW}, $ux$ is not left minimal. 

Let us prove that $u$ is right maximal. By contradiction, if $u$ were not right maximal, then by Theorem \ref{theor:Weakley}, $ux$ would be a right simple extension of $u$, and so $u\overline{x}\notin \C$. This would imply that $ux$ is a word in the trie $\mathcal{T}_{\infty}$, and then that the transition $(u,x,v)$ is a solid edge, a contradiction.

We now prove that $u$ is double rooted. Suppose by contradiction that, for an integer $k>0$, one has $D^{k}(u)=y\in \Sigma$. By Lemma \ref{lem:lengthD}, then, $D^{k}(ux)=yy$ or $D^{k}(ux)=y\overline{y}$. In both cases, since $u$ is left minimal, this would imply that $ux$ is also left minimal. But $ux$ is not a state of $\A$, since the transition $(u,x,v)$ is a weak edge. So, by Lemma \ref{lem:PrefMFW}, $ux$ cannot be a left minimal word. 

The word $v$ is left minimal because it is a state of $\A$ (by Lemma \ref{lem:PrefMFW}). Moreover $v$ is a suffix of $ux$, by Lemma \ref{lem:classes}, and $ux$ is a right minimal word, since $u$ is a right maximal word (Lemma \ref{lem:maxmin1}). So $v$ is also right minimal, and then it is a minimal word. 

It remains to prove that $v$ has root equal to $2$. Since $u$ has been proved to be double-rooted, there exists $k>0$ such that $D^{k}(u)=y\overline{y}$, for a $y\in \Sigma$. By Lemma \ref{lem:lengthD}, we have that $D^{k}(ux)=y\overline{y}y$ or $D^{k}(ux)=y\overline{y}\overline{y}$. In the first case $ux$ would be a left minimal word (and we proved that this is not possible), so the second case holds. Since $v$ is the longest suffix of $ux$ which is also a left minimal word, it follows that $D^{k}(v)=\overline{y}\overline{y}$, and so $D^{k+1}(v)=2$.
\end{proof}



We can compact the automaton $\A$ by using a standard method for compacting automata, described below. We obtain a \emph{compacted version} of $\A$, denoted $\CA$.

Let $u$ be a state of $\A$ such that $u$ is a right minimal word (and thus $u$ is a minimal word since, by Lemma \ref{lem:PrefMFW}, $u$ is also a left minimal word). Let $ux_{1}x_{2}\cdots x_{n}$, $x_{i}\in \Sigma$, be the right maximal extension of $u$. This means that for every $i$, $1\leq i<n$, the transition $(ux_{1}x_{2}\cdots x_{i},x_{i+1},ux_{1}x_{2}\cdots x_{i+1})$ is the unique edge outgoing from $ux_{1}x_{2}\cdots x_{i}$ in $\A$.  The procedure for obtaining $\CA$ from $\A$ consists in identifying the states belonging to the right maximal extensions of right minimal words. For each right minimal word $u$ in $\mathcal{T}_{\infty}$, identify all the states of its right maximal extension, and replace the transitions of the right maximal extension of $u$ with a single transition $(u,x_{1}x_{2}\cdots x_{n},ux_{1}x_{2}\cdots x_{n})$ labeled with the concatenation of the labels of the transitions in the right maximal extension of $u$. In this way, there are exactly two edges outgoing from each state (either two solid edges or one solid edge and one weak edge).

If in $\A$ there is a weak edge $(u,x,v)$, and $v'$ is the right maximal extension of $v$, then in $\CA$ there will be a weak edge $(u,x,v')$. The label of this weak edge is then set to be the same word labeling the (unique) solid edge ingoing to $v'$ in $\CA$.

A partial diagram of the automaton $\CA$ is depicted in Figure \ref{fig:CA}.

\begin{figure}
\begin{center}
\includegraphics[height=180mm]{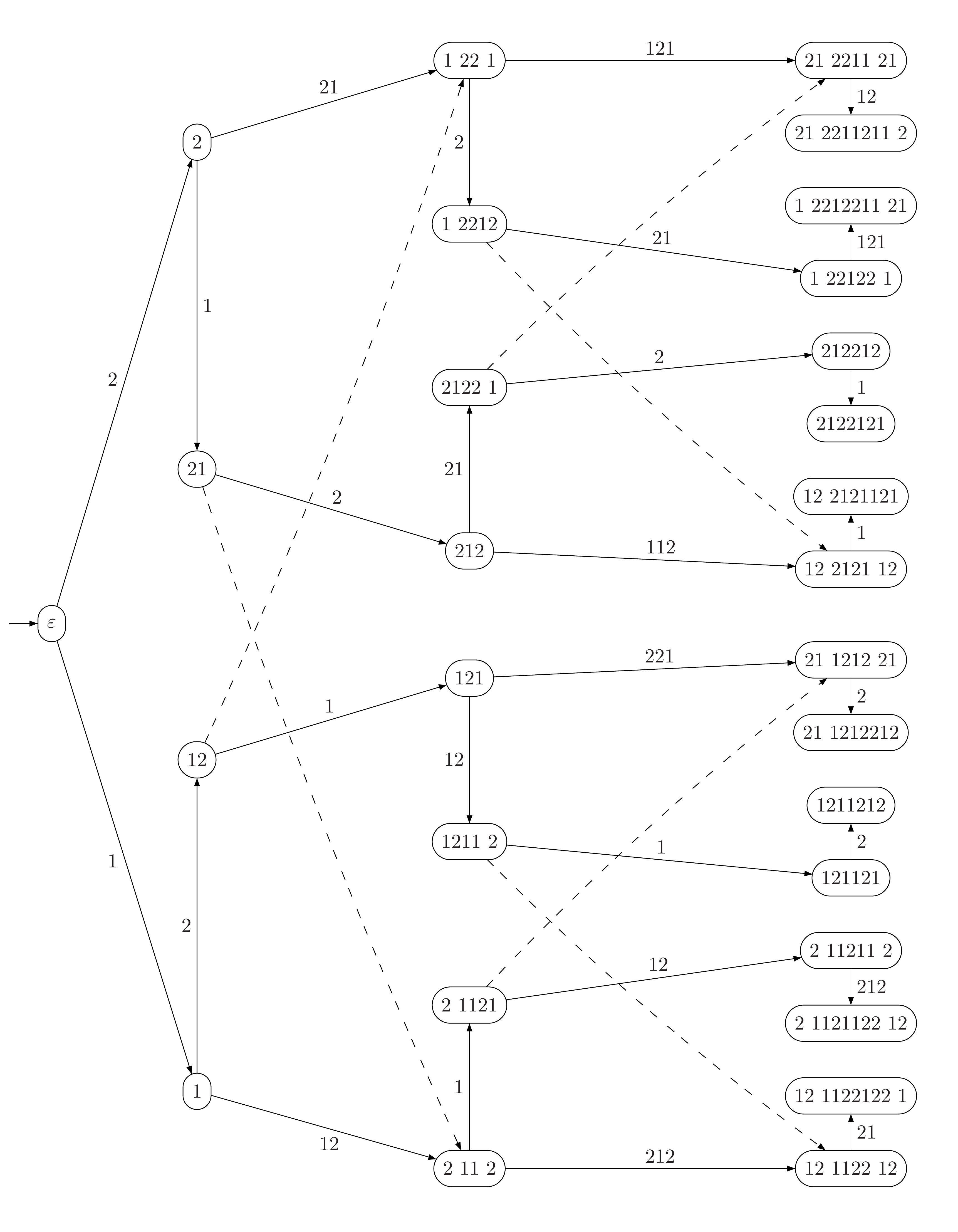}
\caption{The automaton $\CA$ cut at height 3. The labels of weak edges are omitted. All  states are terminal.}
\label{fig:CA}
\end{center}
\end{figure}

The automaton $\CA$ induces on $\C$ a new equivalence, defined by
\begin{center}
\begin{tabular}{p{30mm}l@{ }}
$u\equiv_{\mathcal{CA}} v\hspace{4mm} \Longleftrightarrow$   & $\delta_{\mathcal{A}}(\epsilon,u)$ and $\delta_{\mathcal{A}}(\epsilon,v)$ belong to the right \\
       &     maximal extension of the same state.
\end{tabular} 
\end{center}

\begin{proposition}\label{prop:CAclass}
Let $u,v\in \C$. Then $u\equiv_{\mathcal{CA}} v$ if and only if $u$ and $v$ have the same maximal extension.
\end{proposition}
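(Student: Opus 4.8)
The plan is to reduce the equivalence $\equiv_{\mathcal{CA}}$ to an equality of maximal extensions in two stages, using Proposition~\ref{prop:classA} and the construction of $\CA$. Write $s(w)=\delta_{\mathcal{A}}(\epsilon,w)$ for the state reached by a $\C$-word $w$; by Lemma~\ref{lem:PrefMFW} this is the longest left minimal suffix of $w$. By the way $\CA$ is built, every state lies on the right maximal extension of a unique right minimal (hence minimal) word; call the one carrying $s(w)$ the \emph{anchor} $m(w)$. Since the right maximal extensions of distinct minimal words are exactly the disjoint blocks collapsed by the compaction, $u\equiv_{\mathcal{CA}} v$ holds precisely when $m(u)=m(v)$. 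Thus the statement to prove becomes: $m(u)=m(v)$ if and only if $u$ and $v$ have the same maximal extension $M(\cdot)$.

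First I would establish the key identity $M(w)=M(m(w))$. This rests on the invariance of the maximal extension under simple extensions: if $w'$ is a left (resp.\ right) simple extension of $w$, then $w$ and $w'$ have the same left (resp.\ right) maximal extension, because the forced-extension process is deterministic and $w'$ is merely an intermediate stage of it; hence $w$ and $w'$ share the same maximal extension. By Proposition~\ref{prop:classA}, $w$ is a left simple extension of $s(w)$, so $M(w)=M(s(w))$; and by the definition of $\CA$, $s(w)$ is a right simple extension of its anchor $m(w)$, so $M(s(w))=M(m(w))$. Combining gives $M(w)=M(m(w))$, which immediately yields the forward implication: $m(u)=m(v)\Rightarrow M(u)=M(m(u))=M(m(v))=M(v)$.

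For the converse I must show that the maximal extension determines the anchoring minimal word, i.e.\ that $m\mapsto M(m)$ is injective on minimal words; this is the step I expect to be the main obstacle. I would argue by induction on the height $k$ of $M(m)$, which by Lemma~\ref{lem:ext} equals the height of $m$. The base case $k=1$ is immediate, since height-one words are simultaneously minimal and maximal, whence $M(m)=m$. For the inductive step I would use that $D$ commutes with the maximal extension, $D(M(m))=M(D(m))$ (a maximal word admits no proper simple extension, while $D$ sends maximal words to maximal words and, by the argument in the proof of Lemma~\ref{lem:ext}, simple extensions to simple extensions), together with the fact that $D$ sends minimal words to minimal words. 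Given minimal $m_1,m_2$ with $M(m_1)=M(m_2)$, applying $D$ and the induction hypothesis gives $D(m_1)=D(m_2)$; then $m_1,m_2$ are both primitives of minimal length of this common word, hence equal or mutually complementary. The complementary case is excluded because $M(\overline{m})=\overline{M(m)}$ and no word over $\{1,2\}$ equals its complement. Hence $m_1=m_2$.

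The converse then follows: $M(u)=M(v)\Rightarrow M(m(u))=M(m(v))\Rightarrow m(u)=m(v)\Rightarrow u\equiv_{\mathcal{CA}} v$. The only delicate points to isolate as preliminary observations are that the right maximal extensions of distinct minimal words genuinely partition the states of $\A$ (so that $m(w)$ is well defined and $\equiv_{\mathcal{CA}}$ is exactly the relation $m(u)=m(v)$), which follows from the out-degree analysis underlying the compaction, and the commutation $D(M(w))=M(D(w))$, which drives the induction in the hardest step.
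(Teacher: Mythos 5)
Your proof is correct, and while its forward half follows the paper's route, its converse half is genuinely new material. The paper's own proof is four lines long: writing $u'=\delta_{\mathcal{A}}(\epsilon,u)$ and $v'=\delta_{\mathcal{A}}(\epsilon,v)$, it invokes Proposition~\ref{prop:classA} to see that $u,v$ are left simple extensions of $u',v'$, notes that by construction of $\CA$ these two states are right simple extensions of a common (left minimal) word $w$, and concludes that $u$ and $v$ are both factors of the maximal extension of $w$ --- which is exactly your forward implication $m(u)=m(v)\Rightarrow M(u)=M(v)$, resting on the same two ingredients plus the same tacit invariance of the maximal extension under simple extensions that you justify via the determinism of the forced-extension process. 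The ``if'' direction (same maximal extension implies $u\equiv_{\mathcal{CA}}v$) is not argued in the paper at all; it is silently absorbed into the subsequent unproved remark that each class $[u]_{\mathcal{CA}}$ contains a unique minimal element. You supply an actual proof of it: injectivity of $m\mapsto M(m)$ on minimal words, by induction on height, driven by the commutation $D(M(m))=M(D(m))$ (maximal words admit no proper simple extension, $D$ preserves maximality by index-shifting in the definition, and by the argument inside the proof of Lemma~\ref{lem:ext} it sends simple extensions to simple extensions), with the complementary case $m_2=\overline{m_1}$ excluded since $M(\overline{m})=\overline{M(m)}$ and no nonempty word over $\{1,2\}$ is its own complement. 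Your other preliminary --- that the right maximal extensions of distinct minimal words partition the states of $\A$, so the anchor $m(w)$ is well defined and $\equiv_{\mathcal{CA}}$ is precisely the relation $m(u)=m(v)$ --- is likewise assumed without comment in the paper's compaction step; it does hold, and can be checked from Lemma~\ref{lem:maxmin1} (existence: the shortest prefix $m$ of a state $u$ of which $u$ is a right simple extension is right minimal; uniqueness: two nested such minimal prefixes would force a right maximal word to be right simply extendable). In short: same skeleton where the paper has a proof, and where the paper has a gap your commutation-plus-injectivity lemma fills it, at the cost of extra machinery the paper never states.
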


\begin{proof}
Let $u,v\in \C$ and let $u'=\delta(\epsilon,u), v'=\delta(\epsilon,v)$. Then, by Proposition \ref{prop:classA}, $u$ is a left simple extension of $u'$ and $v$ is a left simple extension of $v'$. On the other hand, by definition of $\CA$, $u'$ and $v'$ are right simple extensions of the same word $w$, that, by Lemma \ref{lem:PrefMFW}, is a left minimal word. Thus, $u$ and $v$ are left simple extensions of right simple extensions of $w$, i.e., they are both factors of the maximal extension of $w$.
\end{proof}

So,  each state of $\CA$ can be identified with a class of words having the same maximal extension. We denote by $[u]_{\mathcal{CA}}$ the class of $u$ with respect to the equivalence $\equiv_{\mathcal{CA}}$.  Every class $[u]_{\mathcal{CA}}$ contains a unique shortest element $u$, which is a minimal word. The other elements in $[u]_{\mathcal{CA}}$ are the simple extensions of $u$, to the left and to the right, up to the maximal extension of $u$, which is a maximal word (by Remark \ref{rem:extension}). By Lemma \ref{lem:ext}, all the words belonging to the same class with respect to the equivalence $\equiv_{\mathcal{CA}}$ have the same height and the same root. Therefore, we can unambiguously define the height and the root of a state in $\CA$.

\begin{remark}\label{rem:CAstates}
 For every $k>0$, there are $2^{k}$ states of height $k$ in $\CA$. In particular, there are $2^{k-1}$ single-rooted states and $2^{k-1}$ double-rooted states.
\end{remark}

\begin{proposition}\label{prop:CAedge}
 Let $u$ be a state of $\CA$. If $u$ is single rooted, then there are two solid edges outgoing from $u$. If instead $u$ is double-rooted, then there are one solid edge and one weak edge outgoing from $u$.
\end{proposition}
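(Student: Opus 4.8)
The plan is to reduce the statement about the compacted automaton $\CA$ to a statement about the end of a right maximal extension in $\A$, and then to decide, for each of the two outgoing edges, whether its target is a state of $\A$ (so the edge is solid) or not (so the edge is weak), via the left-minimality criterion of Lemma \ref{lem:PrefMFW}. First I would recall the structure: a state of $\CA$ is identified with a minimal word $u$ (its shortest element), and the super-state collapses the right maximal extension $w=ux_{1}\cdots x_{n}$ of $u$. All the intermediate words $u,ux_{1},\dots,w$ are states of $\A$, hence left minimal by Lemma \ref{lem:PrefMFW}, and the two edges leaving the $\CA$-state correspond to the two edges leaving $w$ in $\A$ (with their solid/weak type inherited). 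By Lemma \ref{lem:ext}, $w$ has the same height $k$ and the same root as $u$, and by Remark \ref{rem:extension} $w$ is right maximal; hence by Theorem \ref{theor:Weakley} both $w1$ and $w2$ are $\C$-words, so there really are two outgoing edges. An edge $(w,x,\cdot)$ is solid exactly when $wx$ is itself a state of $\A$, i.e.\ (Lemma \ref{lem:PrefMFW}) a left minimal word.

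The single-rooted case is then immediate from Proposition \ref{prop:weak}: since that proposition forces the source of any weak edge to be double-rooted, a single-rooted $w$ can have no weak edge leaving it, so, using that each $\CA$-state has exactly two outgoing edges and never two weak ones, both edges are solid.

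For the double-rooted case I would compute the two continuations explicitly with Lemma \ref{lem:lengthD}. Writing the root as $y\overline{y}$, so $D^{k-1}(w)=y\overline{y}$, the two single-letter extensions satisfy $\{D^{k-1}(w1),D^{k-1}(w2)\}=\{y\overline{y}y,\ y\overline{y}\overline{y}\}$, since $|D^{k-1}(wx)|=|D^{k-1}(w)|+1=3$ and the two extensions append opposite letters. The continuation giving $y\overline{y}\overline{y}$ has its length-three prefix at level $k-1$ equal to $122$ or $211$, so by Remark \ref{rem:fact} it is not left minimal, hence not a state of $\A$, and the corresponding edge is weak. The continuation giving $y\overline{y}y$ has level-$(k-1)$ prefix $121$ or $212$, not forbidden; and because $w$ is a prefix of this continuation, Remark \ref{rem:D} gives that $D^{j}(w)$ is a prefix of $D^{j}$ of the continuation for all $j$, so at every level $j\le k-2$ (where $|D^{j}(w)|\ge 3$) the length-three prefix of the continuation coincides with that of the left minimal word $w$ and is therefore not forbidden. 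Hence this continuation is left minimal, the edge is solid, and we obtain exactly one solid and one weak edge.

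The main obstacle is the bookkeeping in the double-rooted case: one must check that the \emph{only} place where left-minimality can newly fail is the length-three prefix created at level $k-1$ by the increase in height (where $w$ itself had only a length-two root), while all lower derivatives merely inherit the admissible length-three prefixes of $w$ through Remark \ref{rem:D}. Finally, the small base cases $w\in\{12,21\}$ of height one, in which there are no derivatives longer than two, are handled directly: $12\cdot 1=121$ is left minimal (solid) whereas $12\cdot 2=122$ is not (weak), and symmetrically for $21$. This closes the argument.
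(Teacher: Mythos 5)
Your proof is correct and follows essentially the same route as the paper, whose proof of this proposition is a one-line appeal to the construction of $\CA$ (two outgoing edges per state) and to Proposition \ref{prop:weak}. The extra detail you supply for the double-rooted case --- computing $D^{k-1}(wx)\in\{y\overline{y}y,\,y\overline{y}\,\overline{y}\}$ via Lemma \ref{lem:lengthD} and ruling out left minimality of the $y\overline{y}\,\overline{y}$ continuation via Remark \ref{rem:fact}, with the base cases $w\in\{12,21\}$ checked by hand --- is precisely the reasoning the paper itself carries out in the summary at the end of Section \ref{sec:smooth} and inside the proof of Proposition \ref{prop:weak}, so you have merely made explicit what the authors left implicit.
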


\begin{proof}
The claim follows from the construction of $\CA$ and from Proposition \ref{prop:weak}.
\end{proof}

\section{Vertical representation of $\C$-words}\label{sec:vertical}

In this section, we introduce a new framework for dealing with $\C$-words. We define a function $\Psi$ for representing a $\C$-word on a three-letter alphabet $\Sigma_{0}=\{0,1,2\}$.  
This function is a generalization of the function $\Phi$ considered in \cite{BrLa03}, that associates to any $\C$-word $w=w[0]w[1]\cdots w[n-1]$ the sequence of the first symbols of the derivatives of $w$, that is, the function defined by $\Phi(w)[i]=D^{i}(w)[0]$ for $0\le i<k$, where $k$ is the height of $w$. 

If one takes the first and the last symbol of each derivatives of a $\C$-word $w$, that is, the pair $\Phi(w),\Phi(\tilde{w})$, one gets a representation of $\C$-words that is not injective. For example, take the two $\C$-words $w=2211$ and $w'=21121221$. Then one has $\Phi(w)=\Phi(w')=222$ and $\Phi(\tilde{w})=\Phi(\widetilde{w'})=122$. In order to obtain an injective representation, we need an extra symbol. We thus introduce the following definition.

\begin{definition}
Let  $w=w[0]w[1]\cdots w[n-1]$ be a $\C$-word of height $k>0$. The \emph{left frontier} of $w$ is the word $\Psi(w)\in \Sigma_{0}^{k}$ defined by $\Psi(w)[0]=w[0]$ and for $0<i<k$
$$\Psi(w)[i] = \left\{ \begin{array}{lllll}
0 & \mbox{if $D^{i}(w)[0]=2$ and $D^{i-1}(w)[0]\neq D^{i-1}(w)[1]$,}\\
D^{i}(w)[0] & \mbox{otherwise.}
\end{array} \right.$$
For the empty word, we set $\Psi(\varepsilon)=\varepsilon$.

The \emph{right frontier} of $w$ is defined as $\Psi(\tilde{w})$. If $U$ and $V$ are respectively the left and right frontier of $w$, we call $U|V$ the \emph{vertical representation} of $w$. 
\end{definition}

In other words, to obtain the left (resp.\ the right) frontier of $w$, one has to take the first (resp.\ the last) symbol of each derivative of $w$ and replace a $2$ by a $0$ whenever the primitive above is not left minimal (resp.\ is not right minimal). 

\begin{example}\label{ex:vert}
Let $w=21221211221$. We have:
\[ \begin{tabular}{p{15mm} l}
\hline \rule[-6pt]{0pt}{18pt}$D^{0}(w)$  & $21221211221$ \\
\rule[-6pt]{0pt}{13pt}$D^{1}(w)$  & $121122$ \\
\rule[-6pt]{0pt}{13pt}$D^{2}(w)$  & $122$\\
\rule[-6pt]{0pt}{13pt}$D^{3}(w)$  & $2$\\
\hline \rule[-2pt]{0pt}{2pt}
\end{tabular}  \]
The word $D^{2}(w)=122$ is not a left minimal primitive of the word $D^{3}(w)=2$, and therefore $\Psi(w)[3]$, the fourth symbol of the left frontier of $w$, is a $0$; analogously, the word $w=21221211221$ is not a right minimal primitive of $D(w)=121122$, and therefore $\Psi(\tilde{w})[1]$, the second symbol of the right frontier of $w$, is a $0$. Hence, the vertical representation of $w$ is $\Psi(w)|\Psi(\tilde{w})=2110|1022$.
\end{example}

\begin{remark}
By definition, for any $\C$-word $w$ of height $k>0$, we have that $\Psi(w)=\Psi(w)[0]\Psi(w)[1]\cdots \Psi(w)[k-1]$ is a word of length $k$ over $\Sigma_{0}$ whose first symbol is different from $0$. Conversely, any word $U$ of length $k>0$ over $\Sigma_{0}$, such that its first symbol is different from 0, is the left frontier of some $\C$-word of height $k$.
\end{remark}

\begin{theorem}\label{theor:vertical}
Any word in $\C$ is uniquely determined by its vertical representation. 
\end{theorem}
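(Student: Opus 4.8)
The plan is to show that the map $w \mapsto \Psi(w)\,|\,\Psi(\widetilde{w})$ is injective on $\C$ by reconstructing $w$ from its vertical representation. Since a $\C$-word is completely determined by its sequence of derivatives $w = D^0(w), D^1(w), \ldots, D^{k}(w) = \varepsilon$, and since each $D^{i}(w)$ is recovered from $D^{i+1}(w)$ by choosing a \emph{particular} primitive among the (at most eight) primitives of $D^{i+1}(w)$, it suffices to show that the vertical representation pins down, at each level, exactly which primitive was used. I would therefore proceed by downward induction on the level $i$, reconstructing $D^{i}(w)$ from $D^{i+1}(w)$ together with the data $\Psi(w)[i]$ and $\Psi(\widetilde{w})[i]$.

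The base case is the root level: by the remark preceding the theorem, $\Psi(w)[0] = w[0]$ is the first letter, and $\Psi(\widetilde{w})[0]$ is the last letter, of the root $D^{k-1}(w) \in \{1,2,12,21\}$; together these two letters determine the root uniquely (a length-one root is determined by its single letter, and a length-two root $12$ or $21$ is determined by its first letter, with the cases distinguished by whether the two frontier letters agree). For the inductive step, suppose $D^{i+1}(w)$ has been reconstructed. Recall from the definition of $D$ that passing from a word to its derivative discards the first run and/or the last run precisely when that run has length $1$; conversely, to build a primitive $D^{i}(w)$ of $D^{i+1}(w)$ one must decide, independently at the left and at the right end, (a) the letter $D^{i}(w)[0]$ (resp.\ the last letter) and (b) whether a run of length one is prepended/appended. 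The key observation is that these left-end and right-end choices are exactly what $\Psi(w)[i]$ and $\Psi(\widetilde{w})[i]$ encode.

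Concretely, I would verify that the left-end data is recovered as follows. If $\Psi(w)[i] = 1$, then $D^{i}(w)[0] = 1$; if $\Psi(w)[i] = 2$, then $D^{i}(w)[0] = 2$ and the primitive is left minimal at this level; and if $\Psi(w)[i] = 0$, then $D^{i}(w)[0] = 2$ but the primitive is \emph{not} left minimal, which by Remark~\ref{rem:fact} forces the length-three prefix of $D^{i}(w)$ to be $122$ or $211$ and thereby fixes the presence and value of a prepended length-one run. The point is that the symbol $0$ versus $2$ records exactly the one extra bit of information --- left minimality, equivalently whether the first run of $D^{i}(w)$ was absorbed into $D^{i+1}(w)$ --- that the plain first-letter function $\Phi$ of \cite{BrLa03} fails to capture, which is precisely why $\Phi(w),\Phi(\widetilde{w})$ was shown to be non-injective in the discussion before the definition. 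The symmetric analysis using $\widetilde{w}$ recovers the right-end data. Since the middle of $D^{i}(w)$ is copied verbatim from $D^{i+1}(w)$ (the run-length decoding of the unchanged interior), the two ends together determine $D^{i}(w)$ uniquely from $D^{i+1}(w)$.

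The main obstacle, and where care is needed, is the compatibility of the two reconstructions at short heights: when $D^{i+1}(w)$ is very short (length $1$, or the root itself), the left-end and right-end decodings act on overlapping letters, and one must check that the constraints imposed by $\Psi(w)[i]$ and $\Psi(\widetilde{w})[i]$ are jointly consistent and do not over- or under-determine the primitive. I would handle this by treating the low levels as explicit base cases --- enumerating the possibilities for roots in $\{1,2,12,21\}$ and their admissible single-step primitives --- and invoking Theorem~\ref{theor:Weakley} and Lemma~\ref{lem:lengthD} to guarantee that the first/last letters of successive derivatives behave coherently (in particular that extending at one end changes the relevant derivative letters in a controlled way). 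Once these boundary cases are dispatched, the inductive step runs uniformly, and the downward induction terminates at level $0$ with $D^{0}(w) = w$ fully reconstructed, establishing injectivity and hence the theorem.
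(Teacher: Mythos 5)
Your overall strategy is sound, and it is in fact more than the paper offers: the paper's entire proof of Theorem~\ref{theor:vertical} is the single sentence that the claim ``follows directly from the definition of $\Psi$'', so a top-down reconstruction of the derivative tower $D^{k-1}(w), D^{k-2}(w), \ldots, D^0(w)=w$ is exactly the argument the authors are leaving to the reader. However, your write-up contains a genuine level-indexing error that breaks the key decoding step. By the definition, $\Psi(w)[i]=0$ means that $D^{i}(w)[0]=2$ \emph{and} $D^{i-1}(w)[0]\neq D^{i-1}(w)[1]$; that is, the symbol at position $i$ records a fact about the prefix of the primitive \emph{one level below}, $D^{i-1}(w)$, not about $D^{i}(w)$ itself. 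Your claim that ``$\Psi(w)[i]=0$ \ldots forces the length-three prefix of $D^{i}(w)$ to be $122$ or $211$'' is false: for $w=122$ one has $\Psi(w)=10$, and the symbol $0$ at position $1$ says nothing about a length-three prefix of $D^{1}(w)=2$ (a single letter); the forced prefix $x\overline{x}\,\overline{x}$ belongs to $D^{0}(w)=122$. (It is also internally inconsistent: with $\Psi(w)[i]=0$ you assert $D^{i}(w)[0]=2$, which already rules out the prefix $122$.) Similarly, your base case misreads the indexing: $\Psi(w)[0]=w[0]$ is the first letter of $w$ itself, while the root $D^{k-1}(w)$ is encoded at position $k-1$ of each frontier.

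The gap is not merely notational, because under your rule the information needed at level $i$ is genuinely absent from position $i$. When $\Psi(w)[i]=1$, i.e., $D^{i}(w)[0]=1$, your scheme provides the letter but no bit telling whether the first run of $D^{i}(w)$ has length one (was discarded in forming $D^{i+1}(w)$) or not; compare $D^{i}(w)$ beginning $11\cdots$ with $D^{i}(w)$ beginning $122\cdots$, which have the same first letter and can sit over the same $D^{i+1}(w)$. The disambiguating bit lives one position later: reconstructing $D^{i}(w)$ from $D^{i+1}(w)$ requires $\Psi(w)[i]$ for the letter (with $0\mapsto 2$) \emph{together with} $\Psi(w)[i+1]$ for the prepend decision --- prepend a length-one run iff $\Psi(w)[i+1]\in\{0,1\}$, and do not prepend iff $\Psi(w)[i+1]=2$. (This works because if the first run of $D^{i}(w)$ is not discarded it must have length exactly two, forcing $D^{i+1}(w)[0]=2$; so the symbol $1$ at position $i+1$ automatically implies a prepend at level $i$, and the $0$ versus $2$ distinction resolves the only ambiguous case $D^{i+1}(w)[0]=2$.) With this two-symbol sliding window at each end, the interior of $D^{i}(w)$ is the alternating run-length decoding of $D^{i+1}(w)$ starting from the (now determined) first letter, and your downward induction, together with the corrected root case read off from positions $k-1$, goes through and yields the injectivity claimed by the theorem. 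As written, though, the central verification is mis-stated and the letter-$1$ case fails, so the proof needs this repair before it is correct.
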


\begin{proof}
The claim  follows directly from the definition of $\Psi$.
\end{proof}

\begin{remark}\label{rem:vert}
Let $w$ be a $\C$-word of height $k>0$ and $U|V$ its vertical representation. Then:

\begin{enumerate}
\item $w$ is left (resp.\ right) maximal if and only if $U[i]\neq 2$ (resp.\ $V[i]\neq 2$) for every $i=1,\ldots, k-1$;

\item $w$ is left (resp.\ right) minimal if and only if $U[i]\neq 0$ (resp.\ $V[i]\neq 0$) for every $i=1,\ldots, k-1$. 
 
\end{enumerate}
\end{remark}

We shall explore further properties of the vertical representation in a forthcoming paper \cite{FeFi12}.


The \emph{vertical compacted automaton}, noted $\mathcal{VCA}_{\infty}$, is obtained from $\mathcal{CA}_{\infty}$ by replacing the label of each state $u$ by the vertical representation of $u$, and by replacing the labels of the transitions in the following way: a solid edge from a state $U|V$ to a state $Ux|V'$, $x\in \Sigma$, is labeled by $x$; a solid edge from a state $U|V$ to a state $U|V'$ is labeled by $\epsilon$; finally, weak edges are labeled by $0$. 

A partial diagram of automaton $\VCA$ is depicted in Figure \ref{fig:VCA}.

\begin{figure}
\begin{center}
\includegraphics[height=180mm]{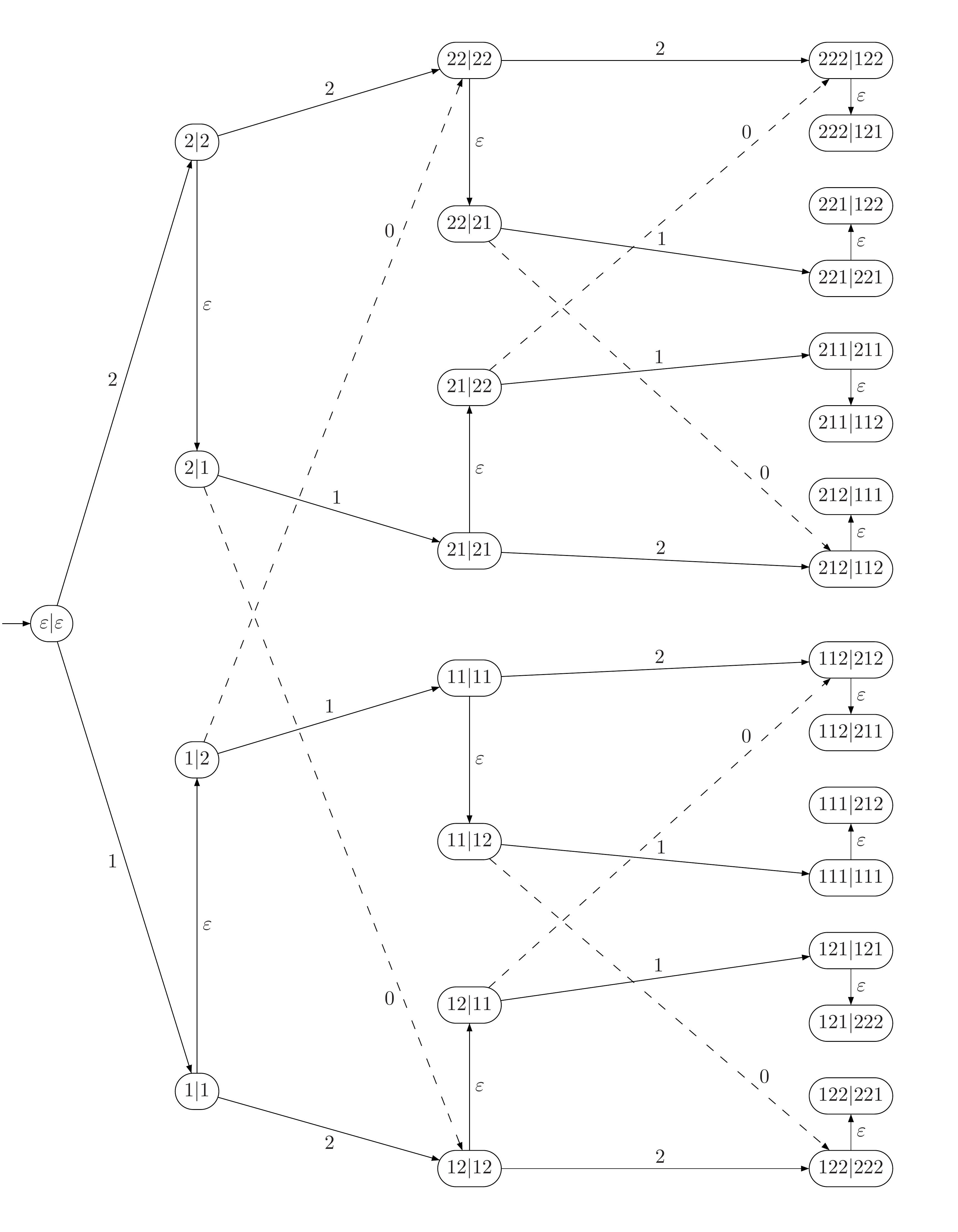}
\caption{The automaton $\VCA$ cut at height 3. All states are terminal.}
\label{fig:VCA}
\end{center}
\end{figure}

The choice of introducing $\epsilon$-transitions is motivated by the following considerations. By construction, each state of the automaton $\CA$ corresponds to a class of words having the same height and the same root. There is a unique minimal word $w$ in each state, and all other words in the same state are the simple extensions, on the left and on the right, of $w$. There are two kinds of minimal words: single-rooted and double-rooted. It is easy to see that for every $U=U[0]U[1]\cdots U[k-1]\in \Sigma^{k}$, $U$ is the left frontier of exactly two minimal words of height $k$: a single-rooted word $w_{1}$ having root equal to $U[k-1]$ and a double-rooted word $w_{2}$ having root equal to $U[k-1]\overline{U[k-1]}$. Moreover, $w_{1}$ is a prefix of $w_{2}$. Therefore, if we label with $\epsilon$ each transition from the class of $w_{1}$ to the class of $w_{2}$, a path in $\VCA$ starting at the initial state and labeled by $U$ ends in a state having label $U|V$.

As a consequence, we can thus further compact the automaton by identifying the pairs of states in $\VCA$ that have the same left frontier. This corresponds to identify the class of $w_{1}$ with the class of $w_{2}$. In this way, each class of words is uniquely determined by the left frontier of its minimal element only. The resulting automaton, called the \emph{ultra-compacted} version of $\VCA$, is noted $\VUCA$.  The transitions are labeled by the letters of $\Sigma_{0}$. The letter $0$ is the label of the weak edges, while $1$ and $2$ label solid edges. The trie formed by the solid edges of $\VUCA$ is a complete binary tree in which there are $2^{k}$ nodes at level $k$ representing the left frontiers of the minimal words of height $k$.  Each state of $\VUCA$ different from $\epsilon$ has exactly three outgoing edges: two solid edges, labeled by 1 and  2, and a weak edge labeled by 0. 

Actually, cutting the infinite automaton $\VUCA$ at level $k$, one obtains a deterministic automaton $\mathcal{VUCA}_{k} =(Q,\Sigma_{0},\epsilon,Q,\delta_{\mathcal{VUCA}})$, where $Q=\Sigma^{\leq k}$ and $\delta_{\mathcal{VUCA}}$ is defined by:

\begin{enumerate}
 \item $\delta_{\mathcal{VUCA}}(U,x)=Ux$ if $x\in \Sigma$;
 \item $\delta_{\mathcal{VUCA}}(U,0)=V2$ if for any $u\in \C$ such that $\Psi(u)=U0$, the longest suffix $v$ of $u$ that is also a left minimal word has left frontier equal to $\Psi(v)=V2$.
\end{enumerate}

A partial diagram of automaton $\VUCA$ is depicted in Figure \ref{fig:VUCA}. Note that the order of the states at each level (the lexicographic order in the upper half, and the reverse of the lexicographic order in the lower half) makes the graph of the automaton symmetric. This property follows from the symmetry of the vertical representation of $\C$-words with respect to the swap of the first symbol, which, in turns, represents the symmetry of $\C$-words with respect to the complement. Indeed, if a word $w$ has left frontier $U[0]U[1]\cdots U[k-1]$, then the word $\overline{w}$ has left frontier $\overline{U[0]}U[1]\cdots U[k-1]$.

\begin{figure}
\begin{center}
\includegraphics[height=180mm]{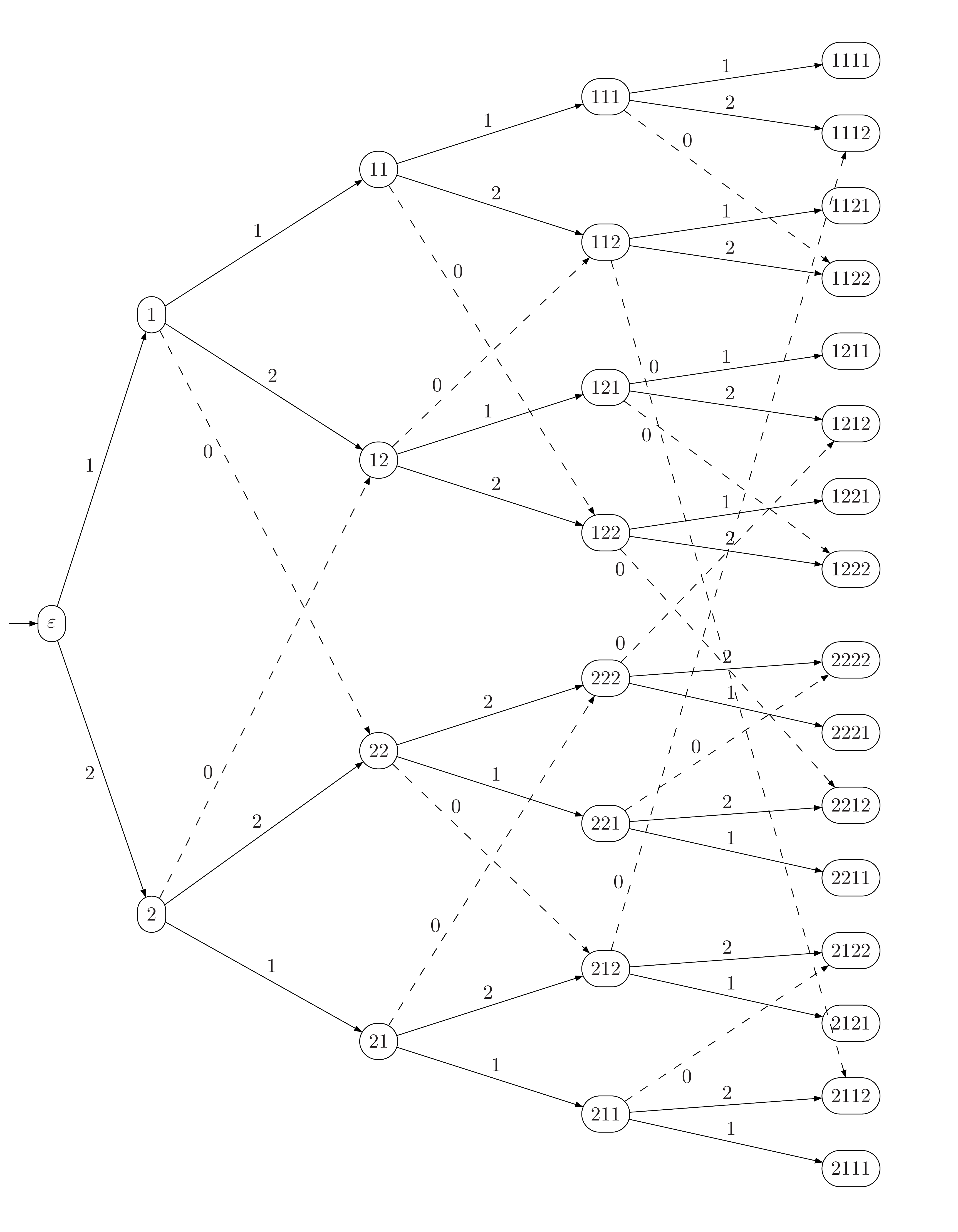}
\caption{The automaton $\VUCA$ cut at height 4. All states are terminal. The order of the states at each level is the lexicographic order in the upper half, and the reverse of the lexicographic order in the lower half. This makes the graph of the automaton symmetric.}
\label{fig:VUCA}
\end{center}
\end{figure}

The automaton $\VUCA$ induces on the set of $\C$-words a natural equivalence defined by
$$u\equiv_{\mathcal{VUCA}} v\hspace{4mm}\Longleftrightarrow \hspace{4mm}\delta_{\mathcal{VUCA}}(\epsilon,\Psi(u))=\delta_{\mathcal{VUCA}}(\epsilon,\Psi(v)).$$

We denote the class of $u$ with respect to this equivalence by $[u]_{\mathcal{VUCA}}$.

\begin{proposition}\label{prop:classVUCA}
Let $u,v\in \C$. Then $u\equiv_{\mathcal{VUCA}} v$ if and only if the left maximal extension of $u$ and the left maximal extension of $v$ have the same left frontier.
\end{proposition}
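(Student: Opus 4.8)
The plan is to unfold both equivalences into statements about left frontiers and then show they coincide. Recall that $u \equiv_{\mathcal{VUCA}} v$ means $\delta_{\mathcal{VUCA}}(\epsilon,\Psi(u)) = \delta_{\mathcal{VUCA}}(\epsilon,\Psi(v))$, where the transition function only reads the left frontier $\Psi(\cdot)$. The first task is therefore to understand precisely what state $\delta_{\mathcal{VUCA}}(\epsilon,\Psi(u))$ is, as a function of the word $u$. I would argue that reading $\Psi(u) = \Psi(u)[0]\Psi(u)[1]\cdots\Psi(u)[k-1]$ letter by letter in $\VUCA$ traces out exactly the process of passing to the longest left minimal suffix at each step where a $0$ occurs: a letter in $\Sigma$ extends the current left frontier by a solid edge (rule~1), whereas a $0$ triggers a weak edge (rule~2) that replaces the current left minimal word by the left minimal word encoded as $V2$. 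The key structural fact to extract is that $\VUCA$ was obtained from $\CA$ by identifying states with the same left frontier, and in $\CA$ each state already corresponds to a class of words sharing a maximal extension; hence the state reached by $\Psi(u)$ should be identifiable with the left frontier of a canonical representative associated to $u$.

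The main claim to establish is that $\delta_{\mathcal{VUCA}}(\epsilon,\Psi(u))$ is precisely the left frontier of the left maximal extension of $u$. To see this, I would first use Remark~\ref{rem:vert}(1), which says that $u$ is left maximal if and only if $\Psi(u)$ contains no $2$ in positions $1,\ldots,k-1$. The left maximal extension of $u$ is the left simple extension of maximal length, and by Lemma~\ref{lem:ext} it has the same height and root as $u$, so it has left frontier of the same length $k$. The zeros in $\Psi(u)$ mark exactly the levels at which the primitive above fails to be left minimal, i.e.\ the levels where $u$ admits a proper left simple extension; following the weak edges labeled $0$ in $\VUCA$ should correspond exactly to performing these left extensions, terminating at the left maximal extension whose left frontier has its interior entries in $\{0,1\}$ turned appropriately. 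The cleanest route is to invoke Proposition~\ref{prop:classA} together with the description of weak edges in Proposition~\ref{prop:weak}: a weak edge leaving a double-rooted left minimal, right maximal state lands on a minimal word of root $2$, which matches the $V2$ target in rule~2 of $\delta_{\mathcal{VUCA}}$. Thus the endpoint of the path labeled $\Psi(u)$ depends only on the left maximal extension of $u$, and conversely is determined by its left frontier.

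Once the claim $\delta_{\mathcal{VUCA}}(\epsilon,\Psi(u)) = \Psi(\text{left maximal extension of }u)$ is proved, the proposition follows immediately: $u \equiv_{\mathcal{VUCA}} v$ holds iff the left frontiers of the respective left maximal extensions are equal, which is exactly the asserted condition. I would also note the consistency check that a single state of $\VUCA$ is, by construction, identified with a left frontier, so equality of states is literally equality of left frontiers, matching the statement's phrasing. The proof should close by observing that since distinct left maximal extensions may still share a left frontier (they differ on the right), the equivalence is genuinely coarser than equality of maximal extensions, which is the $\equiv_{\mathcal{CA}}$ relation of Proposition~\ref{prop:CAclass}; this situates the result correctly among the earlier equivalences.

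I expect the main obstacle to be the precise bookkeeping in the induction on height that verifies $\delta_{\mathcal{VUCA}}(\epsilon,\Psi(u))$ equals the left frontier of the left maximal extension. The subtle point is that following a $0$-transition in $\VUCA$ uses rule~2, whose definition already refers to "the longest suffix $v$ that is also a left minimal word," so one must check that composing these suffix-passages level by level (as dictated by the successive symbols of $\Psi(u)$) genuinely reconstructs the left frontier of the single word that is the left maximal extension, rather than some unrelated concatenation. Controlling this requires carefully tracking how $\Psi$ interacts with the derivative $D$ and with the passage to left minimal suffixes, leaning on Remark~\ref{rem:vert} and Lemma~\ref{lem:maxmin1}; the rest of the argument is then routine.
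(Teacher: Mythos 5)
There is a genuine error at the heart of your proposal: the ``main claim'' that $\delta_{\mathcal{VUCA}}(\epsilon,\Psi(u))$ equals the left frontier of the left maximal extension of $u$ is false. The states of $\VUCA$ are, by construction, the left frontiers of \emph{minimal} words, i.e., words over $\{1,2\}$ (the nodes of the complete binary tree formed by the solid edges), whereas by Remark~\ref{rem:vert} the left frontier of a left maximal word contains no letter $2$ in positions $1,\ldots,k-1$. Concretely, take $u=22$: reading $\Psi(22)=22$ from the initial state along two solid edges ends in the state $22$, while the left maximal extension of $u$ is $122$ (since $222\notin\C$ while both $1122$ and $2122$ are $\C$-words), and $\Psi(122)=10\neq 22$. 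What the construction actually gives --- and what the paper's worked examples at the end of Section~\ref{sec:vertical} illustrate --- is that the endpoint of the path labeled $\Psi(u)$ is the left frontier of the \emph{minimal} word of which $u$ is a simple extension. Your closing ``consistency check'' (that equality of states is literally equality of the left frontiers in the statement) repeats the same conflation: equality of states is equality of the left frontiers of the associated minimal words, not of the left maximal extensions, so the proposition is not a tautology of the labeling.

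The statement can still be proved along a corrected version of your outline, in two steps that your draft gestures at but does not separate. First, $u\equiv_{\mathcal{VUCA}}m$ where $m$ is the left maximal extension of $u$: indeed $m$ is a left simple extension of $u$, left simple extensions compose, so by Proposition~\ref{prop:classA} the words $u$ and $m$ already lie in the same class of $\A$, hence a fortiori in the same class of $\VUCA$. Second, for \emph{left maximal} words $m,m'$ one must show $m\equiv_{\mathcal{VUCA}}m'$ if and only if $\Psi(m)=\Psi(m')$; the ``if'' direction is immediate since the state reached is a function of the left frontier, but the ``only if'' direction is precisely the injectivity your induction would have to establish, and it does not follow from the false identification above. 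One way to obtain it: every transition of $\VUCA$ (solid or weak) advances the level by exactly one (this fact is used in the proof of Theorem~\ref{theor:uzu}), so the $2\cdot 2^{k-1}=2^{k}$ left frontiers of left maximal words of height $k$ are mapped into the $2^{k}$ states of height $k$; by the first step every such state is reached by the left frontier of the left maximal extension of its minimal word, so the map is a surjection between finite sets of equal cardinality, hence a bijection. With these two steps the proposition follows. For what it is worth, the paper itself dispenses with all of this and declares the claim a direct consequence of the construction of $\VUCA$, so your instinct to supply the missing detail was sound --- but the detail you supplied attaches the wrong canonical representative to the state, and the argument as written fails at exactly that point.
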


\begin{proof}
 The claim is a direct consequence of the construction of $\VUCA$.
\end{proof}

We end the section by discussing an interesting property of the automaton $\VUCA$.
Let $w$ be a $\C$-word. By Theorem \ref{theor:vertical}, $w$ is uniquely determined by its vertical representation $\Psi(w)|\Psi(\tilde{w})$. Moreover, $w$ is a simple extension of a unique minimal word $w'$ having the same height and the same root as $w$ (Lemma \ref{lem:ext}). To get the vertical representation $\Psi(w')|\Psi(\widetilde{w'})$ of the word $w'$, one can use the automaton $\VUCA$. Indeed, $\Psi(w)$ is the label of a unique path in $\VUCA$ starting at the origin and ending in a state $U$. Then $U$ is the left frontier of $w'$, i.e., $U=\Psi(w')$. Analogously, $\Psi(\tilde{w})$ is the label of a unique path in $\VUCA$ starting at the origin and ending in a state $V$, and $V$ is the right frontier of $w'$, i.e., $V=\Psi(\widetilde{w'})$.

\begin{example}
 Let $w= 21221211221$ as in Example \ref{ex:vert}. The vertical representation of $w$ is $2110|1022$. Looking at the graph of the automaton $\VUCA$ (Figure \ref{fig:VUCA}) we see that the path starting at the origin and labeled by $2110$ ends in state $2122$, while the path starting at the origin and labeled by $1022$ ends in state $2222$. Thus, the minimal word of which $w$ is a simple extension is the word $w'$ having vertical representation $2122|2222$, that is, the word $w'=2121122$.
\end{example}

\begin{example}
 Let $w= 1221221121$. The vertical representation of $w$ is $101|110$. Looking at the graph of the automaton $\VUCA$ (Figure \ref{fig:VUCA}) we see that the path starting at the origin and labeled by $101$ ends in state $221$, while the path starting at the origin and labeled by $110$ ends in state $122$. Thus, the minimal word of which $w$ is a simple extension is the word $w'$ having vertical representation $221|122$, that is, the word $w'=2212211$.
\end{example}

\section{$\C$-words of the form $uzu$}\label{sec:uvu}

In this section, we use the structure of $\VUCA$ for deriving an upper bound on the length of the gap between two occurrences of a $\C$-word. Recall that a \emph{repetition with gap} $n$ of the $\C$-word $u$ is a $\C$-word of the form $uzu$ such that $|z|=n$. Carpi \cite{Carpi:1994} proved that for every $n>0$ there are finitely many repetitions with gap $n$ in $\C$. In a more recent paper, Carpi and D'Alonzo \cite{CaDa09} proved that the repetitivity index of $\C$-words is ultimately bounded from below by a linear function. The \emph{repetitivity index} \cite{CaDa09bis} is the integer function $I$ defined by
$$I(n)=\min\{k>0 \mbox{ $|$ } \exists u\in \C, |u|=n \mbox{ : $uzu\in \C$ for a $z$ such that $|z|=k$}\}.$$ In other words, $I(n)$ gives the minimal gap of a repetition of a word $u$ of length $n$ in $\C$.

We now explore the relationship between the length and the height of a $\C$-word. For any $\C$-word $w$, we have 
$$|D(w)|+2|D(w)|_{2}\leq |w|\leq |D(w)|+2|D(w)|_{2}+2.$$
Chv\'atal \cite{Chvatal:1993} proved that the upper density of $2$'s in a $k$-differentiable word, for $k>22$, is less than $p=0.50084$.
Hence, we can suppose that for every $\C$-word $w$ of height $k>22$ one has 
\begin{equation}\label{eq:p}
(2-p)|D(w)|\leq |w|\leq (1+p)|D(w)|+2.
\end{equation}
We thus have the following lemma.

\begin{lemma}
There exist positive constants $\alpha$ and $\beta$ such that for any $\C$-word $w$
\begin{equation}\label{eq:l}
\alpha(2-p)^{h(w)} <|w|< \beta(1+p)^{h(w)},
\end{equation}
and therefore
\begin{equation}\label{eq:h}
\frac{\log |w|-\log \beta}{\log(1+p)} < h(w) < \frac{\log |w| - \log \alpha}{\log(2-p)},
\end{equation}
where $h(w)$ is the height of $w$.
\end{lemma}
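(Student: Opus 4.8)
The plan is to establish the inequalities in \eqref{eq:l} first by iterating the recursive bounds in \eqref{eq:p}, and then to obtain \eqref{eq:h} by simply taking logarithms and rearranging. The key observation is that \eqref{eq:p} gives a contractive/expansive recurrence relating $|w|$ to $|D(w)|$: for a $\C$-word $w$ of height $k$ greater than $22$, one has $(2-p)|D(w)|\leq |w|$ on the lower side and $|w|\leq (1+p)|D(w)|+2$ on the upper side. Since applying $D$ decreases the height by exactly one, I would iterate these inequalities down the chain $w, D(w), D^2(w),\ldots, D^{k-1}(w)$, where $D^{k-1}(w)$ is the root of $w$ and hence has length $1$ or $2$.

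For the lower bound, iterating $|w|\geq (2-p)|D(w)|$ a total of $k-1$ times (ignoring the additive constant, which only helps) yields $|w|\geq (2-p)^{k-1}|D^{k-1}(w)|\geq (2-p)^{k-1}$, and absorbing the factor $(2-p)^{-1}$ into a constant gives $|w|>\alpha(2-p)^{h(w)}$. For the upper bound, iterating $|w|\leq (1+p)|D(w)|+2$ produces a geometric sum: $|w|\leq (1+p)^{k-1}|D^{k-1}(w)| + 2\sum_{j=0}^{k-2}(1+p)^{j}$. Since $1+p>1$ the geometric series is bounded by a constant multiple of $(1+p)^{k-1}$, and the leading term $(1+p)^{k-1}\cdot 2$ is likewise $O((1+p)^{k})$; collecting everything into a single constant $\beta$ yields $|w|<\beta(1+p)^{h(w)}$. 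The one subtlety here is that \eqref{eq:p} is only asserted for heights exceeding $22$; for the finitely many words of height at most $22$ the bounds hold trivially after enlarging $\alpha$ and $\beta$ appropriately, since there are only finitely many such lengths to accommodate. I would state explicitly that $\alpha$ and $\beta$ are chosen large enough (respectively small enough) to cover this finite base-case regime.

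Once \eqref{eq:l} is in place, the passage to \eqref{eq:h} is purely algebraic. Taking logarithms in $\alpha(2-p)^{h(w)}<|w|$ gives $\log\alpha + h(w)\log(2-p) < \log|w|$, and since $2-p>1$ (so $\log(2-p)>0$) I may divide by $\log(2-p)$ without reversing the inequality to obtain the right-hand bound $h(w)<(\log|w|-\log\alpha)/\log(2-p)$. Symmetrically, $|w|<\beta(1+p)^{h(w)}$ yields $\log|w|<\log\beta + h(w)\log(1+p)$, and dividing by the positive quantity $\log(1+p)$ gives the left-hand bound $(\log|w|-\log\beta)/\log(1+p)<h(w)$.

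The main obstacle is really just bookkeeping rather than conceptual difficulty: one must be careful that the two density constants $2-p$ and $1+p$ are both strictly greater than $1$ (which holds since $p\approx 0.5$), so that the iterated recurrences genuinely grow geometrically and the logarithms have the correct sign, and one must handle the additive constant $+2$ in the upper recurrence so that it does not spoil the clean geometric form — this is exactly where the geometric-series estimate and the freedom to choose $\beta$ large come in. I expect no further difficulty, so this lemma is essentially a direct unwinding of \eqref{eq:p}.
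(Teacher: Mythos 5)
Your proposal is correct and is exactly the argument the paper intends: the paper states this lemma with no explicit proof, presenting it as an immediate consequence of \eqref{eq:p}, and your iterate-the-recurrence-then-take-logarithms unwinding (with the geometric-series absorption of the additive $+2$ and the observation that $2-p>1$ and $1+p>1$ keep the logarithms positive) is precisely that intended derivation. One small bookkeeping refinement: since \eqref{eq:p} is asserted only for heights exceeding $22$, for a word $w$ of large height the iteration should stop at the derivative $D^{h(w)-22}(w)$ of height $22$ --- whose length lies between $1$ and a fixed constant, there being only finitely many $\C$-words of height $22$ --- rather than run all the way down to the root; this is handled by the very same constant-absorption into $\alpha$ and $\beta$ that you already invoke for words of height at most $22$, so the proof stands as written once that remark is extended to the tail of the derivative chain.
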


\begin{theorem}\label{theor:uzu}
Let $u\in\C$. Then there exists $z\in \C$ such that $uzu\in \C$ and $|uzu|\leq C|u|^{2.72}$, for a suitable constant $C$.
\end{theorem}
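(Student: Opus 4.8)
The plan is to work in the vertical ultra-compacted automaton $\VUCA$ and to convert a bound on the \emph{height} of a suitable repetition into the desired bound on its \emph{length} through the estimates \eqref{eq:l}--\eqref{eq:h}. Set $k=h(u)$. By Theorem \ref{theor:vertical} and Proposition \ref{prop:classVUCA}, the word $u$ is encoded by the level-$k$ state $U=\delta_{\mathcal{VUCA}}(\epsilon,\Psi(u))\in\Sigma^{k}$, which is the left frontier of the minimal word $u'$ having the same height and root as $u$ (Lemma \ref{lem:ext}). Since $u$, $u'$, and the maximal extension $\hat u$ of $u$ all have height $k$, equation \eqref{eq:l} shows their lengths are pairwise comparable, up to the slowly growing factor $((1+p)/(2-p))^{k}$. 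Moreover, because $\C$ is factorial (Proposition \ref{prop:propertiesC}) and $u$ is a factor of $\hat u$, any repetition $\hat u z_{0}\hat u\in\C$ yields $uz'u\in\C$ with $|uz'u|\le|\hat u z_{0}\hat u|$; hence it suffices to build a repetition of the canonical representative $\hat u$ (or, dually, of $u'$).

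The heart of the argument is to produce a repetition whose height exceeds $k$ by only a controlled multiple. I would do this inside $\VUCA$: starting from the state $U$ at level $k$, follow the $0$-labelled (weak) edges and the solid edges so as to build a closed walk returning to $U$, whose label, read back through the vertical representation, furnishes the gap $z$ and forces a genuine reoccurrence of $u$ (a weak edge is precisely the failure/suffix mechanism that reinstates an occurrence). The length of such a walk is governed by the number of available states: at level $j$ there are $2^{j}$ states, and $\sum_{j\le h}2^{j}=O(2^{h})$ states lie at levels up to $h$. A pigeonhole/connectivity argument, using the explicit rule $\delta_{\mathcal{VUCA}}(U,0)=V2$, the fact that every double-rooted state emits exactly one solid and one weak edge (Proposition \ref{prop:CAedge}), and the complementation symmetry of the automaton, should show that one returns to $U$ after visiting $O(2^{k})$ states, so that $h(uzu)\le\gamma k$ for a constant $\gamma$ close to $\log_{3/2}3\approx 2.71$. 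Intuitively $|z|\approx 2^{k}\,|u|$, the factor $2^{k}$ counting the minimal words of height $k$ that may have to be traversed before $u$ can recur.

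Finally I would convert the height bound into the length bound. From \eqref{eq:l} we get $|uzu|<\beta(1+p)^{h(uzu)}\le\beta\bigl((1+p)^{k}\bigr)^{\gamma}$, while \eqref{eq:l} also gives $(1+p)^{k}\le(|u|/\alpha)^{\log(1+p)/\log(2-p)}$. Combining these, $|uzu|\le C|u|^{\gamma\,\log(1+p)/\log(2-p)}$, and since $\gamma\approx\log_{3/2}3$ while $\log(1+p)/\log(2-p)$ is only marginally above $1$ (both $1+p$ and $2-p$ being close to $3/2$), the exponent evaluates to at most $2.72$. Equivalently, $2^{k}=|u|^{\log_{3/2}2+o(1)}=|u|^{1.71+o(1)}$, so $|uzu|\approx|u|\cdot 2^{k}=|u|^{2.71+o(1)}$, absorbing the small correction into $2.72$.

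The main obstacle is the combinatorial step of the second paragraph: proving that a closed walk returning to $U$ — hence a bona fide repetition $uzu\in\C$ — exists within $O(2^{k})$ states. This is where the fine structure of $\VUCA$ must be used in full, namely the precise weak transition $\delta_{\mathcal{VUCA}}(U,0)=V2$, the one-solid-one-weak edge pattern at double-rooted states (Proposition \ref{prop:CAedge}), and the symmetry under complement; by comparison, the length-to-height dictionary \eqref{eq:l}--\eqref{eq:h} and the reduction to a canonical representative are routine.
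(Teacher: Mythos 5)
There is a genuine gap at the heart of your argument: the closed walk you want to build in $\VUCA$ does not exist. In $\VUCA$ every transition strictly increases the level by exactly one --- not only the solid edges labelled $1$ and $2$ (which append a letter to the left frontier), but also the weak $0$-labelled edges. Indeed, although in $\A$ the target $v$ of a weak edge is a \emph{suffix} of $ux$, hence a shorter word, Proposition \ref{prop:weak} and the discussion after Lemma \ref{lem:lengthD} show that $v$ is a minimal word of root $2$ whose \emph{height} exceeds the height of the double-rooted source by one; since the states of $\VUCA$ at level $j$ are exactly the left frontiers of length $j$, the rule $\delta_{\mathcal{VUCA}}(U,0)=V2$ sends a level-$j$ state to a level-$(j+1)$ state. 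The transition graph of $\VUCA$ is therefore graded and acyclic: no walk ever returns to $U$, and your pigeonhole-on-visited-states argument cannot get started. (This grading is precisely what lets the paper assert that all paths of length $n$ from $U$ end at level $h(u)+n$.) Moreover, even if a return to $U$ were possible, reaching the same state would only show that the word read along the walk is $\equiv_{\mathcal{VUCA}}$-equivalent to $u$, i.e.\ that the left maximal extensions have the same left frontier (Proposition \ref{prop:classVUCA}); by itself this does not force a second occurrence of $u$ as a factor.

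The correct pigeonhole --- the one the paper uses --- is on \emph{paths}, not on states: from $U$ at level $k=h(u)$ there are $3^{n}$ distinct paths of length $n$, all ending at level $k+n$, where only $2^{k+n}$ states live; as soon as $3^{n}>2^{k+n}$, i.e.\ $n>\gamma k$ with $\gamma=(\log_{2}3-1)^{-1}\simeq 1.70951$, two distinct equal-length paths from $U$ reach a common state $V$. This yields two distinct $\C$-words $v_{1},v_{2}$ admitting $u$ as a prefix which, being $\equiv_{\mathcal{VUCA}}$-equivalent, may be taken to be left simple extensions of a common word, so that $v_{1}$ is a suffix of $v_{2}$; \emph{this} is the step that produces the second occurrence of $u$ inside $v_{2}$, giving $v_{2}=uzuv_{2}'$ after discarding the finitely many overlapping configurations (there are no long overlaps in $\C$, by Carpi's result). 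One then gets $h(uzu)\leq (1+\gamma)h(u)+1$, and your height-to-length conversion via Equations \ref{eq:l} and \ref{eq:h} goes through exactly as in the paper, with exponent $(1+\gamma)\log(1+p)/\log(2-p)\simeq 2.717$. So your reduction to a maximal representative and your conversion machinery are sound, and your numerology is even the right constant (note $\log_{3/2}2=(\log_{2}3-1)^{-1}=\gamma$, appearing as $1+\gamma\simeq 2.71$ in the exponent); but the existence statement you flagged as the ``main obstacle'' is not merely unproven --- as formulated, it is false in $\VUCA$.
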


\begin{proof}
Let $u$ be a $\C$-word of height $h(u)$. Without loss of generality, we can suppose that $u$ is a maximal word. Indeed, if a word $w'$ is the maximal extension of a word $w$, then a word of the form $w'z'w'$, $z'\in \Sigma^{*}$, contains a word of the form $wzw$ as factor, for a $z\in \Sigma^{*}$. Moreover, $|wzw|\leq |w'z'w'|$.

Let $U'=\Psi(u)$ be the left frontier of the word $u$. Then $U'$ is the label of a unique path in $\VUCA$ starting at the initial state and ending in a state $U$. Consider the paths in $\VUCA$ outgoing from the state $U$. Since each state of $\VUCA$ has exactly three outgoing edges, there are $3^{n}$ distinct paths of length $n$ starting in $U$. Each of these paths ends in a state $W$ such that $|W|=h(u)+n$. Since there are $2^{n+h(u)}$ distinct states $W$ such that $|W|=h(u)+n$, by the pigeonhole principle there will be two distinct paths of length $n$ starting in $U$ and ending in the same state, say $V$, whenever $3^{n}>2^{n+h(u)}$, that is, whenever $n>\gamma h(u)$, where $\gamma=(\log_{2} 3-1)^{-1}\simeq 1.70951$.

So there exists a state $V$ in $\VUCA$ such that $|V|\leq \lceil (1+\gamma)h(u)\rceil$ and there are two distinct paths, say $V_{1}$ and $V_{2}$, from $U$ to $V$. 

Thus, there exist two distinct $\C$-words $v_{1}$ and $v_{2}$ such that $\Psi(v_{1})=U'V_{1}$ and $\Psi(v_{2})=U'V_{2}$ (and this implies that $u$ is prefix of both $v_{1}$ and $v_{2}$), and $v_{1}\equiv_{\mathcal{VUCA}} v_{2}$. Moreover, if $v$ is a $\C$-word such that $\Psi(v)=V$, we can suppose that $v_{1}$ and $v_{2}$ are two distinct left simple extensions of $v$. Hence, we can suppose that one of the two words (say $v_{1}$) is a suffix of the other ($v_{2}$). This implies that $u$ appears as a prefix of $v_{2}$ and has at least a second occurrence as a (proper) factor in $v_{2}$.

We suppose that the two occurrences of $u$ in $v_{2}$ do not overlap. Actually, the set $\C$ does not contain overlaps of length greater than $55$ (since every overlap contains two squares \cite{Carpi:1994}), so our assumption consists in discarding a finite number of cases, that can be included in the constant $C$ of the claim.

Thus, we can write $v_{2}=uzuv'_{2}$, for a $v'_{2}\in \Sigma^{*}$, and we have
$$h(uzu)\leq h(v_{2})\leq \lceil (1+\gamma)h(u)\rceil\leq (1+\gamma)h(u)+1.$$ 
By Equations \ref{eq:l} and \ref{eq:h}, we have
\begin{eqnarray*}
 |uzu| &\leq& \beta (1+p)^{(1+\gamma) h(u)+1}\\
       &<& \beta(1+p)(1+p)^{(1+\gamma) \frac{\log |u| -\log \alpha}{\log (2-p)}}\\
       &=& \frac{\beta (1+p)}{\alpha ^{(1+\gamma)\frac{\log (1+p)}{\log (2-p)}}}|u|^{(1+\gamma)\frac{\log (1+p)}{\log (2-p)}}\\
       &=& C|u|^{\gamma'},
\end{eqnarray*}
where $C$ is a constant and $\gamma'\simeq 2.71701$.

\end{proof}


In the proof of Theorem \ref{theor:uzu}, we do not exhibit the word $uzu$ of the claim. Actually, to obtain such a word, one has to explore (a finite portion of) the graph of $\VUCA$. We do not know whether a direct construction of the word $uzu$ is possible using another approach.

As a direct consequence of Theorem \ref{theor:uzu}, we have a sub-cubic upper bound on the length of a repetition (with gap) of a $\C$-word. 

Let us define the function
$$G(n)=\min \{k\mbox{ $|$ }\forall u\in \C, |u|=n, \exists z:|z|\leq k,uzu\in\C\}.$$ 

The function $G$ is a dual function with respect to the repetitivity index $I(n)$. As a consequence of Theorem \ref{theor:uzu}, we have:

\begin{corollary}
 $G(n)=o(n^{3})$.
\end{corollary}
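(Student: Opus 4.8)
The plan is to derive the bound directly from Theorem \ref{theor:uzu}, which already supplies the key quantitative estimate, so that only a comparison of exponents remains. First I would observe that the definition of $G(n)$ quantifies \emph{universally} over all $\C$-words $u$ with $|u|=n$ and asks for a single $k$ serving as a uniform upper bound on the minimal gap $|z|$ of a repetition $uzu\in\C$. Theorem \ref{theor:uzu} provides, for each such $u$, a gap word $z$ with $uzu\in\C$ and $|uzu|\leq C|u|^{2.72}$, where $C$ is a constant independent of $u$. This uniformity of $C$ is precisely what allows the per-word existential estimate of the theorem to be promoted to the uniform bound implicit in the definition of $G$.

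Next I would translate the length bound on $uzu$ into a bound on $|z|$ alone. Since $|uzu|=2|u|+|z|=2n+|z|$, the estimate $|uzu|\leq Cn^{2.72}$ gives $|z|\leq Cn^{2.72}-2n\leq Cn^{2.72}$. As this inequality holds for every $u$ with $|u|=n$ using the same constant $C$, the minimum defining $G$ satisfies $G(n)\leq Cn^{2.72}$.

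Finally, I would conclude by comparing exponents: since $2.72<3$, we have $\lim_{n\to\infty}Cn^{2.72}/n^{3}=C\lim_{n\to\infty}n^{-0.28}=0$, so $Cn^{2.72}=o(n^{3})$ and hence $G(n)=o(n^{3})$. I expect no genuine obstacle here; the corollary is an immediate consequence of Theorem \ref{theor:uzu}. The only point deserving care is the uniformity of the constant $C$ across all words $u$ of a given length (rather than $C$ depending on $u$), which must be confirmed so that the existential bound of the theorem furnishes the uniform bound demanded by the universal quantifier in the definition of $G$.
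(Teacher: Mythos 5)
Your proposal is correct and takes essentially the same route as the paper, which states the corollary as an immediate consequence of Theorem \ref{theor:uzu} without further argument; your extra care about the uniformity of $C$ is justified, since in the paper's proof $C$ is assembled from universal constants ($\alpha$, $\beta$, Chv\'atal's density bound $p$, and $\gamma$) plus finitely many exceptional overlap cases, hence is independent of the particular word $u$. The step $|z|\leq C n^{2.72}$ followed by the exponent comparison $2.72<3$ is exactly what the paper intends.
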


\section{Conclusion}\label{sec:conclusion}

In this paper we exhibited different classifications of $\C$-words based on simple extensions, by means of graphs of infinite automata representing the set of $\C$-words. Our approach makes use of an algorithmic procedure for constructing deterministic automata, but the main interest in using this approach is that this allows us to define a structure (the graph of the infinite automaton) for representing the whole set of $\C$-words.

The vertical representation of $\C$-words introduced in Section \ref{sec:vertical} leads to a more compact automaton representing $\C$-words, $\VUCA$, keeping at the same time all the information on the words. Indeed, this novel representation allows one to manipulate $\C$-words without requiring detailed knowledge of the particular sequence of $1$s and $2$s appearing (or not) in them. In a forthcoming paper we will discuss more in depth the properties of the vertical representation of $\C$-words \cite{FeFi12}.

In Theorem \ref{theor:uzu} we gave an upper bound on the length of a repetition with gap of a $\C$-word. It is a dual result with respect to the lower bounds obtained by Carpi \cite{Carpi:1994,CaDa09}. Numerical experiments suggest that a tighter bound on the gap of a repetition of a $\C$-word $u$ could be sub-quadratic in the length of $u$. 

The proof of Theorem \ref{theor:uzu} does not allow one to build a repetition of a $\C$-word directly, that is, without using the graph of the automaton $\VUCA$. However, this is a consequence of the particular approach we used. In fact, most of the known results about the existence of particular patterns in $\C$-words make use of standard methods in combinatorics on words, while we think that the techniques we developed in this paper represent a novel approach to the study of $\C$-words. We hope that this will stimulate further developments, eventually leading to the solution of Problem \ref{probcol} and, perhaps, to a proof of at least some of the longstanding conjectures on the Kolakoski word.

\section{Acknowledgements}\label{sec:ack}

The authors are grateful to anonymous referees for their comments that greatly improved the presentation of the paper.

The second author acknowledges support from the AutoMathA program of the European Science Foundation.

\bibliographystyle{plain}


\end{document}